\newif\ifdraft\drafttrue
\newcommand\mycomment[3]{\todo[inline,size=\scriptsize,backgroundcolor=#1,caption={}]{%
    \begin{minipage}{\textwidth-4pt}#3 - \textbf{#2}\end{minipage}}}
\newcommand\nf[1]{\mycomment{Pink}{Nath}{#1}}
\newcommand\el[1]{\mycomment{SpringGreen}{Engel}{#1}}
\newcommand\po[1]{\mycomment{Goldenrod}{Pierre}{#1}}
\newcommand\ap[1]{\mycomment{PaleTurquoise}{Amaury}{#1}}
\newcommand\jo[1]{\mycomment{DarkGray}{Joël}{#1}}
\newcommand\nf[1]{}
\newcommand\el[1]{}
\newcommand\po[1]{}
\newcommand\ap[1]{}
\newcommand\jo[1]{}
\newcommand{\rpm}{\raisebox{.2ex}{$\scriptstyle\pm$}}
\newcommand{\set}[1]{\left\{ #1 \right\}}
\newcommand{\C}{\mathbb{C}}
\newcommand{\U}{\mathbb{U}}
\newcommand{\R}{\mathbb{R}}
\newcommand{\Q}{\mathbb{Q}}
\newcommand{\A}{\mathbb{A}}
\newcommand{\Z}{\mathbb{Z}}
\newcommand{\N}{\mathbb{N}}
\newcommand{\nats}{\mathbb{N}}
\newcommand{\I}{\mathcal{I}}
\newcommand{\J}{\mathcal{J}}
\renewcommand{\H}{\mathcal{H}}
\newcommand{\Jord}[2]{\mathcal J_{ #1 } (#2)}
\renewcommand{\Re}[1]{\text{Re}\left( #1 \right)}
\renewcommand{\Im}[1]{\text{Im}\left( #1 \right)}
\renewcommand{\O}{\mathcal{O}}
\renewcommand{\P}{\mathcal{P}}
\newcommand{\interior}[1]{{\kern0pt#1}^{\mathrm{o}}}
\newcommand{\closure}[1]{\overline{ #1 }}
\newcommand{\convex}[1]{\text{Conv}\left( #1 \right)}
\newcommand{\Section}[1]{\text{Section}\left( #1 \right)}
\newcommand{\compl}[1]{ #1 ^{\mathsf{c} }}
\newcommand{\diag}{\text{Diag}}
\newcommand{\last}{\text{last}}
\newcommand{\lasttwo}{\text{last-two}}
\newcommand{\F}{\mathcal{Q}}
\def\eps{\varepsilon}
\begin{document}

\title{On the Monniaux Problem\\ in Abstract Interpretation%
\thanks{Nathana{\"e}l Fijalkow, Pierre Ohlmann, and Amaury Pouly were supported by the Agence Nationale de la Recherche through
the project Codys (ANR-18-CE40-0007). Jo\"el Ouaknine was supported by ERC grant AVS-ISS (648701) and by DFG grant
389792660 as part of TRR 248 (see
https://perspicuous-computing.science). James Worrell was supported by EPSRC Fellowship EP/N008197/1.}}


 \author{Nathana{\"e}l Fijalkow\inst{1} \and
 Engel Lefaucheux\inst{2} \and
   Pierre Ohlmann\inst{3} \and
   Jo\"el Ouaknine\inst{2,4} \and
   Amaury Pouly\inst{5} \and
 James Worrell\inst{4}
 }

\authorrunning{N. Fijalkow et al.}
 \institute{CNRS, LaBRI, France and The Alan Turing Institute, UK \and
 Max Planck Institute for Software Systems, Saarland Informatics
 Campus, Germany \and
 IRIF, Universit\'e Paris 7, France \and
 Department of Computer Science, Oxford University, UK\and
 CNRS, IRIF, Universit\'e Paris Diderot, France}

\bibliographystyle{plain}

\maketitle

\begin{abstract}
    The Monniaux Problem in abstract interpretation asks,\linebreak roughly speaking, whether the
    following question is decidable: given a program
    $P$, a safety (\emph{e.g.}, non-reachability) specification $\varphi$, and an abstract
    domain of invariants $\mathcal{D}$, 
    does there exist an inductive invariant $\I$ in $\mathcal{D}$
    guaranteeing that program $P$ meets its specification $\varphi$.
    The Monniaux Problem is of course parameterised by the classes of
    programs and invariant domains that one considers.

    In this paper, we show that the Monniaux Problem is
    undecidable for unguarded affine programs and semilinear invariants
    (unions of polyhedra). Moreover, we show that decidability is
    recovered in the important special case of simple linear loops.
\end{abstract}


\section{Introduction}

Invariants are one of the most fundamental and useful notions in the
quantitative sciences, appearing in a wide range of contexts, from
gauge theory, dynamical systems, and control theory in physics,
mathematics, and engineering to program verification, static analysis,
abstract interpretation, and programming language semantics (among
others) in computer science. In spite of decades of scientific work
and progress, automated invariant synthesis remains a topic of active
research, especially in the fields of program
analysis and abstract interpretation, and plays a central role in
methods and tools seeking to establish correctness properties of
computer programs; see, \emph{e.g.},~\cite{KCBR18}, and particularly Sec.~8
therein.

The focus of the present paper is the \textbf{\emph{Monniaux Problem}}
on the decidability of the existence of separating invariants, which
was formulated by David Monniaux in~\cite{Mon17b,Mon17} and also raised by
him in a series of personal
communications with various members of the theoretical computer
science community over the past five years or so. There are in fact a
multitude of versions of the Monniaux Problem---indeed, it would be
more appropriate to speak of a \emph{class} of problems rather than a single
question---but at a high level the formulation below is one of the
most general:
\begin{quote}
  Consider a program $P$ operating over some numerical domain (such as
  the integers or rationals), and assume that $P$ has an underlying
  finite control-flow graph over the set of nodes
  $Q = \{q_1, \ldots, q_r\}$. Let us assume that $P$ makes use of $d$
  numerical variables, and each transition $q \overset{t}{\longrightarrow} q'$
  comprises a function
  $f_t: \mathbb{R}^d \rightarrow \mathbb{R}^d$ as well as a
  guard $g_{t} \subseteq \mathbb{R}^d$. Let
  $x,y \in \mathbb{Q}^d$ be two points in the ambient space. By way
  of intuition and motivation, we are interested in the reachability problem as to
  whether, starting in location $q_1$ with variables having valuation
  $x$, it is possible to reach location $q_r$ with variables having
  valuation $y$, by following the available transitions and under the
  obvious interpretation of the various functions and
  guards. Unfortunately, in most settings this problem is well-known
  to be undecidable.

    Let $\mathcal{D} \subseteq 2^{\mathbb{R}^d}$ be an `abstract domain' for $P$,
    \emph{i.e.}, a collection of subsets of
    $\mathbb{R}^d$. For example, $\mathcal{D}$ could be the
      collection of all convex polyhedra in $\mathbb{R}^d$, or the
      collection of all closed semialgebraic sets in $\mathbb{R}^d$, etc.

    The Monniaux Problem can now be formulated as a decision question:
    is it possible to adorn each control location $q$ with an element
    $\I_q \in \mathcal{D}$ such that:
    \begin{enumerate}
    \item $x \in \I_{q_1}$;
      \item The collection of $\I_q$'s forms an \emph{inductive
          invariant}:
        for each transition $q \overset{t}{\longrightarrow} q'$, we
        have that $f_t(\I_q \cap g_t) \subseteq \I_{q'}$; and
        \item $y \notin \I_{q_r}$.
        \end{enumerate}

        We call such a collection $\{\I_q : q \in Q\}$ a
        \emph{separating inductive invariant} for program
        $P$. (Clearly, the existence of a separating inductive invariant 
        constitutes a proof of non-reachability for $P$ with the given
        $x$ and $y$.)

        Associated with this decision problem, in positive instances
        one is also potentially interested in the synthesis problem, \emph{i.e.}, the
        matter of algorithmically producing a suitable separating
        invariant $\{\I_q : q \in Q\}$.\footnote{In the remainder of
          this paper, the term `invariant' shall always refer to the
          inductive kind.}
\end{quote}

The Monniaux Problem is therefore parameterised by a number of items,
key of which are (i)~the abstract domain $\mathcal{D}$ under
consideration, and (ii)~the kind of functions and guards allowed in
transitions.

Our main interest in this paper lies in the \emph{decidability} of the
existence of separating invariants for various instances of the
Monniaux Problem. We give below a cursory cross-sectional survey of
existing work and results in this direction.

Arguably the earliest positive result in this area is due to Karr, who
showed that strongest affine invariants (conjunctions of
affine equalities) for affine programs (no guards, and all transition
functions are given by affine expressions) could be
computed algorithmically~\cite{Karr76}. Note that the ability
  to synthesise strongest (\emph{i.e.}, smallest with respect to set
  inclusion) invariants immediately entails the decidability of the
  Monniaux Problem instance, since the existence of \emph{some}
  separating invariant
  is clearly equivalent to whether the
  \emph{strongest} invariant is separating.
M\"{u}ller-Olm and Seidl later extended this work on
affine programs to include the computation of strongest polynomial
invariants of fixed degree~\cite{Muller-OlmS04}, and
a randomised algorithm for discovering affine relations
was proposed by Gulwani and Necula [16].
More recently, Hrushovski \emph{et al.}\
showed how to compute a basis for \emph{all} polynomial
relations at every location of a given affine
program~\cite{HOPW18}.

The approaches described above all compute invariants consisting
exclusively of conjunctions of \emph{equality} relations. By contrast,
an early and highly influential paper by Cousot and Halbwachs
considers the domain of convex closed polyhedra~\cite{CH78}, for
programs having polynomial transition functions and guards. Whilst no
decidability results appear in that paper, much further work was
devoted to the development of restricted polyhedral domains for which
theoretical guarantees could be obtained, leading (among others) to
the \emph{octagon domain} of Min\'e~\cite{Mine01}, the
\emph{octahedron domain} of Claris\'o and Cortadella~\cite{CC04}, and
the \emph{template polyhedra} of Sankaranarayanan \emph{et
  al.}~\cite{SSM05}. In fact, as observed by Monniaux~\cite{Mon17}, if
one considers a domain of convex polyhedra having a \emph{uniformly
  bounded} number of faces (therefore subsuming in particular the domains just
described), then for any class of programs with polynomial transition
relations and guards, the existence of separating invariants becomes
decidable, as the problem can equivalently be phrased in the first-order theory
of the reals.

One of the central motivating questions for the Monniaux Problem is
whether one can always compute separating invariants for the full
domain of polyhedra. Unfortunately, on this matter very little is
known at present. In recent work, Monniaux showed undecidability for
the domain of convex polyhedra and the class of programs having affine
transition functions and polynomial guards~\cite{Mon17}. One of the
main results of the present paper is to show undecidability for the
domain of \emph{semilinear sets}\footnote{A semilinear set consists of
  a finite union of polyhedra, or equivalently is defined as the
  solution set of a Boolean combination of linear inequalities.}
and the class of affine programs (without
any guards)---in fact, affine programs with only a single control
location and two transitions:

\begin{theorem}\label{th:main-undec}
    Let $A, B \in \mathbb{Q}^{d \times d}$ be two rational square matrices of
    dimension $d$, and let $x,y \in \mathbb{Q}^d$ be two points in
    $\mathbb{Q}^d$. Then the existence of a semilinear set $\I \subseteq
    \mathbb{R}^d$ having the following properties:
    \begin{enumerate}
    \item $x \in \I$;
    \item $A\I \subseteq \I$ and $B\I \subseteq \I$; and
    \item $y \notin \I$
    \end{enumerate}
    is an undecidable problem.
\end{theorem}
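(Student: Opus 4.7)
The plan is to reduce some standard undecidable problem---most naturally the halting problem for a deterministic 2-counter Minsky machine, or equivalently reachability in a two-generator matrix semigroup---to the existence of a separating semilinear invariant. The goal is to exhibit an effective construction sending each instance of the source problem to matrices $A, B \in \Q^{d \times d}$ and points $x, y \in \Q^d$ such that the target is reachable in the source machine if and only if \emph{no} semilinear invariant separates $x$ from $y$.

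The first step is to fix an encoding of machine configurations as points in $\Q^d$ on which transitions act linearly. Concretely, a configuration $(q, c_1, c_2)$ is represented by a vector whose coordinates record, in suitable blocks, the control state $q$ (for instance as a $0/1$ indicator) and the counter values $c_1, c_2$ (either directly, or through a positional encoding such as $c \mapsto r^{-c}$ that turns zero-tests into linear comparisons). Each machine transition is then realised by multiplication by $A$ or $B$. When the machine has more than two distinct instructions, each instruction is serialised as a short word over $\{A, B\}$, with auxiliary bookkeeping coordinates acting as a finite-state controller that prevents ill-formed words from ever leaving some ``trap'' region away from the meaningful configuration subspace. The points $x$ and $y$ encode the initial and halting configurations respectively.

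The direction ``target reachable $\Rightarrow$ no separator'' is then immediate, since any invariant containing $x$ and closed under $A$ and $B$ must contain the entire orbit $\{w(A,B)\,x : w \in \{A, B\}^*\}$, which in this case includes $y$. The converse---``not reachable $\Rightarrow$ separator exists''---is the main obstacle, as it requires exhibiting an explicit semilinear over-approximation of the orbit that provably avoids $y$. The natural candidate is a finite union of rational polyhedra indexed by the control states of the machine, each containing precisely those points of $\Q^d$ that can encode a configuration in the corresponding state. Making this work requires the encoding to keep the orbit confined to a discrete rational lattice with state-indexed strata separated by rational hyperplanes, so that non-reachability of $y$ translates into an explicit linear-arithmetic certificate. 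Designing the encoding so that semilinear separability of $y$ from the orbit is \emph{exactly} equivalent to non-halting---rather than merely implied by it---is where the bulk of the technical work will lie, and will likely drive the precise choice of source problem and encoding.
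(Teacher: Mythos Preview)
Your plan correctly identifies the overall reduction shape, but the hard direction---``not reachable $\Rightarrow$ semilinear separator exists''---is not just the bulk of the work, it is the whole problem, and the approach you sketch will not deliver it. With an unguarded linear encoding of a Minsky machine there is no reason whatsoever that non-reachability of the halting configuration should be witnessed by a \emph{semilinear} invariant: the reachable set of a two-counter machine is wildly non-semilinear, and stratifying by control state does not help because the counter contents themselves can encode arbitrarily complex arithmetic. You also cannot implement zero-tests with unguarded linear maps, so the ``$c\mapsto r^{-c}$'' idea does not give you conditional branching; without guards the two matrices can be applied in any order, and your bookkeeping coordinates would have to absorb all ill-formed words into a semilinear trap while simultaneously certifying non-halting for well-formed ones. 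You acknowledge this as ``where the technical work lies'', but there is no indication of a mechanism that would make it go through.

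The paper takes a genuinely different route. It reduces from the infinite Post Correspondence Problem ($\omega$-PCP) rather than Minsky machines, and the two directions are handled by arguments that have no analogue in your sketch. When the $\omega$-PCP instance has \emph{no} solution, K\"onig's lemma gives a uniform bound $n_0$ on how long two concatenations can agree, and this finiteness is exactly what allows an explicit semilinear invariant to be written down (a finite set of initial points together with a polyhedral region $|s|\ge 4(c+d)+\text{const}$). When the instance \emph{has} a solution, the target is never actually reached; instead the construction arranges that the orbit contains the points $(0,0,0,2k,1)$ for all $k\in\N$ on a single line. The crucial observation is then purely about semilinear sets: any semilinear subset of a line containing all even integers must contain an unbounded interval, hence some odd integer, and one further matrix application lands on the target. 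This ``evens force odds'' trick is the real engine of the proof, and nothing in your proposal plays that role.
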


\begin{remark}
 It is worth pointing out that the theorem remains valid even for
  sufficiently large fixed $d$ (our proof shows undecidability for
  $d=336$, but this value could undoubtedly be improved). If moreover
  one requires $\I$ to be topologically closed, one can lower $d$ to
  having fixed value $27$ (which again is unlikely to be optimal).
  Finally, an examination of the proof reveals that the theorem also
  holds for the domain of semialgebraic sets, and in fact for any
  domain of o-minimal sets in the sense of~\cite{ACO018}. The proof
  also carries through whether one considers the domain of semilinear sets
  having rational, algebraic, or real coordinates.
\end{remark}

Although the above is a negative (undecidability) result, it should be
viewed in a positive light; as Monniaux writes in~\cite{Mon17},
\emph{``We started this work hoping to vindicate forty years of
  research on heuristics by showing that the existence of polyhedral
  inductive separating invariants in a system with transitions in
  linear arithmetic (integer or rational) is undecidable.''}
Theorem~\ref{th:main-undec} shows that, at least as regards non-convex
invariants, the development and use of heuristics is indeed vindicated and
will continue to remain essential. Related questions
of \emph{completeness} of given abstraction scheme have also been
examined by Giaccobazzi \emph{et al.} in~\cite{GRS00,GLR15}.

It is important to note that our undecidability result requires at
least \emph{two} transitions. In fact, much research work has been
expended on the class of simple \emph{affine} loops, \emph{i.e.}, one-location programs
equipped with a single self-transition. In terms of invariants,
Fijalkow \emph{et al.} establish in~\cite{FOOPW17,FOOPW19} the decidability of
the existence of \emph{semialgebraic} separating invariants, and
specifically state the question of the existence of separating
\emph{semilinear} invariants as an open problem. Almagor \emph{et al.} extend
this line of work in~\cite{ACO018} to more complex targets (in lieu of
the point $y$) and richer classes of invariants. The second main
result of the present paper is to settle the open question
of~\cite{FOOPW17,FOOPW19} in the affirmative:

\begin{theorem}\label{one-matrix-theorem}
    Let $A \in \mathbb{Q}^{d \times d}$ be a rational square matrix of
    dimension $d$, and let $x,y \in \mathbb{Q}^d$ be two points in
    $\mathbb{Q}^d$. It is decidable whether there exists a closed semilinear set 
    $\I \subseteq \mathbb{R}^d$ having algebraic coordinates such that:
    \begin{enumerate}
    \item $x \in \I$;
    \item $A\I \subseteq \I$; and
    \item $y \notin \I$.
    \end{enumerate}
\end{theorem}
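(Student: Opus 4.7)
The plan is to characterize the intersection $\mathcal{F}(A,x)$ of all closed semilinear $A$-invariant sets containing $x$ — a separating invariant exists if and only if $y \notin \mathcal{F}(A,x)$ — and to show that this condition is decidable.

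First, I would put $A$ into real Jordan normal form and decompose $\mathbb{R}^d$ into $A$-invariant generalized eigenspaces grouped by whether the associated eigenvalue has modulus $<1$, $>1$, $=1$ with rational argument (roots of unity), or $=1$ with irrational argument (dense rotations). The obvious necessary condition $y \notin \overline{\{A^n x : n \geq 0\}}$ is already known to be decidable by prior work on linear dynamical systems, via Kronecker--Weyl equidistribution.

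The central additional observation is a \emph{forcing} phenomenon on irrational-rotation components: on a real $2$-plane where $A$ acts as rotation by an irrational multiple of $\pi$, the iterates $\{A^n\}$ are dense in $SO(2)$, so any closed $A$-invariant set restricted to this plane is in fact invariant under the entire continuous rotation group. But a closed rotation-invariant semilinear subset of $\mathbb{R}^2$ can only be $\emptyset$, $\{0\}$, or all of $\mathbb{R}^2$, since a continuity argument shows each polyhedron in a decomposition must itself be rotation-invariant, and any nontrivial rotation-invariant convex polyhedron about the origin would have to be a Euclidean disc — which is not a finite union of polyhedra. Consequently, whenever $x$ projects nontrivially onto such a plane, $\mathcal{F}(A,x)$ contains the entire plane. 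Combining these ingredients, $\mathcal{F}(A,x)$ should equal the orbit closure on non-irrational-rotation components, the full subspace on irrational-rotation components, plus the forced limit sets coming from interactions (for instance, orbits spiralling from a contracting direction onto an irrational-rotation plane force inclusion of the entire limit plane). Membership of $y$ in this explicit description can then be phrased as a sentence in the first-order theory of the reals with algebraic constants and decided by Tarski.

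The main obstacle lies in the case analysis for generalized eigenspaces with nontrivial nilpotent parts (Jordan blocks of size $>1$) and for the interactions between Jordan blocks of different eigenvalue moduli. Here orbits exhibit combined polynomial drift, exponential scaling, and rotational behaviour, and writing down $\mathcal{F}(A,x)$ explicitly — so that it is effectively semilinear and that membership can be checked — requires careful analysis of the limit behaviour of iterates together with the geometry of finite unions of polyhedra. One subtle point is that on contracting or expanding components alone there is no minimum closed semilinear $A$-invariant set (one can always sharpen the approximation around finitely many orbit points), so the proof must argue in terms of $\mathcal{F}$ rather than a single canonical witness $\I$. Once this bookkeeping is complete, decidability follows essentially for free from Tarski--Seidenberg.
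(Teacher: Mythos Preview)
Your overall shape --- Jordan form, case split by eigenvalue modulus, special attention to irrational rotations --- matches the paper's, but the core of the argument is exactly the part you label an ``obstacle'', and your description of $\mathcal{F}(A,x)$ on the modulus-$1$ part is not correct as stated. The claim that on irrational-rotation components $\mathcal{F}(A,x)$ is ``the full subspace'' fails whenever two blocks carry eigenvalues $\lambda,\mu$ with $\lambda/\mu$ a root of unity: nontrivial linear relations between the blocks are then preserved by the dynamics, and the minimal closed semilinear invariant is a proper subspace (the paper's Examples~4 and~5). Your $2$-plane argument --- a closed $SO(2)$-invariant semilinear subset of $\mathbb{R}^2$ is $\emptyset$, $\{0\}$, or $\mathbb{R}^2$ --- is fine for a single block but says nothing about how several irrational-rotation blocks interact; for that the paper needs the Skolem--Mahler--Lech theorem to rule out semilinear invariants supported on affine hyperplanes, and a separate dimension-reduction lemma to handle the equivalent-eigenvalue case.

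Two further points. The non-diagonal modulus-$1$ case (Example~3) is not bookkeeping: it is the most technical part of the paper, handled by an induction that projects onto the last and last-two coordinates of each block and combines density and compactness arguments; your proposal contains no mechanism for this. And structurally, the paper does \emph{not} try to compute $\mathcal{F}(A,x)$ in general. Instead it shows that whenever some eigenvalue has modulus $\neq 1$, or is a root of unity on a non-diagonal block, a separating invariant \emph{always} exists for non-reach instances and is built explicitly; only in the residual case (all eigenvalues of modulus $1$, none a root of unity) is a minimal invariant computed. This cleanly avoids the difficulty you note about the absence of a minimal invariant on contracting or expanding components.
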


\begin{remark}
The proof shows that, in fixed dimension $d$, the decision procedure
runs in polynomial time. It is worth noting that one also has decidability
if $A$, $x$, and $y$ are taken to have
real-algebraic (rather than rational) entries.
\end{remark}

Let us conclude this section by briefly commenting on the important
issue of \emph{convexity}. At its inception, abstract interpretation
had a marked preference for domains of \emph{convex} invariants, of
which the interval domain, the octagon domain, and of course the
domain of convex polyhedra are prime examples. Convexity confers
several distinct advantages, including simplicity of representation,
algorithmic tractability and scalability, ease of implementation, and
better termination heuristics (such as the use of widening). The
central drawback of convexity, on the other hand, is its poor
expressive power. This has been noted time and again: \emph{``convex
  polyhedra 
[\ldots]\
are insufficient for expressing certain
  invariants, and what is often needed is a disjunction of convex
  polyhedra.''}~\cite{BM18}; \emph{``the ability to express non-convex
  properties is sometimes required in order to achieve a precise
  analysis of some numerical properties''}~\cite{GIBMG12}. Abstract
interpretation can accommodate non-convexity either by introducing
disjunctions (see, \emph{e.g.}, \cite{BM18} and references therein), or via
the development of special-purpose domains of non-convex invariants
such as \emph{donut domains}~\cite{GIBMG12}. The
technology, data structures, algorithms, and heuristics supporting the
use of disjunctions in the leading abstract-interpretation tool
\textsc{Astr\'ee} are presented in great detail in~\cite{CCFMMR09}. In
the world of software verification, where predicate
abstraction is the dominant paradigm, disjunctions---and hence
non-convexity---are nowadays native features of the landscape.

It is important to note that the two main results presented in this
paper, Theorems~\ref{th:main-undec} and \ref{one-matrix-theorem}, have
only been proven for families of invariants that are not necessarily
convex. The Monniaux Problem restricted to families of \emph{convex}
invariants remains open and challenging.

\section{Preliminaries}\label{sec:prel}

\subsection{Complex and algebraic numbers}

The set of complex numbers is $\C$, and for a complex number $z$ its modulus is $|z|$, its real part is $\Re{z}$ and its imaginary part is $\Im{z}$.

Let $\C^*$ denote the set of non-zero complex numbers.
We write $S^1$ for the complex unit circle, \textit{i.e.} the set of complex numbers of modulus $1$.
We let $\U$ denote the set of roots of unity, \textit{i.e.} complex numbers $z\in S^1$ such that $z^n = 1$ for some $n \in \N$.

When working in $\C^d$, the norm of a vector $z$ is $||z||$, defined as the maximum of the moduli of each complex number $z_i$ for $i$ in $\set{1,\ldots,d}$.
For $\varepsilon > 0$ and $z$ in $\C^d$, we write $B(z,\varepsilon)$ for the open ball centered in $z$ of radius $\varepsilon$.
The topological closure of a set $\I \subseteq \C^d$ is $\closure{\I}$, its interior $\interior{\I}$, and its frontier $\partial{\I}$,
defined as $\closure{\I} \cap \closure{\C^d \setminus \I}$.

We will mostly work in the field $\A \subseteq \C$ of algebraic numbers, that is, roots of polynomials with coefficients in $\Z$.
It is possible to represent and manipulate algebraic numbers effectively, by storing their minimal polynomial and
a sufficiently precise numerical approximation. An excellent reference
in computational algebraic number theory is~\cite{Cohen}. 
All standard algebraic operations such as sums, products, root-finding of polynomials, or computing Jordan normal forms of
matrices with algebraic entries can be performed effectively.

\subsection{Semilinear sets}

\vskip1em
We now define semilinear sets in $\C^d$, by identifying $\C^d$ with $\R^{2d}$.
A set $\I \subseteq \R^{2d}$ is semilinear if it is the set of real solutions of some finite Boolean combination of linear inequalities with algebraic coefficients.
We give an equivalent definition now using half-spaces and polyhedra.
A half-space $\H$ is a subset of $\C^d$ of the form 
\[
\H = \set{z \in \C^d \mid \sum_{i=1}^d \Re{z \overline u} \succ a},
\]
for some $u$ in $\A^d$, $a$ in $\A \cap \R$ and $\mbox{$\succ$}\in\set{\geq,>}$.
A polyhedron is a finite intersection of half-spaces, and a semilinear set a finite union of polyhedra.

\vskip1em
We recall some well known facts about semilinear sets which will be useful for our purposes.

\begin{lemma}[Projections of Semilinear Sets]
\label{lem:projection}
Let $\I$ be a semilinear set in $\C^{d+d'}$.
Then the projection of $\I$ on the first $d$ coordinates, defined by
\[
\Pi(\I,d) = \set{z \in \C^d \mid \exists t \in \C^{d'}, (z,t) \in \I}
\]
is a semilinear set.
\end{lemma}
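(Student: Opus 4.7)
The plan is to reduce the statement to quantifier elimination for linear arithmetic over the ordered field of real algebraic numbers. Identifying $\C^{d+d'}$ with $\R^{2(d+d')}$, the semilinear set $\I$ is by definition the set of real solutions of some finite Boolean combination $\varphi(x,t)$ of linear inequalities (both strict and non-strict) with algebraic coefficients, where $x$ ranges over $\R^{2d}$ and $t$ over $\R^{2d'}$. The projection $\Pi(\I,d)$ is then exactly the set defined by the formula $\exists t \in \R^{2d'}\ \varphi(x,t)$, so it suffices to show that this existential formula is equivalent to a quantifier-free one of the same syntactic shape, \emph{i.e.}, a Boolean combination of linear inequalities with algebraic coefficients.

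The standard tool is Fourier--Motzkin elimination. First I would put $\varphi$ in disjunctive normal form, writing $\varphi = \bigvee_{j} \psi_j$ where each $\psi_j$ is a conjunction of linear (strict or non-strict) inequalities. Existential quantification commutes with disjunction, so it is enough to eliminate the $2d'$ variables of $t$ from each conjunction $\psi_j$ separately. To eliminate a single variable $t_i$ from a conjunction of linear inequalities, I would isolate $t_i$ in each atom of $\psi_j$, thereby splitting the atoms into lower bounds $L_k(x,t') \lhd t_i$, upper bounds $t_i \lhd U_\ell(x,t')$, and atoms independent of $t_i$ (here $\lhd$ denotes $<$ or $\leq$, and $t'$ is the remaining tuple). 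Then $\exists t_i\, \psi_j$ is equivalent to the conjunction of the $t_i$-free atoms together with all inequalities $L_k(x,t') \lhd U_\ell(x,t')$ (the comparison symbol is strict if either original bound was strict, non-strict otherwise). Iterating this procedure $2d'$ times eliminates all variables of $t$ and produces a Boolean combination of linear inequalities over $x$ whose coefficients lie in the algebraic subfield generated by the coefficients of $\varphi$.

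Since the resulting quantifier-free formula defines exactly $\Pi(\I,d)$ and is a Boolean combination of linear inequalities with algebraic coefficients, the projection is semilinear by definition.

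The only mildly delicate point is the bookkeeping of strict versus non-strict inequalities (and the degenerate case where the coefficient of $t_i$ in some atom vanishes, which is handled by leaving that atom untouched in the current step). No genuine obstacle arises: Fourier--Motzkin elimination is purely algebraic and works verbatim over any ordered field, in particular over $\A \cap \R$, so both the effectiveness and the algebraicity of the coefficients are preserved throughout.
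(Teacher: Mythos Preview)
Your argument via Fourier--Motzkin elimination is correct and is exactly the standard proof of this fact. The paper itself does not give a proof of Lemma~\ref{lem:projection}: it is listed among ``well known facts about semilinear sets'' in the preliminaries and stated without proof, so there is nothing to compare against beyond noting that your sketch supplies the expected justification.
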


\begin{lemma}[Sections of Semilinear Sets]
\label{lem:section}
Let $\I$ be a semilinear set in $\C^{d+d'}$ and $t$ in $\C^{d'}$.
Then the section of $\I$ along $t$, defined by 
\[
\Section{\I,t} = \set{z \in \C^d \mid (z,t) \in \I},
\] 
is a semilinear set.

Furthermore, there exists a bound $B$ in $\R$ such that for all $t$ in $\C^{d'}$ of norm at most $1$,
if $\Section{\I,t}$ is non-empty, then it contains some $z$ in $\C^d$ of norm at most $B$.
\end{lemma}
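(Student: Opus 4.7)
The plan is to prove the two parts separately. For semilinearity of the section, I observe that $\I$ is defined by a Boolean combination $\varphi(z,t)$ of linear inequalities in the real and imaginary parts of the $d+d'$ complex coordinates. Substituting any fixed $t \in \C^{d'}$ into each inequality leaves a linear inequality in $z$ alone with algebraic coefficients, merely shifting the constant term by an algebraic quantity; the resulting formula is still a Boolean combination of linear inequalities in $z$, so $\Section{\I,t}$ is semilinear by definition.

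For the uniform bound, I first decompose $\I = \bigcup_{i=1}^n P_i$ into its finitely many constituent polyhedra and treat each one separately, taking $B = \max_i B_i$ at the end. Fix a polyhedron $P$ in the decomposition; separating the dependencies on $z$ and $t$, each defining half-space takes the form $\set{(z,t) \mid \alpha_j(z) + \beta_j(t) \succ_j a_j}$ with $\alpha_j, \beta_j$ fixed real-linear forms, $a_j \in \A \cap \R$, and $\succ_j \in \set{>,\geq}$. Hence for fixed $t$ the section $P_t$ is the polyhedron $\set{z \mid \alpha_j(z) \succ_j a_j - \beta_j(t),\ j=1,\ldots,k}$: the normal directions $\alpha_j$ are independent of $t$, and for $\|t\| \le 1$ the right-hand sides lie in an interval bounded by some constant $L$ depending only on $P$.

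The core step is then a parametric fact about linear inequalities: for any fixed matrix $A$ and $L > 0$, there exists $B(A,L)$ such that for every $b$ with $\|b\|_\infty \le L$ and every sign pattern $\succ \in \set{>,\geq}^k$, the polyhedron $\set{x \mid Ax \succ b}$ is either empty or contains a point of norm at most $B(A,L)$. My plan for this step is to apply Fourier--Motzkin elimination to the variables of $x$; it produces a finite cell decomposition of the parameter space together with, on each cell, an explicit piecewise-affine selector $x(b)$ certifying non-emptiness. Since affine functions are bounded on compact sets and the number of cells is finite, these selectors yield the required $B(A,L)$. An alternative route is to quotient by the lineality space of $\set{x \mid Ax \geq b}$ and pick a basic feasible solution in the resulting pointed polyhedron via Cramer's rule, perturbing slightly when strict inequalities are present.

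The main obstacle I anticipate is the careful handling of mixed strict and non-strict inequalities in the parametric fact, since strict polyhedra do not admit basic feasible solutions in the usual sense. The Fourier--Motzkin route treats both kinds uniformly and is therefore the cleanest option; once it is in place, the remaining bookkeeping---applying the fact to each polyhedron $P_i$ with its bound $L_i$ and taking the maximum of the resulting bounds $B_i$---is routine.
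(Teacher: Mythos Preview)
The paper does not actually prove this lemma: it is listed among ``well known facts about semilinear sets'' in Section~\ref{sec:prel} and stated without proof, so there is no argument in the paper to compare yours against. That said, your proposal is a correct and standard way to establish the result.

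One small slip: in the first part you write that substituting $t$ shifts the constant term ``by an algebraic quantity''. This is only true when $t$ itself has algebraic coordinates; the lemma as stated allows arbitrary $t\in\C^{d'}$. The section is still a finite union of (real) polyhedra regardless, but strictly speaking it is semilinear in the paper's sense (algebraic coefficients) only for algebraic $t$. This does not affect the second part, where the bound $B$ depends only on the defining data of $\I$ and not on $t$.

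For the uniform bound, your reduction to a parametric linear-feasibility fact (fixed constraint matrix, bounded right-hand sides) is exactly the right move, and both routes you sketch---Fourier--Motzkin back-substitution giving a piecewise-affine witness, or passing to a pointed polyhedron and using Cramer bounds on basic feasible solutions with a perturbation for strict inequalities---are valid. Either yields a bound depending only on the (finitely many) linear forms $\alpha_j,\beta_j$ and constants $a_j$ appearing in $\I$, and taking the maximum over the constituent polyhedra finishes the argument.
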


For the reader's intuitions, note that the last part of this lemma does not hold for more complicated sets.
For instance, consider the hyperbola defined by $\I = \set{(x,y) \in \R^2 \mid x y = 1}$.
Choosing a small $x$ forces to choose a large $y$, hence there exist no bound $B$ as stated in the lemma for~$\I$.

The dimension of a set $X$ of $\R^d$ is the minimal $k$ in $\N$ such that 
$X$ is included in a finite union of affine subspaces of dimension at most $k$.

\begin{lemma}[Dimension of Semilinear Sets]
\label{lem:dimension}
Let $\I$ be a semilinear set in $\R^d$. 
If $\interior{\I} = \emptyset$, then $\I$ has dimension at most $d - 1$.
\end{lemma}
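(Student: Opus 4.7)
The plan is to exploit the definition of semilinear sets as finite unions of polyhedra and reduce the statement to a convex-geometry fact about individual polyhedra. Write $\I = P_1 \cup \cdots \cup P_n$ as a union of polyhedra. The first observation is that if $\interior{\I} = \emptyset$, then each $P_i$ must also have empty interior: if some $P_i$ contained an open ball $B(z,\varepsilon)$, then since $P_i \subseteq \I$ the set $\I$ would also contain $B(z,\varepsilon)$, contradicting $\interior{\I} = \emptyset$.

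The core step is then to argue that a polyhedron $P \subseteq \R^d$ (a finite intersection of half-spaces) with empty interior must be contained in some affine hyperplane, i.e.\ some affine subspace of dimension at most $d-1$. This is a standard convex-geometry fact: for a convex set, the interior relative to its affine hull is always non-empty (in finite dimension), so if the interior in $\R^d$ is empty then the affine hull has dimension strictly less than $d$. Concretely, writing $P$ as the intersection of half-spaces $\{z : \langle z, u_i \rangle \succ_i a_i\}$ with $\succ_i \in \{\geq, >\}$, if $P \ne \emptyset$ then either some constraint must be saturated everywhere on $P$ (yielding $P \subseteq \{z : \langle z, u_i \rangle = a_i\}$, an affine hyperplane), or else every constraint admits strict slack somewhere in $P$, and by convexity one can combine these points to produce an interior point of $P$. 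This is a routine argument but is the one step that needs care.

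Combining the two observations, every $P_i$ is contained in an affine subspace $H_i$ of dimension at most $d-1$, so
\[
  \I \;=\; \bigcup_{i=1}^n P_i \;\subseteq\; \bigcup_{i=1}^n H_i,
\]
a finite union of affine subspaces of dimension at most $d-1$. By the definition of dimension given just before the lemma, this shows that $\I$ has dimension at most $d-1$, completing the argument.

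The only non-trivial ingredient is the convex-geometry lemma in the middle paragraph; the rest is unpacking the definitions of semilinear set, interior, and dimension. I do not expect any real obstacle; the argument is essentially a short reduction from the non-convex semilinear case to the well-known convex case.
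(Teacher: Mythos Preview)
The paper does not actually prove Lemma~\ref{lem:dimension}: it is listed among ``well known facts about semilinear sets'' and stated without proof. Your proposal supplies exactly the kind of standard argument one would expect here, and it is correct. The decomposition $\I = \bigcup_i P_i$, the observation that $\interior{P_i} \subseteq \interior{\I}$, and the reduction to the convex case are all sound; the averaging trick you sketch (take a convex combination of points witnessing strict slack on each constraint to obtain an interior point) works as stated, and the alternative route via the nonemptiness of the relative interior of a nonempty convex set is equally valid. There is nothing to compare against in the paper itself.
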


\section{Main results overview}\label{sec:results}
We are interested in instances of the Monniaux Problem in which there
are no guards, all transitions are affine (or equivalently linear,
since affine transitions can be made linear by increasing the
dimension of the ambient space by 1), and invariants are
semilinear. This gives rise to the \emph{semilinear invariant
  problem}, where an instance is given by a set of square matrices
$A_1,\dots,A_k \in \A^{d \times d}$ and two points $x, y \in \A^d$. A
semilinear set $\I \subseteq \C^d$ is a \emph{separating invariant} if
\begin{enumerate}
	\item $x \in \I$,
	\item $A_i \I \subseteq \I$ for all $i\leq k$,
	\item $y \notin \I$.
\end{enumerate}
The semilinear invariant problem asks whether such an invariant exists.

We need to introduce some terminology.
The triple $((A_i)_{i\leq k},x,y)$ is a 
\emph{reach} instance if there exists a matrix 
$M$ belonging to the semigroup generated by $(A_i)_{i\leq k}$
such that $M x = y$, and otherwise it is
a \emph{non-reach} instance.
Clearly a separating invariant can only exist for non-reach instances.
An instance for $k=1$ is called
an \emph{Orbit instance}.

\subsection{Undecidability for several matrices}

Our first result is the undecidability of the semilinear invariant problem. We
start by showing it is undecidable in fixed dimension, with a fixed number
of matrices and requiring that the invariant be closed. We defer the proofs
until Section~\ref{sec:undec}.

\begin{theorem}\label{th:closed-monniaux}
    The semilinear invariant problem is undecidable for 9 matrices of dimension 3
    and closed invariants.
\end{theorem}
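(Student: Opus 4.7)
The plan is to reduce a known undecidable matrix-semigroup problem to the existence of a closed semilinear separating invariant. A natural source of undecidability in dimension~$3$ is the Post Correspondence Problem encoded à la Paterson via rational $3\times 3$ matrices: pairs of words over a fixed finite alphabet are written as $b$-adic expansions and stored in two coordinates of a point in $\Q^3$, while the third coordinate carries a multiplicative scale that lets ``append the letter $a$ to side $s$'' be realised as a genuine linear (not merely affine) map. A fixed primitive instruction set of $9$ such matrices should suffice to simulate every PCP move (the PCP instance itself being compiled into the initial vector $x$), and one takes $y$ to encode the canonical matched configuration in which both tracks spell the same word.

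The forward direction of the reduction is straightforward: a PCP solution corresponds to a product of $A_i$'s sending $x$ to $y$, so $y$ must belong to every invariant containing $x$, and no separating invariant can exist. This already handles the reach case in a uniform way.

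The main obstacle is the converse direction: for every PCP instance \emph{without} a solution, one must exhibit a closed semilinear set containing $x$, invariant under each $A_i$, and missing $y$. The strategy is to engineer the encoding so that the whole orbit $\mathcal{O}(x)$ is confined to some explicit, geometrically simple closed semilinear ``carrier'' $C \subseteq \R^3$---for instance a finite union of rational rays, a discrete lattice embedded in a plane, or a discretely stratified cone---while $y$ lies in $C$ if and only if the PCP instance has a solution. In every non-reach case, $C$ itself is automatically closed under each $A_i$, closed as a subset of $\R^3$, contains $x$, and separates $x$ from $y$. The delicate step is calibrating $C$: it must be geometrically simple enough to qualify as a semilinear set, yet combinatorially rich enough to trap every orbit point produced by the $b$-adic encoding, and the coded target $y$ must sit outside $C$ precisely in the non-reach case. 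I would expect the bulk of the technical work to lie in constructing this carrier, verifying its invariance under the nine generators, and checking that the encoded ``success'' point never lies in $C$ unless a genuine PCP sequence forces it to.
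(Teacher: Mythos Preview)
Your proposal has a genuine gap in both directions of the reduction, and it stems from choosing the wrong source problem.

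You reduce from ordinary (finite) PCP and make the forward direction rely on \emph{exact reachability}: a PCP solution yields a product of the $A_i$ sending $x$ to $y$. But then closedness of the invariant plays no role at all, whereas the theorem is specifically about \emph{closed} semilinear invariants. The paper instead reduces from the \emph{infinite} Post Correspondence Problem ($\omega$-PCP, undecidable already for $9$ pairs). Each of the nine matrices encodes one tile pair $(u^{(i)},v^{(i)})$ (so the matrices depend on the instance, contrary to your plan of compiling the instance into $x$), the initial point is $x=(0,1,1)$, and the target is $y=(0,0,0)$. When the $\omega$-PCP instance has a solution $w$, the target is never reached: the orbit points $M_{w\restriction_r}x$ merely \emph{converge} to $(0,0,0)$. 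Closedness is precisely what forces any invariant to contain $y$.

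Your converse direction is where the approach breaks down concretely. The carriers you suggest (a discrete lattice in a plane, a finite union of rational rays, a discretely stratified cone) containing the full infinite orbit are not semilinear sets: an infinite discrete subset of $\R^3$ is never a finite union of polyhedra. The paper's converse uses a structural fact specific to $\omega$-PCP: if there is no infinite solution, then by K\"onig's lemma the tree of ``still-matching'' finite prefixes is finite, giving a uniform bound $n_0$ on the first mismatch position. This finiteness is what makes a semilinear invariant possible: one takes the finitely many orbit points of length $\le n_0+1$ together with the genuine polyhedral region $\{(s,c,d): |s|\ge 4(c+d)+4^{-n_0-1},\ c\ge 0,\ d\ge 0\}$, and checks invariance by an explicit calculation on the base-$4$ encoding. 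Nothing analogous to this K\"onig-lemma step is available in your finite-PCP setup, and without it there is no reason the non-reach case should admit a closed semilinear separating set.
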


In establishing the above, we used many matrices of small dimension.
One could instead use only two matrices, but increasing the dimension to 27.

\begin{theorem}\label{th:closed-monniaux-bis}
    The semilinear invariant problem is undecidable for 2 matrices of dimension 27 and closed
    invariants.
\end{theorem}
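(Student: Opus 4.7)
The plan is to reduce from Theorem~\ref{th:closed-monniaux} by encoding nine $3\times 3$ matrices into two $27\times 27$ matrices through a tensor-product construction. Given an instance $(A_1,\dots,A_9,x,y)$, identify $\C^{27}$ with $\C^{3}\otimes\C^{9}$ and, writing $e_1,\dots,e_9$ for the standard basis of $\C^{9}$, define $B_a,B_b \in \A^{27 \times 27}$ by linear extension of
\[
B_a(v \otimes e_j) \;=\; v \otimes e_{(j \bmod 9)+1}, \qquad B_b(v \otimes e_j) \;=\; A_j v \otimes e_1.
\]
Thus $B_a$ cyclically shifts the nine blocks of size $3$, while $B_b$ has top block-row $[A_1\,A_2\,\cdots\,A_9]$ and zeros below. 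Set $x' = x \otimes e_1$ and $y' = y \otimes e_1$. I would prove that $(A_1,\dots,A_9,x,y)$ admits a closed separating semilinear invariant iff $(B_a,B_b,x',y')$ does, whereupon the theorem follows from Theorem~\ref{th:closed-monniaux}.

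For the forward direction, given a closed semilinear invariant $\I$, I would set
\[
\I' \;=\; \bigcup_{j=1}^{9}\bigl\{v \otimes e_j : v \in \I\bigr\} \;\subseteq\; \C^{27}.
\]
This is a finite union of copies of $\I$ embedded by rational linear isometries into the nine coordinate three-planes $\C^3 \otimes \{e_j\}$, and is therefore closed and semilinear. It clearly contains $x \otimes e_1$ and misses $y \otimes e_1$: the equality $y\otimes e_1 = v \otimes e_j$ with $v \in \I$ forces either $j=1$ and $v=y$, or $y=v=0$, both excluded by $y \notin \I$. Since every element of $\I'$ is a pure tensor $v \otimes e_j$ with $v \in \I$, invariance under $B_a$ is immediate, and invariance under $B_b$ follows from $A_j\I \subseteq \I$.

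For the converse, given a closed semilinear invariant $\I'$, I would define
\[
\I \;=\; \bigl\{v \in \C^3 : v \otimes e_1 \in \I'\bigr\}.
\]
By Lemma~\ref{lem:section} the set $\I$ is semilinear; it is closed as the continuous preimage of $\I'$; and by construction $x \in \I$, $y \notin \I$. The invariance $A_j \I \subseteq \I$ for $j = 1,\dots,9$ follows from the identity $B_b B_a^{j-1}(v \otimes e_1) = A_j v \otimes e_1$: iterating $B_a$ at most eight times from $v \otimes e_1 \in \I'$ lands at $v \otimes e_j \in \I'$, and applying $B_b$ once more yields $A_j v \otimes e_1 \in \I'$, so $A_j v \in \I$.

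There is no serious obstacle here: the construction is a linear bookkeeping device in which the second tensor factor acts as a cyclic counter modulo $9$, and its current value selects which $A_j$ is applied at the next $B_b$ step. The only points to verify with any care are topological --- semilinearity and closedness of $\I'$ and $\I$ --- and the invariance transfer $A_j\I \subseteq \I$, both of which reduce essentially to the identity $B_b B_a^{j-1}(v \otimes e_1) = A_j v \otimes e_1$ displayed above.
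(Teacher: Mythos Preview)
Your proposal is correct and essentially identical to the paper's proof: the paper writes the same two matrices in block notation ($M_s$ cyclically shifts the nine $3$-blocks and $M_p$ has top block-row $[M_1\,\cdots\,M_9]$ and zeros below), defines $x'=x^{\downarrow 0}$, $y'=y^{\downarrow 0}$, and builds the two invariants exactly as you do, with $z^{\downarrow i}$ playing the role of your $z\otimes e_{i+1}$. The key identity $M_pM_s^{i-1}z^{\downarrow 0}=(M_iz)^{\downarrow 0}$ is your $B_bB_a^{j-1}(v\otimes e_1)=A_jv\otimes e_1$, and the forward and backward invariant constructions coincide verbatim up to this change of notation.
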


In the above results, it can happen that the target belongs to the closure of the set of reachable points.
We now show that we can ignore those ``non-robust'' systems and
maintain undecidability.

\begin{theorem}\label{th:closed-monniaux-robust}
    The semilinear invariant problem is undecidable for 
    ``robust'' instances, \emph{i.e.} instances in which
    the target point does not belong to the closure of the set of reachable points.
\end{theorem}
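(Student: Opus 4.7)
The plan is to transfer the undecidability of Theorem~\ref{th:closed-monniaux-bis} to the restricted class of robust instances. The key observation is that a reach instance is automatically non-robust (since $y \in \mathrm{Reach}(x) \subseteq \overline{\mathrm{Reach}(x)}$) and admits no separating invariant trivially; hence, for the undecidability to transfer, it suffices to ensure that the reduction produces non-reach instances that are \emph{also} robust, that is, for which $y \notin \overline{\mathrm{Reach}(x)}$.

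My first step would be to inspect the reduction underlying Theorem~\ref{th:closed-monniaux-bis}. If it encodes a PCP-style problem using matrices with entries in $\Z$ (or in a common fractional lattice $\tfrac{1}{N}\Z$) acting on an integer starting vector $x$, then $\mathrm{Reach}(x)$ is contained in a discrete closed subset of $\R^d$, so $\overline{\mathrm{Reach}(x)} = \mathrm{Reach}(x)$. Every non-reach instance is then automatically robust, and the undecidability transfers verbatim. This is the ``easy'' case.

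If the reduction does not directly produce discrete reachable sets (e.g.\ the matrices involve contractions or irrational rotations causing $\mathrm{Reach}(x)$ to accumulate), the second step is to modify the instance $\mathcal{M}=((A_i)_{i\leq k},x,y)$ by adding one auxiliary coordinate. Concretely, let $\lambda>1$ be a rational not a root of unity (say $\lambda=2$), define
\[
A_i' = \begin{pmatrix} A_i & 0 \\ 0 & \lambda \end{pmatrix}, \quad x' = (x,1), \quad y' = (y,\gamma)
\]
with $\gamma$ chosen outside the discrete closed set $\{\lambda^n : n\geq 0\}$ (e.g.\ $\gamma=3$). Every reachable point of $\mathcal{M}'$ has its last coordinate in $\{\lambda^n\}$, which is closed in $\R$, and $\gamma$ has positive distance to it; combined with the fact that $\overline{\mathrm{Reach}(\mathcal{M}')}$ is contained in $\overline{\mathrm{Reach}(x)}\times\{\lambda^n\}$, this forces $y'\notin\overline{\mathrm{Reach}(x')}$, giving robustness. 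The forward direction of the invariant correspondence is immediate by taking $\I' = \I \times \R$.

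\textbf{Main obstacle.} The delicate part is the reverse direction: extracting from a closed semilinear separating invariant $\I'$ of $\mathcal{M}'$ one for $\mathcal{M}$. The natural candidate $\I = \{z : (z,1)\in\I'\}$ inherits $x\in\I$ and $A_i\I\subseteq\I$, but $y\notin\I$ does not follow from $y'=(y,\gamma)\notin\I'$ alone, since $\I'$ could still contain $(y,\lambda^n)$ for some $n$. To force the implication, I would add an auxiliary matrix $B$ (acting as the identity on the first $d$ coordinates and scaling the last coordinate by $\gamma/\lambda^{n_0}$ for a fixed $n_0$) so that $(y,\lambda^{n_0})\in\I'$ would entail $y'\in\I'$, a contradiction. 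The subtlety is that $B$ must be chosen so as \emph{not} to enlarge the reachable set into a neighbourhood of $y'$, thereby preserving robustness; this requires carefully coupling $B$'s spectrum with that of $\lambda$, and arguing that the combined semigroup still has its last-coordinate projection contained in a discrete closed set avoiding $\gamma$. Verifying this, and iterating the argument to cover every $\lambda^n$, is the bulk of the technical work and constitutes the genuine obstacle.
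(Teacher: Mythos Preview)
Your easy case does not apply: the matrices in the reduction behind Theorems~\ref{th:closed-monniaux} and~\ref{th:closed-monniaux-bis} have contracting entries (powers of $1/4$), so the reachable set accumulates and is not discrete.

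Your second construction has a more immediate flaw than the one you flag. With $A_i'=\bigl(\begin{smallmatrix}A_i&0\\0&2\end{smallmatrix}\bigr)$, $x'=(x,1)$, $y'=(y,3)$, the set
\[
\I'=\R^d\times\bigl(\{1\}\cup\{2\}\cup[4,\infty)\bigr)
\]
is closed, semilinear, contains $x'$, is stable under every $A_i'$, and avoids $y'$. So $\mathcal M'$ \emph{always} admits a separating invariant, irrespective of the $\omega$-PCP instance, and the reduction collapses. The underlying reason is structural: in the original reduction, the implication ``$\omega$-PCP has a solution $\Rightarrow$ no closed invariant'' is proved precisely by exhibiting $y$ as a limit of reachable points. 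Any gadget that makes the instance robust destroys that argument, so you cannot hope to ``transfer'' the undecidability by an invariant-preserving wrapper; you must supply a \emph{new} mechanism forcing $y'\in\I'$ in the positive case. Your auxiliary matrix $B$ is aimed at the wrong direction (recovering an invariant for $\mathcal M$ from one for $\mathcal M'$) and does not address this.

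The paper takes a different route. It reopens the $\omega$-PCP reduction and adds an affine counter $n$ (increment by $2$ at each $\hat M_i$) together with a matrix $M_-$ that \emph{decrements} $n$ by $2$; the target has $n=1$. Robustness is then a parity fact: every reachable point has $n$ even. In the positive $\omega$-PCP case one reaches $(0,0,0,2k,1)$ for all $k$, so any semilinear invariant meets the line $\{(0,0,0,t,1)\}$ in an unbounded semilinear set, hence contains an interval and thus some odd $t=2k_0+1$; iterating $M_-$ then forces the target into the invariant. In the negative case an explicit invariant is built as before. The two essential ideas you are missing are the backward matrix on the counter and the ``semilinear $+$ unbounded $\Rightarrow$ contains an interval'' step replacing the closure argument.
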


The proof of the above result does not require that the invariants be
closed. We can therefore establish Theorem~\ref{th:main-undec} by
making use of the same
construction as in the proof of Theorem~\ref{th:closed-monniaux-bis} to
encode all the matrices of Theorem~\ref{th:closed-monniaux-robust} into
only two distinct matrices.



\subsection{Decidability for simple linear loops}

In this section, we are only concerned with Orbit instances.
Since it is possible to decide (in polynomial time) whether an Orbit instance is reach or non-reach~\cite{KL80,KL86},
we can always assume that we are given a non-reach instance. All
decidability results are only concerned with \emph{closed invariants},
this is crucial in several proofs.

\begin{theorem}\label{thm:main}
    There is an algorithm that decides whether an Orbit instance admits a
    closed semilinear invariant. Furthermore, it runs in polynomial time assuming
    the dimension d is fixed.
\end{theorem}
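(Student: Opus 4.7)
The plan is to decide, for a non-reach Orbit instance $(A,x,y)$, the existence of a closed semilinear $\I \subseteq \C^d$ with $x \in \I$, $A\I \subseteq \I$, and $y \notin \I$. Reach instances are eliminated in polynomial time by the Kannan--Lipton decision procedure~\cite{KL80,KL86}; in that case no separating invariant can exist. For non-reach instances, the first step is to compute the Jordan decomposition $A = P J P^{-1}$ in polynomial time (for fixed~$d$) and split $\C^d = V_s \oplus V_c \oplus V_u$ according to whether the associated eigenvalues have moduli $<1$, $=1$, or~$>1$, and project $x$ and $y$ into these subspaces.

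The technical heart is a structural lemma describing closed semilinear sets $\I$ satisfying $A\I \subseteq \I$. On $V_s$, orbits contract to $0$, so closed semilinear supersets of the orbit are relatively flexible and easy to build. On $V_u$, orbits escape to infinity, and $A$-invariance forces $\I$ to contain polyhedral cones in the unstable directions actually visited by the orbit. The delicate case is $V_c$: for eigenvalues that are roots of unity, cyclic rotational symmetry of matching order permits polyhedral invariants (finite intersections/unions of rotates of half-spaces); for eigenvalues whose arguments are irrational multiples of~$2\pi$, a Kronecker density argument combined with the finite-facet nature of polyhedra forces every closed semilinear $A$-invariant subset of the associated real $2$-plane to be $\emptyset$, $\{0\}$, or the whole plane. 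Interactions across Jordan blocks (for instance expansion coupled with irrational rotation, or nontrivial Jordan chains on the unit circle) impose further product-structure constraints that must be read off the Jordan data.

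Assembling these constraints yields an effective description of the smallest closed semilinear $A$-invariant set $\mathcal{S}(x)$ containing $x$, expressed by a bounded-size formula in the first-order theory of the reals with algebraic parameters drawn from the Jordan form of $A$ and the coordinates of $x$. A separating invariant exists if and only if $y \notin \mathcal{S}(x)$, and this membership test is decidable in polynomial time for fixed $d$ via Tarski--Seidenberg quantifier elimination (polynomial when the number of variables is fixed). The main obstacle is the structural lemma itself: correctly classifying which closed semilinear sets are $A$-invariant when several Jordan block types interact, and ensuring that the resulting $\mathcal{S}(x)$ is both tight (actually achieved by some concrete semilinear invariant) and effectively computable---so that the test $y \in \mathcal{S}(x)$ is genuinely equivalent to the non-existence of a separating invariant and not merely a necessary condition.
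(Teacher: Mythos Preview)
Your high-level plan contains a genuine structural gap: you assume that there always exists a \emph{smallest} closed semilinear $A$-invariant set $\mathcal{S}(x)$ containing $x$, and that it is definable by a bounded-size first-order formula over the reals. This is false whenever $A$ has an eigenvalue off the unit circle (or a non-diagonal Jordan block with a root-of-unity eigenvalue). In Example~\ref{ex:1}, for instance, the orbit is an infinite discrete spiral, its closure is itself, and the intersection of all closed semilinear invariants is exactly this orbit---which is not semilinear, not semialgebraic, and not definable in the first-order theory of the reals. There is no single $\mathcal{S}(x)$ to test $y$ against; rather, for every $y$ not on the orbit one must construct a \emph{different} separating invariant (whose ``initial segment'' depends on how far out $y$ lies). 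The paper handles these situations not by computing a minimal invariant but by proving that a separating invariant \emph{always} exists for non-reach normalized instances, via explicit constructions (Appendices~\ref{appendix:more_one}, \ref{appendix:less_one}, \ref{appendix:mod_one_rou_not_diag}).

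Only after these positive cases are peeled off---leaving all eigenvalues on the unit circle and none a root of unity---does a minimal closed semilinear invariant exist, and its computation is the technical core (Appendix~\ref{appendix:mod_one_not_rou}). Even there, your sketch underestimates the difficulty: a Kronecker density argument alone does not suffice. The paper needs the Skolem--Mahler--Lech theorem to rule out nontrivial invariants in the diagonal case, an intricate induction on dimension for non-diagonal blocks, and a separate reduction to handle pairs of eigenvalues whose quotient is a root of unity (Examples~\ref{ex:4} and~\ref{ex:5} show that such ``equivalent'' eigenvalues produce genuinely nontrivial minimal invariants that are proper subspaces, not just $\{0\}$ or the whole plane). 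Your proposal does not distinguish these cases, and the uniform $\mathcal{S}(x)$ you posit cannot exist.
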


We now comment a few instructive examples to illustrate the different cases that arise. The proof of Theorem~\ref{thm:main} is postponed to Section~\ref{sec:dec}.

\begin{example}\label{ex:1}
    Consider the Orbit instance $\ell = (A,x,y)$ in dimension $2$ where
    \[
    A = \frac 1 2 \begin{bmatrix} 1 & -2 \\ 2 & 1\end{bmatrix},
    \]
    $x=(1,0)$ and $y=(3,3)$. The orbit is depicted on Figure~\ref{fig:ex1}.
    Here, $A$ is a counterclockwise rotation around the origin with an expanding scaling factor.
    A suitable semilinear invariant can be constructed by taking the complement of the convex hull
    of a large enough number of points of the orbit, and adding the missing points. In this example,
    we can take
    \[
    \I = \{x,Ax\} \cup \compl{\convex{\{A^n x, n \leq 8\}}}.
    \]
\end{example}

\begin{figure}[ht]
    \centering
    \includegraphics[scale=0.5]{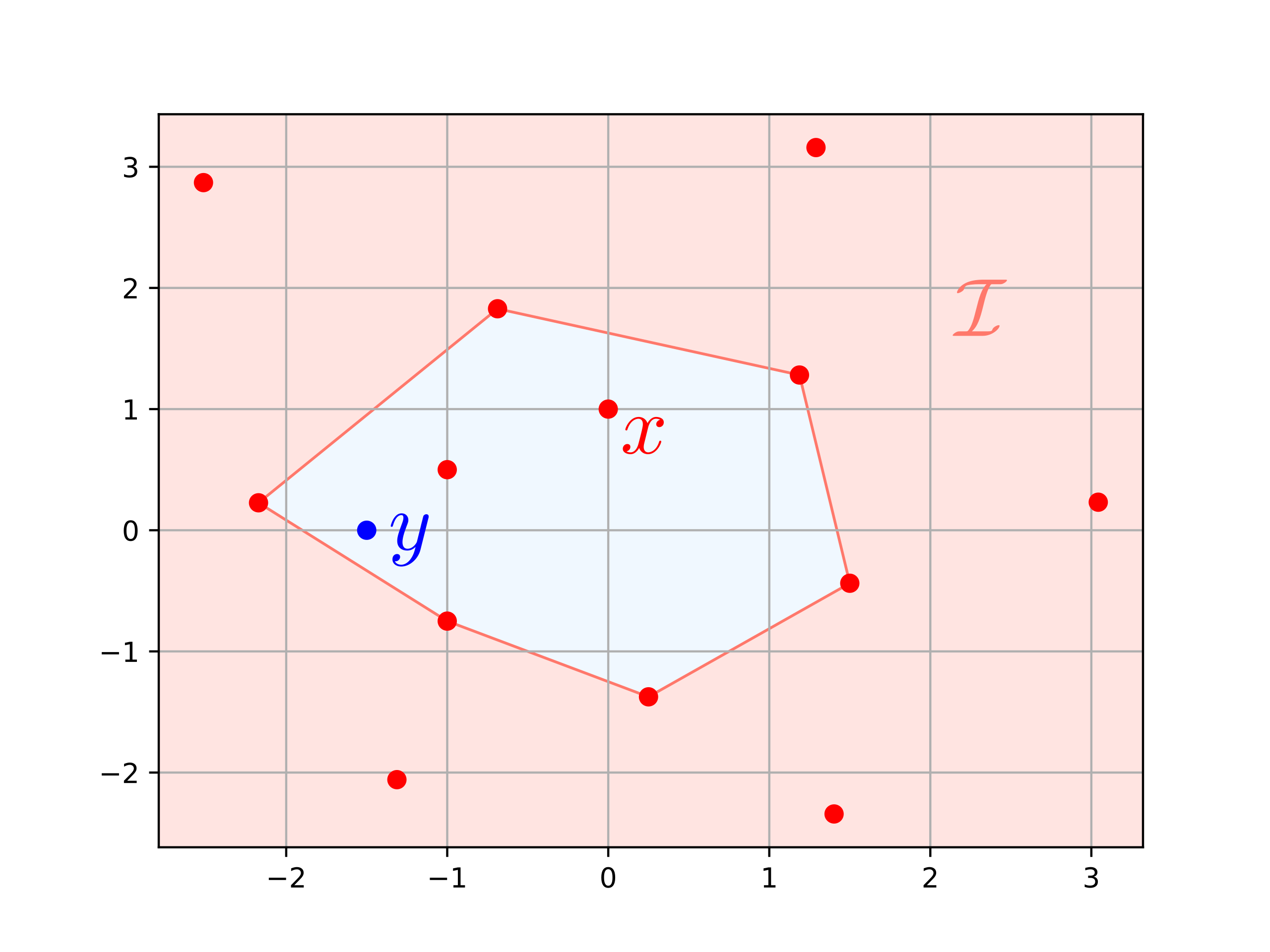}
    \caption{An invariant for example~\ref{ex:1}.\label{fig:ex1}}
\end{figure}

Constructing an invariant of this form will often be possible, for instance when $A$ has an eigenvalue of modulus $>1$.
A similar (yet more involved) construction gives the same result when $A$ has an eigenvalue of modulus $<1$.
The case in which all eigenvalues have modulus 1 is more involved.
Broadly speaking, invariant properties in such cases are often better described by
sets involving equations or inequalities of higher degree~\cite{FOOPW17},
which is why interesting semilinear invariants do not exist in many instances.
However, delineating exactly which instances admit separating semilinear invariants is challenging,
and is our main technical contribution on this front. The following few examples illustrate some of the phenomena that occur.

\begin{example}\label{ex:2}
    Remove the expanding factor from the previous instance, that is, put instead
    \[
    A = \frac 1 {\sqrt 5} \begin{bmatrix} 1 & -2 \\ 2 & 1\end{bmatrix}.
    \]
    Now $A$ being a rotation of an irrational angle, the orbit of $x$ is dense in the circle of radius 1.
    It is quite easy to prove that no semilinear invariant exists
    (except for the whole space $\R^2$) for this instance,
    whatever the value of $y$. This gives a first instance of non-existence of a semilinear invariants.
    Many such examples exist, and we shall now supply a more subtle one. Note that simple invariants do exist,
    such as the unit circle, which is a semialgebraic set but not a semilinear one.
\end{example}

\begin{example}\label{ex:3}
    Consider $\ell=(A,x,y)$ in dimension $4$ with
    \[
    A = \begin{bmatrix} A' & I_2 \\ 0 & A' \end{bmatrix},
    \]
    where $A'$ is the matrix from Example~\ref{ex:2}, $x=(0,0,1,0)$ and $y$ is arbitrary.
    When repeatedly applying $A$ to $x$, the last two coordinates describe a circle of radius 1 as in the previous example.
    However, the first two coordinates diverge: at each step, they are rotated and the last two coordinates are added.
    In this instance, no semilinear invariant exists (except again for
    the whole space $\R^4$), however proving this
    is somewhat involved. Note however once more that a semialgebraic invariant may easily be constructed.
\end{example}

In examples~\ref{ex:2} and~\ref{ex:3}, no non-trivial semilinear
invariant exist, or equivalently any semilinear invariant must contain
$\I_0$, where $\I_0$ is the whole space.
In all instances for which constructing an invariant is not
necessarily immediate (as is the case in Example~\ref{ex:1}),
we will provide a minimal invariant, that is, a semilinear $\I_0$ with
the property that any semilinear invariant will have to contain  $\I_0$.
In such cases there exists a semilinear invariant (namely $\I_0$) if and only if $y \notin \I_0$.
We conclude with two examples having such minimal semilinear invariants.

\begin{example}\label{ex:4}
    Consider $\ell=(A,x,y)$ in dimension $3$ with
    \[
    A = \begin{bmatrix} A' & 0 \\ 0 & -1 \end{bmatrix},
    \]
    where $A'$ is the matrix of Example~\ref{ex:2}, a $2$-dimensional rotation by an angle which is not a rational multiple of $2\pi$ and $x=(1,0,1)$.
    As we iterate matrix $A$, the two first coordinates describe a circle, and the third coordinate alternates between $1$ and $-1$:
    the orbit is dense in the union of two parallel circles.
    Yet the minimal semilinear invariant comprises the union of the two planes containing these circles.
\end{example}

\begin{figure}[ht]
    \centering
    \includegraphics[scale=0.5]{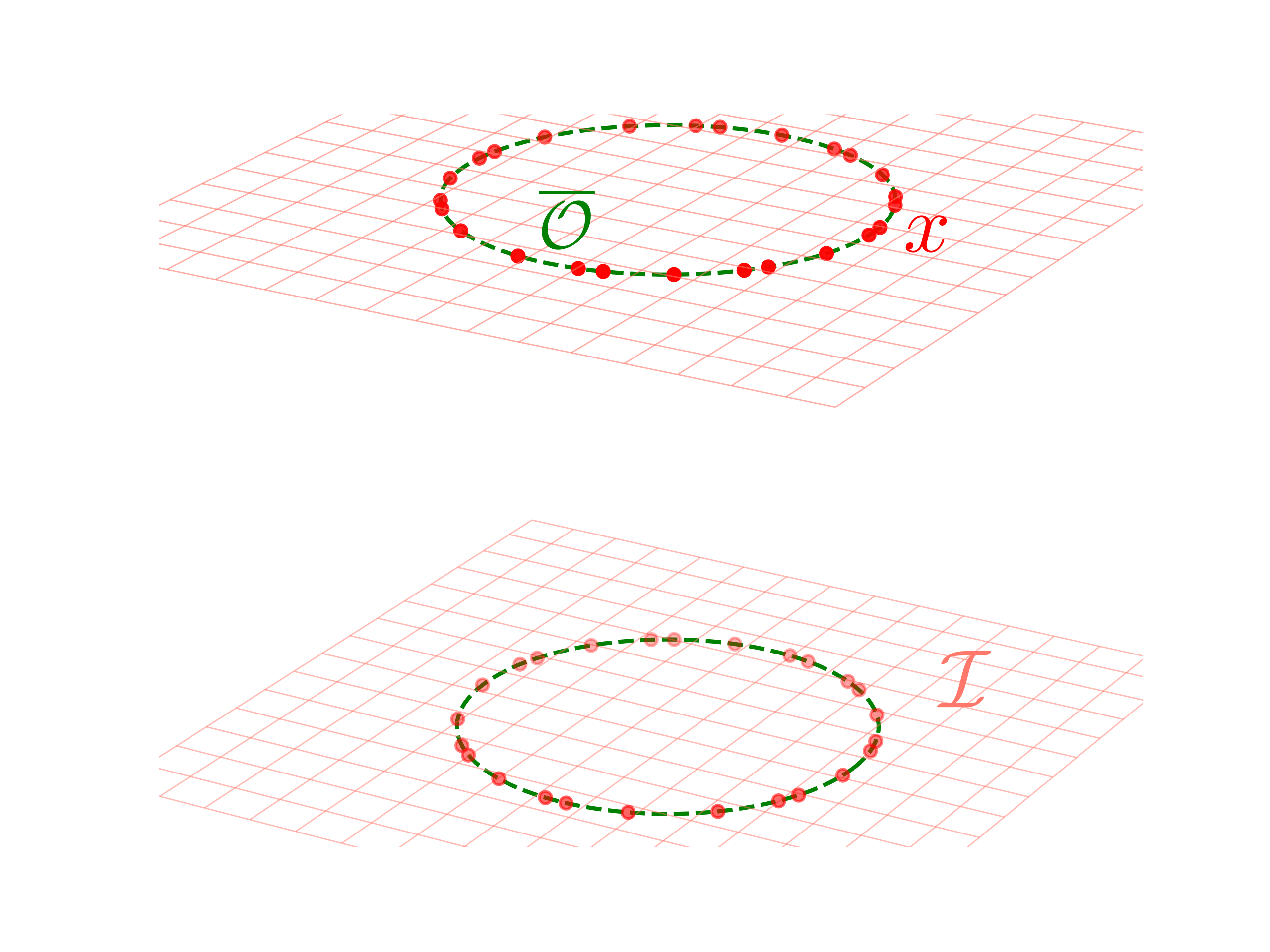}
    \caption{The minimal invariant for Example~\ref{ex:4}. Here, $\bar \O$ denotes the topological closure of the orbit of $x$.\label{fig:ex4}}
\end{figure}

\begin{example}\label{ex:5}
    Consider $\ell=(A,x,y)$ in dimension $8$ with
    \[
    A = \begin{bmatrix} A' & 0 \\ 0 & -A' \end{bmatrix},
    \]
    where $A'$ is the matrix from Example~\ref{ex:3}. This can be seen as two instances of Example~\ref{ex:3} running in parallel.
    Let $x=(0,0,1,0,0,0,-7,0)$, and note that both blocks of $x$ are initially related by a multiplicative factor, namely $-7(x_1,x_2,x_3,x_4) = (x_5, x_6, x_7, x_8)$. Moreover, as the first block is multiplied by the matrix $A'$ while the second one is multiplied by $-A'$, the multiplicative factor relating the two blocks alernates between $7$ and $-7$.
    Thus, the minimal semilinear invariant in this setting is
    \[
    \I_0 =  \{u \in \R^8 \mid (u_1,u_2,u_3,u_4) =  \rpm 7 (u_5,u_6,u_7,u_8)\},
    \]
    which has dimension 4. If however, we had $x=(0,0,1,0,1,0,-7,0)$, then the minimal semilinear invariant would be
    \[
    \{u \in \R^8 \mid (u_3,u_4) = \rpm 7 (u_7,u_8)\},
    \]
    which has dimension 6. Roughly speaking, no semilinear relation holds between $(u_1,u_2)$ and $(u_5,u_6)$.
\end{example}

\section{Undecidability proofs}\label{sec:undec}
\subsection{Proof of Theorem~\ref{th:closed-monniaux}}

\newcommand{\enc}[1]{\left[#1\right]}

We reduce an instance of the $\omega$-PCP problem defined as follows:
given nine pairs of non-empty words $\{(u^{(1)},v^{(1)}),\dots, (u^{(9)},v^{(9)})\}$ on alphabet $\{0,2\}$,
does there exist an infinite word $w=w_1w_2\dots$ on alphabet $\{1,\dots,9\}$ such that
$u^{(w_1)}u^{(w_2)}\dots  = v^{(w_1)}v^{(w_2)}\dots$ .
This problem is known to be undecidable~\cite{HH06}. 

In order to simplify future notations, given a finite or infinite word $w$,
we denote $|w|$ the length of the word $w$ and given an integer $i\leq |w|$,
we write $w_i$ for the $i$'th letter of $w$.
Given a finite or infinite word $w$ on alphabet $\{1,\dots,9\}$ we denote $u^{(w)}$ and 
$v^{(w)}$ the words on the alphabet $\{0,2\}$ such that $u^{(w)}= u^{(w_1)}u^{(w_2)}\dots$ and 
$v^{(w)}= v^{(w_1)}v^{(w_2)}\dots$. Given a (finite or infinite) word $w$ on 
the alphabet $\{0,2\}^*$, denote by
$\enc{w}=\sum_{i=1}^{|w|}w_i4^{1-i}$ the quaternary encoding of $w$. It is
clear that it satisfies $\enc{ww'}=\enc{w}+4^{-|w|}\enc{w'}$ and that $\enc{w}\in[0,\tfrac{8}{3}]$.

\bigskip
Let $\{(u^{(1)},v^{(1)}),\dots, (u^{(9)},v^{(9)})\}$ be an instance of the $\omega$-PCP problem.
For all $i\leq 9$, for readibility, we denote $|u^{(i)}|={n_i}$ and $|v^{(i)}|={m_i}$.
We build the matrices $M_1,\dots, M_9$ where
\[
    M_i=\begin{bmatrix}
        1   &   \enc{u^{(i)}}   &   -\enc{v^{(i)}}  \\
        0   &   4^{-n_i}    &   0           \\
        0   &   0           &   4^{-m_i}
    \end{bmatrix}
\]
In the following, we write $M_w$ for $w=w_1\dots w_k\in \{1,\dots,9\}^*$ the matrix $M=M_{w_k}\dots M_{w_1}$,
which can be checked to satisfy
\[
    M_w=\begin{bmatrix}
        1   &   \enc{u^{(w)}}   &   -\enc{v^{(w)}}  \\
        0   &   4^{-|u^{(w)}|}  &   0           \\
        0   &   0           &   4^{-|v^{(w)}|}
    \end{bmatrix},
    \qquad
    M_w\begin{bmatrix}0\\1\\1\end{bmatrix}=\begin{bmatrix}\enc{u^w}-\enc{v^w}\\4^{-|u^w|}\\4^{-|v^w|}\end{bmatrix}.
\]
Let us show that there exists a separating invariant of $((M_i)_{i\leq 9},x,y)$
where $x=(0,1,1)$ and $y=(0,0,0)$
iff the $\omega$-PCP instance has no solution. 

\bigskip

Let us first assume the $\omega$-PCP instance has a solution $w$. 
Fix $r\in \N$ and let $w\restriction_r=w_1\cdots w_r$ and $x_r= M_{w\restriction_r} x$.
We have that $x_r= (\enc{u^{(w\restriction_r)}}-\enc{v^{(w\restriction_r)}},$ $4^{-|u^{(w\restriction_r)}|},4^{-|v^{(w\restriction_r)}|})$ and
since $u^{(w)}=v^{(w)}$, it is clear that $x_r\to0=y$ as $r\to\infty$.
Any separating invariant $\I$ must contain
this sequence $x_r$ since $\I$ contains the initial point and is stable under
$(M_i)_{i\leq 9}$. Moreover, $\I$ is closed so it must contain the limit of the sequence,
$(0,0,0)$, which is the target point. Thus $\I$ cannot be a separating invariant.
Therefore there is no separating invariant of $((M_i)_{i\leq 9},(0,1,1),(0,0,0))$.

\bigskip
Now, let us assume the $\omega$-PCP instance has no solution. 
There exists $n_0\in \nats$ such that for every infinite word $w$ on alphabet
$\{0,\dots, 9\}$ there exists $n\leq n_0$ such that $u^{(w)}_n \neq v^{(w)}_n$.
Indeed, consider the tree which root is labelled by $(\eps,\eps)$ and, given a node $(u,v)$ of the tree,
if for all $n\leq \min(|u|,|v|)$ we have $u_n=v_n$, then this node has $9$ children:
the nodes $(u u^{(i)},v v^{(i)})$ for $i=1\dots 9$. This tree is finitely branching and
does not contain any infinite path (which would induce a solution to the $\omega$-PCP instance).
Thus, according to K\"onig's lemma, it is finite. We can therefore choose the height of this tree as our $n_0$.

We define the invariant $\I=\I'\cup \I''$ where\footnote{This is a semilinear invariant since $|x|\geqslant y$
    if and only if $x\geqslant y\vee -x\geqslant y$.}
\[\I'= \big\{(s,c,d): |s| \geq 4(c+d)+4^{-n_0-1}\wedge c\geq 0 \wedge d \geq 0 \big\}\]
and
\[\I''=\big\{M_w x : w\in \{1,\dots,9\}^* \wedge |w|\leq n_0+1\big\}\]
Let us show that $\I$ is a separating invariant of $((M_i)_{i\leq 9},(0,1,1),(0,0,0))$.
By definition, $\I$ is closed, semilinear, contains $x$ and does not contain $y$.
The difficult point is to show stability under $M_i$ for $i\leq 9$.
\begin{itemize}
\item Let $M_wx\in \I''$, for some $w$: there are two cases. Either $|w|\leqslant n_0$,
    then $|wi|\leqslant n_0+1$, therefore $M_iM_wx=M_{wi}z\in \I''$. Otherwise,
    $M_iM_wx=M_{wi}x=(s,c,d)$ where $s=\enc{u^{(wi)}}-\enc{v^{(wi)}}$, $c=4^{-|u^{(wi)}|}$ and $d=4^{-|v^{(wi)}|}$.
    But then, there exists $n\leqslant n_0$ such that $u^{(wi)}_n\neq v^{(wi)}_n$. Let $n$ be
    the smallest such number, then
    \begin{align*}
        s
            &= \enc{u^{(wi)}}-\enc{v^{(wi)}}\\
            &= (u^{(wi)}_n-v^{(wi)}_n)4^{1-n}+\sum_{j=n+1}^{|wi|}(u^{(wi)}_j-v^{(wi)}_j)4^{1-j}    
    \end{align*}            
\text{since }$u^{(wi)}_j=v^{(wi)}_j$\text{for }$j<n$. Thus,
\begin{align*}
        |s| &\geqslant 2\cdot 4^{1-n}-\tfrac{8}{3}4^{-n}   &&\text{since }|u^{(wi)}_n-u^{(wi)}_n|=2\text{ and }\enc{\cdot}\in[0,\tfrac{8}{3}]\\
            &\geqslant 4^{1-n}+4^{-n}\\
            &\geqslant 4(c+d)+4^{-n_0-1}&&\text{since }n\leqslant n_0\text{ and }|u^{(wi)}|,|v^{(wi)}|\geqslant n_0+2.
    \end{align*}
    This shows that $M_i(M_wx)\in \I'\subseteq \I$.
\item Let $z=(s,c,d)\in \I'$, then $|s| \geqslant  4(c+d)+4^{-n_0-1}$. Without loss of generality,
    assume that $d\geqslant c$ (this is completely symmetric in $c$ and $d$). Let
    $(s',c',d')=M_iz$, and we check that
    then
    \begin{align*}
        |s'|
            &= |s+c\enc{u^{(i)}}-d\enc{v^{(i)}}|&&\text{by applying the matrix }M_i\\
            &\geqslant |s|-d\max(\enc{u^{(i)}},\enc{v^{(i)}})\\
            &\geqslant 4(c+d)+4^{-n_0-1}-d\max(\enc{u^{(i)}},\enc{v^{(i)}})&&\text{by assumption on }s\\
            &\geqslant 4(c+d)+4^{-n_0-1}-d\tfrac{8}{3}&&\text{since }\enc{\cdot}\in[0,\tfrac{8}{3}]\\
            &= 4(c+ d/3) + 4^{-n_0-1}  \\
            &\geqslant 4(c'+d') + 4^{-n_0-1}&&\text{since }c\geqslant c'\text{ and }d/4\geqslant d'
    \end{align*}
    since $c'=c4^{-|u^{(i)}|}$ and $d'=d4^{-|v^{(i)}|}$.
    This shows that $M_iz\in \I'\subseteq \I$.
\end{itemize}
This shows that $\I$ is thus stable and concludes the reduction.

\subsection{Proof of Theorem~\ref{th:closed-monniaux-bis}}

We reduce the instances of Theorem~\ref{th:closed-monniaux} to $2$ matrices of size $27$. The first matrix
$M_s$ shifts upwards the position of the values in the point by 3, while the second matrix $M_p$ applies one of the matrices
of the previous reduction, depending on the position of the values within the matrices, then put the obtained value at the top. 
In other words, $M_pM_s^{i-1}$ for $1\leq i\leq 9$ intuitively has the same effect as $M_i$ had in the proof of Theorem~\ref{th:closed-monniaux}.
In the following, we reuse the notations and results of the proof of Theorem~\ref{th:closed-monniaux}.

Define matrices $M_s$ and $M_p$, where $I_3$ is the identity matrix of size $3\times 3$,
and for any $z\in\R^3$ and $i\in\set{0,\ldots,8}$, the $i^{th}$ shift $z^{\downarrow i}\in\R^{27}$ of $z$,
where $\boldsymbol{0}_{n}\in\R^n$ denotes the zero vector of size $n$, as follows:
\[
    M_s=\begin{bmatrix}
        0 & \cdots & 0 & I_3 \\
        I_3 & & \\[-4pt]
        & \ddots & & \\[-4pt]
         &  & I_3 & 0 \\
        \end{bmatrix},
    \qquad
    M_p=\begin{bmatrix}
        M_1 & \cdots  & M_9\\
        0 & \cdots & 0\\[-4pt]
        \vdots & & \vdots \\
        0 & \cdots & 0\\[1pt]
        \end{bmatrix},
    \qquad
    z^{\downarrow i}=\begin{bmatrix}\boldsymbol{0}_{3i}\\z\\\boldsymbol{0}_{24-3i}\\[1pt]\end{bmatrix}.
\]
It follows that $M_sz^{\downarrow i}=z^{\downarrow i+1\bmod 9}$ and $M_pz^{\downarrow i}=(M_{i+1}z)^{\downarrow 0}$.
Assume that there exists a separating invariant $\I$ for $(M_1,\ldots,M_9,x,y)$
and let
\[\J=\bigcup_{i=0}^8\set{z^{\downarrow i}:z\in \I}\]
which is a closed semilinear set. Then for any $z^{\downarrow i}\in \J$, we have
$M_sz^{\downarrow i}=z^{\downarrow i+1\bmod 9}\in \I$ by definition and
$M_pz^{\downarrow i}=(M_iz)^{\downarrow 0}\in \J$ since $M_iz\in \I$ by
virtue of $z\in \I$ and $\I$ being invariant. Furthermore, $x'=x^{\downarrow 0}\in \I$
since $x\in \I$, and $y'=y^{\downarrow 0}\notin \J$ for otherwise we would have $y\in \I$.
Therefore $\J$ is a separating invariant for  $(M_s,M_p,x',y')$.

Assume that there exists a separating invariant $\J$ for $(M_s,M_p,x',y')$
and let $\I=\set{z:z^{\downarrow 0}\in \J}$ which is a closed semilinear set. Clearly $x\in \I$ since $x'=x^{\downarrow 0}\in \J$
and $y\notin \I$ since $y'=y^{\downarrow 0}\notin \J$. Let $z\in \I$ and $i\in\set{1,\ldots,9}$,
then $(M_iz)^{\downarrow 0}=M_pM_s^{i-1}z^{\downarrow 0}\in \J$ and since $z^{\downarrow 0}\in \J$
and $\J$ is invariant under $M_s$ and $M_p$, thus $M_iz\in \I$. Therefore $\I$ is a non-reachability
invariant for $(M_1,\ldots,M_9,x,y)$.

\subsection{Proof sketch of Theorem~\ref{th:closed-monniaux-robust}}

We do the proof of Theorem~\ref{th:closed-monniaux-robust} twice:
first we use linear guards in order to limit the selection of the matrices.
The added power of the guards allows for a relatively simple proof. 
This first proof can be seen as an extended sketch of the
second one, in Appendix~\ref{appendix:undec}, where we remove the guards to obtain the result claimed. We do so
by emulating the guards using extra variables.
\bigskip

We reduce from the $\omega$-PCP problem and reuse some of the notations of the proof of Theorem~\ref{th:closed-monniaux}.
Let $\{(u^{(1)},v^{(1)}),\dots, (u^{(9)},v^{(9)})\}$ be an instance of the $\omega$-PCP problem.
We build the matrices $\hat{M}_1,\dots, \hat{M}_9,M_e,M_-$ where
\[
    \hat{M}_i=\begin{bmatrix}
        M_i & & \\
        & 1 & 2 \\
        & 0 & 1
    \end{bmatrix},
    \qquad
    M_e=\begin{bmatrix}
        \boldsymbol{0}_{3\times3} & & \\
        & 1  & 0 \\
        & 0 & 1
        \end{bmatrix},
    \qquad
    M_-=\begin{bmatrix}
        I_3 & & \\
        & 1 & -2 \\
        & 0 & 1
        \end{bmatrix}
\]
and $M_1,\ldots,M_9$ are from the proof of Theorem~\ref{th:closed-monniaux}.
Moreover, when in $(s,c,d,n,a)$, the matrices $\hat{M}_i$ and $M_e$ can only be selected
if the linear guard $|s| < 4(c+d)$ holds, and the matrix $M_-$ can only be selected if $s=c=d=0$.
\smallskip

Informally, in state $(s,c,d,n,a)$,
the subvector $(s,c,d)$ has the same role as before: $s$ contains the difference of the
values of the numbers obtained using the $v_i$ and $u_i$, while $c$ and $d$ are used in
order to help compute this value.
In the proof of Theorem~\ref{th:closed-monniaux}, we showed that when the $\omega$-PCP
instance had no solution, there existed a value $n_0$ such that any pair of words created
with the alphabet $(u^{(i)},v^{(i)})$ differed on one of the first $n_0$ terms. 
The variable $n$ is used with the guards in order to detect this value $n_0$:
if such an $n_0$ exists, then at most $n_0+1$ matrices $M_i$ can be selected before the guard stops holding.
Moreover, firing a matrix $M_i$ adds 2 to $n$ ensuring that when the guard stops holding,
$n$ is smaller or equal to $2(n_0+1)$.
Conversely, if no such $n_0$ exist, then there is a way to select matrices $M_i$ such that the guard always holds,
allowing the variable $n$ to become an even number as high as one wants.
The existence of an upper bound on the value of $n$ is used to build an invariant or to prove that there cannot exist an invariant.
Finally, the value $a$ is only here in order to allow for affine modification of the values. It is never modified.
\smallskip

Let $\hat{x}=(x,0,1)$ and $\hat{y}=(y,1,1)$.
Note that $\hat{y}$ is not in the adherence of the reachable set as the fourth variable of any
reachable point is an even number while $y$'s is an odd one. 

Assume the $\omega$-PCP instance does not possess a solution.
Then there exists $n_0\in \nats$ such that any pair of words
$(u^{(w)},v^{(w)})$ differs on one of the first $n_0$ letters. 
Define the invariant $\I=\I'\cup \I''$ where
\begin{align*}
    \I' &= \{\hat{M}_w\hat{x} : w\in \{1,\dots,9\}^* \wedge |w|\leqslant n_0+1\}\\
    \I''&=\{(0,0,0,n,1): n\leqslant 0 \vee (\exists k \in \nats, n=2k \wedge n \leqslant 2(n_0+1))\}.
\end{align*}
This invariant is clearly semilinear, it contains $\hat{x}$ and does not contain $\hat{y}$.
If $z=(0,0,0,n,1) \in \I''$ then only $M_-$ can be triggered due to the guards
and $M_-z=(0,0,0,n-2,1)\in \I''$. Now if $z=(s,c,d,n,a)=\hat{M}_w\hat{x} \in \I'$
for some $w\in \{1,\dots,9\}^*$, then
$M_-$ cannot be fired as the guard does not hold. If one fires $M_e$, by construction of $\I'$, 
$n$ is an even number smaller than $2(n_0+1)$, thus 
$M_ez\in \I''$. Now in order to fire a matrix $\hat{M}_i$, one needs 
$|s| < 4(c+d)$ to hold.
We showed in the proof of Theorem~\ref{th:closed-monniaux} that, from the initial configuration $x$,
after $n_0+1$ transitions using one of the matrices $M_i$ then $1/4^{n_0+1}\leqslant |s| - 4(c+d)$. 
As a consequence, if the guard holds, then $|w|\leqslant n_0$ and  
$\hat{M}_iz= \hat{M}_{wi}\hat{x}\in \I'$.
Therefore, $\I$ is a separating invariant of $(\hat{M}_1,\dots \hat{M}_9, M_e, M_-,\hat{x},\hat{y})$.

Now assume the $\omega$-PCP possesses a solution $w\in \{1,\dots,9\}^\omega$.
For $k\in\nats$, we denote $w\restriction_k$ the prefix of length $k$ of $w$.
Let $k\in \nats$ and $(s,c,d,n,a) = \hat{M}_{w\restriction_k} x$,
then $|s|<4(c+d)$. 
Indeed, assume that $u^{(w\restriction_k)}$ is longer than
$v^{(w\restriction_k)}$. Then $u^{(w\restriction_k)}=v^{(w\restriction_k)}t$ for some word $t\in\set{0,2}^*$
because $u^w{(w)}v^{(w)}$. Let $\ell=|u^{(w\restriction_k)}|$ and recall that $c=4^{-\ell}$, then
\[
    s=|[u^{(w\restriction_k)}]-[v^{(w\restriction_k)}]|
        =4^{-\ell}[t]
        \leqslant 4^{-\ell}\tfrac{8}{3}
        \leqslant 4c
        < 4(c+d).
\]
The symmetric case is similar but uses $d$ instead. Therefore the guard is satisfied and
$M_e\hat{M}_{w\restriction_k}\hat{x}=(0,0,0,2k,1)\in \I$
is reachable for all $k\in\N$.
Let $\I$ be a semilinear invariant containing the reachability set,
then $\I \cap \{(0,0,0,x,1): x\in \mathbb{R}\}$ is semilinear and contains
$(0,0,0,2k,1)$ for all $k\in \nats$. This implies that it necessarily contains
an unbounded interval and there must exists $k_0\in\nats$ such that  
$(0,0,0,2k_0+1,1) \in \I$. Since $\I$ is stable by the matrix $M_-$, $\I$ contains 
the target $y$.
Therefore, $\I$ is not a separating invariant of $((\hat{M}_1,\dots \hat{M}_9, M_e, M_-),x,y)$.

\section{Decidability proofs}\label{sec:dec}

This section is aimed at sketching the main ideas of the proof of Theorem~\ref{thm:main} while avoiding technicalities and details. We point to the appendix for full proofs.
Recall that we only consider closed semilinear invariants.

\begin{itemize}
\item We first normalize the Orbit instance, which amounts to putting matrix $A$ in Jordan normal form,
and eliminating some easy instances. This is described in Section~\ref{sec:normalization}.
\item We then eliminate some positive cases in Section~\ref{sec:positive}.
More precisely, we construct invariants whenever one of the three following conditions is realized:
\begin{itemize}
\item $A$ has an eigenvalue of modulus $>1$.
\item $A$ has an eigenvalue of modulus $<1$.
\item $A$ has a Jordan block of size $\geq 2$ with an eigenvalue that is a root of unity.
\end{itemize}
\item We are now left with an instance where all eigenvalues are of modulus 1 and not roots of unity,
    which is the most involved part of the paper. In this setting, we exhibit the minimal semilinear invariant $\I$ containing $x$.
    In particular, there exists a semilinear invariant (namely, $\I$) if and only if $y \notin \I$. This part is explained in Section~\ref{sec:minimal}.
\end{itemize}

\subsection{Normalization}\label{sec:normalization}
As a first step, recall that every matrix $A$ can be written in the
form $A = Q^{-1} J Q$, where $Q$ is invertible and $J$ is in Jordan
normal form.  The following lemma transfers semilinear invariants
through the change-of-basis matrix $Q$.

\begin{lemma}\label{lem:basis}
Let $\ell = (A,x,y)$ be an Orbit instance, and $Q$ an invertible matrix in $\A^{d \times d}$.
Construct the Orbit instance $\ell_Q = (Q A Q^{-1},Q x, Q y)$.
Then $\I$ is a semilinear invariant for $\ell_Q$
if, and only if, $Q^{-1} \I$ is a semilinear invariant for $\ell$.
\end{lemma}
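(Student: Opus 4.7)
The plan is to observe that an invertible matrix $Q\in\A^{d\times d}$ induces, by left multiplication, a bijection $\varphi_Q : \C^d\to\C^d$ that preserves the class of semilinear sets, and then to verify the three defining properties of a separating invariant by a direct conjugation argument.

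First I would establish the following preservation lemma (implicit in Section~\ref{sec:prel}): for any invertible $Q\in\A^{d\times d}$ and any semilinear set $\J\subseteq\C^d$, the set $Q\J$ is semilinear. This is routine: a half-space $\H=\{z\in\C^d : \sum \Re{z\overline u}\succ a\}$ is pulled back by $\varphi_Q^{-1}(w)=Qw$ to a set of the form $\{w : \sum \Re{(Qw)\overline u}\succ a\}=\{w : \sum \Re{w\overline{Q^*u}}\succ a\}$, which is again a half-space with coefficients in $\A$ because $Q^*u\in\A^d$. Taking finite unions and intersections yields the claim for semilinear sets, and applying it to $Q$ and $Q^{-1}$ shows that $\J$ is semilinear if and only if $Q\J$ is. In particular, $\I$ is semilinear iff $Q^{-1}\I$ is.

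Next I would verify the three conditions in parallel. Writing $\ell_Q=(A_Q,x_Q,y_Q)$ with $A_Q=QAQ^{-1}$, $x_Q=Qx$, $y_Q=Qy$, one has the equivalences
\begin{align*}
x_Q \in \I &\iff Qx\in\I \iff x\in Q^{-1}\I,\\
y_Q \notin \I &\iff Qy\notin\I \iff y\notin Q^{-1}\I,\\
A_Q\I \subseteq \I &\iff QAQ^{-1}\I \subseteq \I \iff A(Q^{-1}\I)\subseteq Q^{-1}\I,
\end{align*}
where the last line uses that $Q$ is a bijection, so $QS\subseteq T$ iff $S\subseteq Q^{-1}T$, applied with $S=AQ^{-1}\I$ and $T=\I$. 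Combining these three equivalences with the preservation of semilinearity gives exactly the statement: $\I$ is a semilinear separating invariant for $\ell_Q$ iff $Q^{-1}\I$ is one for $\ell$.

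There is no real obstacle here; the only point that merits care is that $Q$ have algebraic entries so that semilinearity with algebraic coefficients is preserved in both directions. Since we are assuming $Q\in\A^{d\times d}$, this is automatic, and Lemma~\ref{lem:basis} follows.
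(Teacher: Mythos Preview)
Your proof is correct and follows essentially the same approach as the paper: verify semilinearity is preserved under $Q^{-1}$, then check the three defining equivalences for $x$, $y$, and stability by a direct conjugation argument. You give more detail than the paper on why invertible algebraic linear maps preserve semilinearity (via the pullback of half-spaces using $Q^*u$), which the paper simply asserts, but the structure and content of the argument are the same.
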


\begin{proof}
First of all, $Q^{-1} \I$ is semilinear if, and only if, $\I$ is semilinear.
We have:
\begin{itemize}
    \item $Q A Q^{-1} \I \subseteq \I$ if, and only if, $A Q^{-1} \I \subseteq Q^{-1} \I$,
  	\item $Q x \in \I$ if, and only if, $x \in Q^{-1} \I$,
	\item $Q y \notin \I$, if, and only if, $y \notin Q^{-1} \I$.
\end{itemize} 
This concludes the proof.
\end{proof}

Thanks to Lemma~\ref{lem:basis}, we can reduce the problem of the
existence of semilinear invariants for Orbit instances to cases
in which the matrix is in Jordan normal form, \emph{i.e.}, is a diagonal
block matrix, where the blocks (called Jordan blocks) are of the form:
\[
\begin{bmatrix}
\lambda & 1            & \;     & \;  \\
\;        & \lambda    & \ddots & \;  \\
\;        & \;         & \ddots & 1   \\
\;        & \;         & \;     & \lambda       
\end{bmatrix}
\]
Note that this transformation can be achieved in polynomial time~\cite{Cai00,CLZ00}. 
Formally, a Jordan block is a matrix $\lambda I + N$ with $\lambda \in
\C$, $I$ the identity matrix and $N$ the matrix with $1$'s on the
upper diagonal, and $0$'s everywhere else.  The number $\lambda$ is an
eigenvalue of $A$. We will use notation $\Jord{d}{\lambda}$ for the Jordan block of size $d$ with eigenvalue $\lambda$. A Jordan block of dimension one is called
diagonal, and $A$ is diagonalisable if, and only if, all Jordan blocks
are diagonal.

\vskip1em
The $d$ dimensions of the matrix $A$ are indexed by pairs $(J,k)$,
where $J$ ranges over the Jordan blocks and $k \in
\set{1,\ldots,d(J)}$ where $d(J)$ is the dimension of the Jordan
block $J$.  For instance, if the matrix $A$ has two Jordan blocks,
$J_1$ of dimension $1$ and $J_2$ of dimension $2$, then the three
dimensions of $A$ are $(J_1,1)$ (corresponding to the Jordan block
$J_1$) and $(J_2,1),(J_2,2)$ (corresponding to the Jordan block
$J_2$).

For a point $v$ and a subset $S$ of $\set{1,\ldots,d}$, let $v_S$ be the projection of $v$ on the dimensions in $S$,
and extend this notation to matrices.
For instance, $v_J$ is the point corresponding to the dimensions of the Jordan block $J$,
and $v_{J,>k}$ is projected on the coordinates of the Jordan block $J$ whose index is greater than $k$.
We write $\compl S$ for the coordinates which are not in $S$.

\medskip
There are a few degenerate cases which we handle now.
We say that an Orbit instance $\ell = (A,x,y)$ in Jordan normal form is normalized if:
\begin{itemize}
	\item There is no Jordan block associated with the eigenvalue $0$, or equivalently $A$ is invertible.
	\item For each Jordan block $J$, the last coordinate of the point $x_J$ is not zero,
	\textit{i.e.} $x_{J,d(J)} \neq 0$.
	\item There is no diagonal Jordan block with an eigenvalue which is a root of unity,
	\item Any Jordan block $J$ with an eigenvalue of modulus $<1$ has $y_J \neq 0.$
\end{itemize}

\begin{lemma}\label{lem:reduc}
The existence of semilinear invariants for Orbit instances reduces to the same problem for normalized Orbit instances in Jordan normal form.
\end{lemma}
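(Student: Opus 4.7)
The strategy is to apply Lemma~\ref{lem:basis} to put $A$ into Jordan normal form, and then enforce each of the four clauses of the normalization one at a time. At each step, the current Orbit instance will either admit an explicit closed semilinear separating invariant (in which case the existence question has a positive answer and we are done) or be equivalent to an Orbit instance of strictly smaller dimension, or of the same dimension but with strictly fewer violations of the normalization conditions. A lexicographic well-founded measure then guarantees termination at a normalized instance. The clauses can be handled in the order in which they are listed, so that enforcing a later clause does not break an earlier one.

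For clause~1 (invertibility), decompose $\C^d = V_0 \oplus V_{\neq 0}$ according to the generalized zero and nonzero eigenspaces of $A$; both are coordinate subspaces, hence semilinear, and $A$-invariant. Pick $N$ larger than the size of any Jordan block with eigenvalue $0$, so that $A^N$ annihilates $V_0$ and $A^N x \in V_{\neq 0}$. If $y \notin V_{\neq 0}$, then $\{x, Ax, \ldots, A^{N-1} x\} \cup V_{\neq 0}$ is a closed semilinear separating invariant; otherwise, via the mutually inverse maps $\I \mapsto \I \cap V_{\neq 0}$ and $\J \mapsto \{x, \ldots, A^{N-1} x\} \cup \J$, the existence question transfers to the smaller-dimensional instance $(A|_{V_{\neq 0}}, A^N x, y)$ on which $A$ is invertible. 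For clause~2, if $x_{J, d(J)} = 0$ the orbit lies in the $A$-invariant hyperplane $H = \{z : z_{J, d(J)} = 0\}$ (since $(Az)_{J, d(J)} = \lambda_J z_{J, d(J)}$); either $y \notin H$, so $H$ itself is a separating invariant, or we drop coordinate $(J, d(J))$, shrinking $J$ to a Jordan block of size $d(J) - 1$ with the same eigenvalue.

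For clause~3 (no diagonal root-of-unity block), let $J$ be $1 \times 1$ with eigenvalue $\lambda$ a primitive $k$-th root of unity. The $J$-coordinate of the orbit takes only the $k$ values $c_i := \lambda^i x_J$, so if $y_J \notin \{c_0, \ldots, c_{k-1}\}$ the union $\bigcup_{i < k} \{z : z_J = c_i\}$ of parallel hyperplanes is a closed semilinear separating invariant. Otherwise $y_J = c_j$ for a unique $j$, and existence for $(A, x, y)$ reduces to existence for the $(d-1)$-dimensional instance $(A^k|_{\perp J}, (A^j x)|_{\perp J}, y|_{\perp J})$: the forward direction slices $\I$ along $\{z_J = c_j\}$ and projects out the $J$-coordinate; the backward direction lifts a reduced invariant $\J$ to $\I := \bigcup_{i < k} A^{i-j}\bigl(\{c_j\} \times \J\bigr)$, which is closed semilinear, contains $x$, avoids $y$ (by the non-triviality of the $k$-th root of unity $\lambda$), and is $A$-invariant thanks to the invertibility secured by clause~1.

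For clause~4 (nonvanishing $y_J$ on contracting blocks), suppose $|\lambda_J| < 1$ and $y_J = 0$. Since $(A^n x)_J = J^n x_J \to 0$, the orbit accumulates on the hyperplane $\{z_J = 0\}$, which contains $y$; combining the closedness of invariants with the contractivity of $A|_J$ and with Lemmas~\ref{lem:projection} and~\ref{lem:section}, one reduces to a lower-dimensional instance on the complement of $J$, together with a side condition testing whether $y$ lies in the closure of the orbit. I expect this clause to be the main obstacle, because a naive projection does not preserve equivalence: a separating invariant for $(A, x, y)$ may exploit the contracting coordinate even when no invariant exists for $(A|_{\perp J}, x|_{\perp J}, y|_{\perp J})$ on its own, so the reduction must combine closedness and contractivity in a controlled way—producing, in the separating case, an explicit polyhedral ``staircase'' invariant in the spirit of Example~\ref{ex:1} rather than bluntly projecting out the block $J$.
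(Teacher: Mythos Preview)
Your treatment of clauses~1--3 is correct and essentially matches the paper's proof in Appendix~\ref{appendix:reduc}. The gap is clause~4, which you yourself flag as ``the main obstacle'' without actually closing it.

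Here is what the paper does. First one checks whether the projected orbit $(A_{\compl J}^n x_{\compl J})_n$ hits $y_{\compl J}$ infinitely often; if so, $y$ lies in the closure of the full orbit and no closed invariant exists. Otherwise one reduces to $\ell' = (A_{\compl J}, A_{\compl J}^{n_0} x_{\compl J}, y_{\compl J})$ for a suitable $n_0$. The forward direction (lift an invariant $\I'$ for $\ell'$ to one for $\ell$) is the easy one: take $\{x,\ldots,A^{n_0-1}x\} \cup \{z : z_{\compl J} \in \I'\}$. The hard direction is the converse: given a closed separating invariant $\I$ for $\ell$, produce one for $\ell'$. For this the paper invokes Lemma~\ref{lem:existssmallinv}, which constructs, for any $\varepsilon > 0$, a closed semilinear set $P \subseteq B(0,\varepsilon) \subseteq \C^{d(J)}$ that is stable under the contracting block $J$ and contains a smaller ball $B(0,\varepsilon')$. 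One then takes $\I' = \pi_{\compl J}\bigl(\I \cap \{z : z_J \in P\}\bigr)$; choosing $\varepsilon < \tfrac{1}{2} d(y,\I)$ guarantees $y_{\compl J} \notin \I'$, while $B(0,\varepsilon') \subseteq P$ guarantees that the orbit eventually enters the slice.

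Your proposal names the right ingredients (closedness, contractivity, sections) but never supplies this construction, and the suggested ``staircase invariant in the spirit of Example~\ref{ex:1}'' points the wrong way on two counts: Example~\ref{ex:1} concerns an \emph{expanding} eigenvalue, and in any case building an invariant for $\ell$ out of one for $\ell'$ is the easy direction. The genuinely missing idea is the small-stable-set lemma, which is exactly what lets you pass from a separating invariant on the full space to one on $\compl J$.
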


Lemma~\ref{lem:reduc} is proved in Appendix~\ref{appendix:reduc}.

\subsection{Positive cases}\label{sec:positive}
Many Orbit instances present a divergence which we can exploit to construct a semilinear invariant. Such behaviours are easily identified once the matrix is in Jordan Normal Form, as properties of its Jordan blocks. We isolate three such cases.
\begin{itemize}
\item If there is an eigenvalue of modulus $>1$. Call $J$ its Jordan block. Projecting to the last coordinate of $J$ the orbit of $x$ diverges to $\infty$ in modulus (see Example~\ref{ex:1}). A long enough ``initial segment'' $\{x, Ax, \dots, A^k x\}$ together with the complement of its convex hull (on the last coordinate of $J$) constitutes a semilinear invariant. See Appendix~\ref{appendix:more_one} for details.

\item If there is an eigenvalue of modulus $<1$ in block $J$, the situation is quite similar with a convergence towards 0. However, the construction we give is more involved, the reason being that we may not just concentrate on the last nonzero coordinate $x_{J,l}$ of $x_J$, since $y_{J,l}$ may very well be 0, which belongs to the adherence of the orbit on this coordinate. Yet on the full block, $y_J \neq 0$. We show how to construct, for any $0 < \varepsilon$, a semilinear invariant $\I$ such that $B(0,\varepsilon') \subseteq \I \subseteq B(0, \varepsilon)$ for some $\varepsilon'>0$. Picking $\varepsilon$ small enough we make sure that $y \notin \I$, and then $\{x,Ax, \dots, A^k x\} \cup \I$ is a semilinear invariant if $k$ is large enough so that $|| A^{k}x || \leq \varepsilon'$. See Appendix~\ref{appendix:less_one} for more details.

\item Finally, if there is an eigenvalue which is a root of unity, say $\lambda^n=1$, on a Jordan block $J$ of size at least 2 (that is, a non diagonal block), then penultimate coordinate on $J$ of the orbit goes to $\infty$ in modulus. In this case, the orbit on this coordinate is contained in a union of $n$ half-lines which we cut far enough away from 0 and add an initial segment to build a semilinear invariant. See Appendix~\ref{appendix:mod_one_rou_not_diag} for details.
\end{itemize}

Note that in each of these cases, we concentrate on the corresponding (stable) eigenspace, construct a separating semilinear invariant for this restriction of the problem, and extend it to the full space by allowing any value on other coordinates.

\subsection{Minimal invariants}\label{sec:minimal}
We have now reduced to an instance where all eigenvalues have modulus 1 and are not roots of unity. Intuitively, in this setting, semilinear invariants fail, as they are not precise enough to exploit subtle multiplicative relations that may hold among eigenvalues. 
However, it may be the case that some coarse information in the input can still be stabilised by an semilinear invariant, for instance if two synchronised blocks are exactly identical (see Examples~\ref{ex:4} and \ref{ex:5} for more elaborate cases).

We start by identifying exactly where semilinear invariants fail. Call two eigenvalues equivalent if their quotient is a root of unity (that is, they have a multiplicative relationship of degree 1). We show that whenever no two different eigenvalues are even non-equivalent, the only stable semilinear sets are trivial. As a consequence, computing the minimal semilinear invariant in this setting is easy, as it is basically the whole space (except where $x$ is 0). However, this lower bound (non-existence of semilinear invariant) constitutes the most technically involved part. Our proof is inductive with as base case the diagonal case, where it makes crucial use of the Skolem-Mahler-Lech theorem. This is the subject of Appendix~\ref{subsec:mod_one_not_rou_not_eq}.

When the matrix has several equivalent eigenvalues, we show how to iteratively reduce the dimension in order to eventually fall into the previous scenario. Rougly speaking, if $A$ is comprised of two identical blocks $B$, we show that it suffices to compute a minimal invariant $\I_B$ for $B$, since $\{z \mid \tilde z_1 \in \I_B \text{ and } \tilde z_2 = \tilde z_1\}$ (with obvious notations) is a minimal invariant for $A$. This is achieved, by first assuming that all equivalent eigenvalues are in fact equal and then easily reducing to this case by considering a large enough iterations of $A$, in Appendix~\ref{subsec:mod_one_not_rou_not_eq}.

\bibliography{biblio}

\appendix

\section{Proof of Theorem~\ref{th:closed-monniaux-robust}}\label{appendix:undec}
We now turn to the second proof without the use of guards.
The idea is similar, however, the reduction is far more involved as the use of a guard is now replaced by the test of reachability of additional variables.

We reduce an instance of the $\omega$-PCP problem.
Let $\{(u^{(1)},v^{(1)}),\dots, (u^{(9)},v^{(9)})\}$ be an instance of the $\omega$-PCP problem.  We denote $|u^{(i)}|={n_i}$ and $|v^{(i)}|={m_i}$ and
we build the matrices $M_1,\dots M_9, M^1_m, M^2_m,$ $M_x, M_e, M_-, M_p,M_d$ of dimension 21 where:


\begin{itemize}
        \item for $i=1\dots 9$, $M_i[1,1]=1, M_i[1,2]=\enc{u^{(i)}}=(u^{(i)}_1 + u^{(i)}_2/4 + \dots + u^{(i)}_{n_i}/4^{{n_i}-1})$, $M_i[1,3]=-\enc{v^{(i)}}$, $M_i[2,2]=1/4^{n_i}$, $M_i[3,3]=1/4^{m_i}$, $M_i[4,4]=1$, $M_i[4,7]=2$, $M_i[5,5]=1$,  $M_i[6,6]=4^{n_i + m_i}$ $M_i[7,7]=1$ and all other values are set to 0.

        \item $M^1_m[8,1]=1$, $M^1_m[8,2]=-4=M^1_m[8,3]$,
        $M^1_m[9,1]=1$, $M^1_m[9,2]=-4=M^1_m[9,3]$, $M^1_m[9,5]=-1$,
        $M^1_m[10,1]=1$, $M^1_m[11,4]=1$, $M^1_m[12,1]=1$,  $M^1_m[13,6]=1$, $M^1_m[14,7]=1$ and all other values are set to 0.

        \item $M^2_m[8,1]=-1$, $M^2_m[8,2]=-4=M^2_m[8,3]$, $M^2_m[8,5]=-1$,
        $M^2_m[9,1]=-1$, $M^2_m[9,2]=-4=M^2_m[9,3]$, $M^2_m[9,5]=-1$,
        $M^2_m[10,1]=-1$, $M^2_m[11,4]=1$, $M^2_m[12,1]=-1$,  $M^2_m[13,6]=1$, $M^2_m[14,7]=1$ and all other values are set to 0.

        \item  for $i=8\dots 14$, $M_x[i,i]=1$, $M_x[8,9]=M_x[10,12]=1$, $M_x[13,14]=-1$ and all other values are set to 0.

        \item $M_e[15,8]=M_e[17,10]=M_e[18,11]=M_e[20,13]=M_e[21,14]=1$ and all other values are set to 0.

        \item $M_-[18,21]=-2$, for $i=15\dots 21$, $M_-[i,i]=1$ and all other values are set to 0.

        \item $M_p[15,21]=1$, for $i=15\dots 21$, $M_p[i,i]=1$ and all other values are set to 0.

        \item $M_d[17,21]=-1$, for $i=15\dots 21$, $M_d[i,i]=1$ and all other values are set to 0.

\end{itemize}

Let us explain informally the effects of the different matrices. 
The dimensions can be separated in three blocs of 7 each. The seven variables associated to the bloc number $i\in \{1,2,3\}$ are called $(s^i,c^i,d^i,n^i,r^i,k^i,a^i)$. 
As a general overview, the first bloc is used to apply the matrices $M_i$ as in the previous proof (with the guards), the second bloc ensures every value is an integer allowing us, in the third bloc, to use reachability tests in order to verify that the guard of the previous proof holds.

\begin{itemize}
        \item The matrices $M_i$ only affects the first bloc of variables. On these variables, the effect of these matrices is similar to the effect of the matrices $\tilde{M_i}$ of the extended sketch of proof of the Theorem~\ref{th:closed-monniaux-robust}. The one difference comes from the additional variables $r^1$, which is not used but kept for symmetry with the other blocs, and $k^1$ which is increased strongly by $M_i$ in a way ensuring that $k^1 s^1, k^1c^1$ and $k^1 d^1$ are integers.

        \item The matrices $M^1_m$ and $M^2_m$ modifies the variables of the bloc 1 and transfer them to the second bloc. Precisely, we have that applying $M_m^1$ realise the following transformation
        \[
        (s^2,c^2,d^2,n^2,r^2,k^2,a^2) = (s^1-4(c^1+d^1),s^1-4(c^1+d^1) ,s^1,n^1,s^1,k^1,a^1)
        \]
        and the variables of the bloc 1 are set to 0. $M_m^2$ has a similar effect, replacing instances of $s^1$ by $-s^1$.

        \item The matrix $M_x$ affects the second bloc this way:
        \[
           M_x(s^2,c^2,d^2,n^2,r^2,k^2,a^2) = (s^2+c^2,c^2,d^2+r^2,n^2,r^2,k^2-1,a^2).
        \]
        Remark that, from the values 
        $(s^2,s^2,d^2,n^2,d^2,k^2,a^2)$ reached after firing one of the matrix $M_m^i$, 
        using $M_x$ successively $k^2$ times reaches the values 
        $(k^2s^2,s^2,k^2d^2,n^2,d^2,0,a^2)$. This ensures the first and third variables are now integers.

        \item $M_e$ moves the variable to the third bloc while putting to 0 two variables that are not useful any more $c^3$ and $r^3$ (remark thus than one could simply use matrices of dimension 19 instead of 21. We kept 21 to simplify the representation). In other words, after applying $M_e$ we have
        \[
        (s^3,c^3,d^3,n^3,r^3,k^3,a^3) = (s^2,0,d^2,n^2,0,k^2,a^2)
        \]
        and all other variables are set to 0.

        \item The matrices $M_p$, $M_d$ and $M_-$ respectively increases by 1, decreases by 1 and decreases by 2 the value of $s^3$, $d^3$ and $n^3$.
\end{itemize}

The use of these three blocs divides the firing of a sequence of matrices in three steps. 
The first step corresponds to the firing of the matrices $M_i$ which are the only ones keeping the values within the first dimension bloc. This first step is similar to what was done in the previous proof. 
Using one of the two matrices $M^i_m$, $i\in\{1,2\}$, starts the second step by moving the values to the second bloc (the use of two different matrices is required to represent and check the absolute value in the guard of the previous proof). From now on, the matrices $M_i$ have no effect on the system.
The goal of the second, and partially the third, step is to verify whether the guard of the previous proof held when the second step started. 
To do so, we first use the matrix $M_x$ in order to turn the first two variables into integers. If $M_x$ is not applied an adequate number of times, then the variable labelled by $k$ will never be equal to $0$, ensuring non-reachability of the target. 
We then use the matrix $M_e$ to start the third step. 
Assuming the multiplication was correctly done, then $s^3$ and $d^3$ are now integers. Moreover, 
$s^3<0$ and $d^3 >0$ iff the guard of the previous proof held when the first step ended. Thus reaching the target on all variables, except for $n^3$, is possible iff the guard was indeed respected. The case of the counter $n^3$ is then identical to the one of $n$ in the previous proof.

Let us formally show that there exists a separating invariant of the instance
$((M_1,\dots M_9, M^1_m, M^2_m, M_x, M_e, M_-, M_p,M_d),x,y)$ with $x\in\{(0,1,1,0,1,1,1)\}\times\{0\}^{14} $ and $y\in\{0\}^{14}\times \{(-1,0,0,1,0,0,1)\}$ iff the $\omega$-PCP has no solution.

Assume first that the $\omega$-PCP does not possess a solution. There thus exists $n_0\in \nats$ such that any pair of words $(u^{(w)},v^{(w)})$ on $\{0,2\}$ generated from the $\omega$-PCP instance differs on one of the first $n_0$ letters.

We define the invariant $I=I_1\cup I_2 \cup I_3$ where

\begin{itemize}
        \item $I_1 = \{M_w x \mid w\in \{1,\dots,9\}^* \wedge |w|\leq n_0+1\}\cup  \{(s^1,c^1,d^1,n^1,r^1,k^1,1)\mid |s^1| \geq 4(c^1+d^1)\}\times\{0\}^{14}$.

        \item 
        $I_2 = \{M^j_xM_m^iM_w   x \mid i\in \{1,2\}\wedge j\in \N\wedge w\in \{1,\dots,9\}^* \wedge |w|\leq n_0+1\}\cup  
        \{0\}^{7}\times\{(s^2,c^2,d^2,n^2,r^2,k^2,1)\mid  k^2\leq -1 \vee (d^2< 0\wedge r^2< 0) \vee (s^2 \geq 0 \wedge c^2\geq 0)\}\times\{0\}^{7}$. \\
   $I_2$ is semilinear: while the first set is technically an infinite set of points, when $j$ is high enough, the variable $k^2$ becomes negative, ensuring that the point belongs to the second set of $I_2$ (which is clearly semilinear). Thus the first set only adds to $I_2$ a finite number of points. 
        It is not closed however. This is not important in this proof as we do not require this condition here. $\closure{I_2}$ could also be used, but would require some additional arguments.

        \item
        $I_3 = \{ M_p^{j_2} M_d^{j_3} M^{j_4}_-    M_e M^{j_1}_xM_m^iM_w  x \mid i\in \{1,2\}\wedge j_1,j_2,j_3,j_4\in \N\wedge w\in \{1,\dots,9\}^* \wedge |w|\leq n_0+1\}
                \cup  
        \{0\}^{14}\times\{(s^3,c^3,d^3,n^3,r^3,k^3,1)\mid k^3\leq 1 \vee n^3\leq 0 \vee s^3 \geq 0 \vee d^3< 0\}$.\\
        As for $I_2$, the first set represents an infinite number of points, but which all reach the second set after a bounded number of steps.

\end{itemize}

This invariant contains the initial configuration but not the target.
It is semilinear. Let us show that it is stable.

Let $h\in I$, we denote the variables within $h$ of the block of seven dimensions number $i\in \{1,2,3\}$ by
$(s^i,c^i,d^i,n^i,r^i,k^i,a^i)$.

\noindent $\bullet$ Case $h\in I_1$.
If $|s^1| \geq 4(c^1+d^1)$, then for $i\leq 9$, $M_ih$ satisfies the same property and for 
$j\in\{1,2\}$,  $M^j_m h\in \{0\}^{7}\times\{(s^2,c^2,d^2,n^2,r^2,k^2,1)\mid (d^2<0\wedge r^2<0) \vee (s^2 \geq 0 \wedge c^2\geq 0)\}\times\{0\}^{7} \subseteq I_2$. The other matrices do not modify 
$s^1, c^1$ and $d^1$ ensuring the property still holds.
If there exists $w\in \{1,\dots,9\}^*,|w|\leq n_0+1$ such that $h=M_w x$, then for $j\in\{1,2\}$,  $M^j_m h\in I_2$, if $|w|\leq n_0$, for all $i\leq 9$, $M_i h \in I_1$ and if $|w|= n_0+1$, then,as seen
in the proof of Theorem~\ref{th:closed-monniaux}, $|s^1| \geq 4(c^1+d^1)$. 
The other matrices do not modify the variables.

\noindent $\bullet$ Case $h\in I_2$.
Only two matrices can affect $h$ in meaningful ways: $M_x$ and $M_e$.
As $M_e$ transfer the values to the third block without affecting the positivity (or negativity) of
$s^2,d^2$ and $k^2$,  $M_e h\in I_3$.
As $M_x$ removes 1 from $k^2$ and adds $c^2$ to $s^2$ and $r^2$ to $d^2$, if $h$ satisfies $k^2\leq -1 \vee (d^2<0\wedge r^2<0) \vee (s^2 \geq 0 \wedge c^2\geq 0)$, then so does $M_x h$. 
Finally, if $h$ can be written in the form $M^j_xM_m^iM_w x$, then it is immediately also the case for 
$M_x h$.

\noindent $\bullet$ 
Case $h\in I_3$.
The matrices that affect the variables in meaningful ways are $M_p$, $M_d$ and $M_-$.
As they are commutative and respectively increases by 1, decreases by 1 and decreases by 2 the value of $s^3$, $d^3$ and $n^3$, we have that all $M_p h$, $M_d h$ and $M_- h$ are in $I_3$.

Thus $I$ is a separating invariant of the instance.

Conversely, assume the $\omega$-PCP possesses a solution $w\in \{1,\dots,9\}^\omega$. 
For $z\in\nats$, we denote $w\restriction_z$ the prefix of length $z$ of $w$ and $h_z= M_{w\restriction_z}x$. 
Then for all $z\in \nats$, $|s_z^1|<4(c_z^1+d_z^1)$ as showed in the proof of Theorem~\ref{th:closed-monniaux}. 
If $s_z^1\geq 0$ (resp. $s_z^1< 0$), triggering the sequence of matrices $M_e (M_x)^{k^1_z} M^1_m$ (resp. $M_e (M_x)^{k^1_z} M^2_m$) on $h_z$
one reaches a point $g_z \in \{0\}^{14}\times \{(s^3,0,d^3,2z,0,0,1)\}$ where 
$s^3$ is a strictly negative integer and $d^3$ is a positive integer.
Therefore, using the matrices $M_p$ and $M_d$ one can reach the point
$y_{2z}\in \{0\}^{14}\times \{(-1,0,0,2z,0,0,1)\}$.
As this is true for all $z\in \N$, any semilinear invariant $I$ containing all the points $y_{2z}$ contains a point $y_{m}$ with $m\in\N$ an odd integer. Therefore as $I$ is stable under $M_-$, it must contains the target and thus is not a separating invariant of $((M_1,\dots M_9, M^1_m, M^2_m, M_x, M_e, M_-, M_p,M_d),x,y)$.

This concludes the reduction.

\section{Proof of Lemma~\ref{lem:reduc}}\label{appendix:reduc}
Let $\ell = (A,x,y)$ be a non-reach instance in Jordan normal form.

\begin{itemize}
        \item Suppose $A$ is not invertible, we distinguish two cases.
                \begin{itemize}
                        \item If for some Jordan block $J$ associated with the eigenvalue $0$, we have that $y$ is not the zero vector,
                        \emph{i.e.}, $y_J \neq 0$,
                        then $\I = \set{x, Ax, \ldots, A^{d-1}x} \cup \set{z \in \C^d \mid z_J = 0}$ is a semilinear invariant.
                        Indeed, the Jordan block $J$ is nilpotent, so for any point $u$ and $n \ge d$, we have that $J^n u = 0$,
                        so in particular $(A^n x)_J = 0$.
                        Moreover, since by assumption $y$ is not reachable, it is not one of $A^n x$ for $n < d$, 
                        and $y_J \neq 0$, so $y \notin \I$.

                        \item Otherwise, let $J$ be a Jordan block associated with the eigenvalue $0$, such that $y_J = 0$.
                        Consider the Orbit instance $\ell' = (A_{\compl J},(A^d x)_{\compl J},y_{\compl J})$.
                        We claim that $\ell$ admits a semilinear invariant if, and only if, $\ell'$ does.

                        Let $\I$ be a semilinear invariant for $\ell$.
                        Build $\I' = \set{z \in \C^{\compl J} \mid (z,0) \in \I}$.
                        We argue that $\I'$ is a semilinear invariant for~$\ell'$.
                        Indeed, $(A^d x)_{\compl J} \in \I'$ since $A^d x \in \I$
                        and $(A^dx)_J = 0$, because the Jordan block $J$ is nilpotent.
                        The stability of $\I'$ under $A_{\compl J}$ is clear, and $y_{\compl J}$ is not in $\I'$
                        because $y_J = 0$, so if $y_{\compl J}$ would be in $\I'$ this would imply that $y$ is in $\I$.

                        Conversely, let $\I'$ be a semilinear invariant for $\ell'$, 
                        let $\I = \I' \times \C^J$, then $\set{x,Ax,\ldots,A^{d-1}x} \cup \I$ is a semilinear invariant for $\ell$.

                        We reduced the existence of semilinear invariants from $\ell$ to $\ell'$, removing one Jordan block $J$
                        such that $y_J = 0$. Proceeding this way for each such Jordan block,
                        we reduce to the case where the matrix is invertible.
                \end{itemize}

        \item Suppose $A$ contains a Jordan block $J$ such that $x_{J,d(J)} = 0$.  We distinguish two cases.
                \begin{itemize}
                        \item If for some Jordan block $J$ we have $x_{J,d(J)} = 0$ and $y_{J,d(J)} \neq 0$, 
                        then the set $\I = \set{z \in \C^d \mid z_{J,d(J)} = 0}$ is a semilinear invariant for $\ell$.

                        \item Otherwise, let $J$ be a Jordan block such that $x_{J,d(J)} = y_{J,d(J)} = 0$.
                Write $p$ for the dimension $(J,d(J))$.
                        Consider the Orbit instance $\ell_p = (A_{p^c},x_{p^c},y_{p^c})$,
                        we claim that $\ell$ admits a semilinear invariant if, and only if, $\ell_p$ does.

                        Let $\I$ be a semilinear invariant for $\ell$.
                        Build $\I_p = \set{z \in \C^{p^c} \mid (z,0) \in \I}$, then $\I_p$ is a semilinear invariant for $\ell_p$.
                        Conversely, let $\I_p$ be a semilinear invariant for $\ell_p$,
                        let $\I = \set{z \in \C^d \mid z_p = 0 \text{ and } z_{p^c} \in \I_p}$,
                        then $\I$ is a semilinear invariant for $\ell$.
                \end{itemize}
                We reduced the existence of semilinear invariants from $\ell$ to $\ell_p$, removing the last dimension in a Jordan block $J$
                such that $x_{J,d(J)} = 0$. Proceeding this way for each such Jordan block,
                we reduce to the case where there are no such Jordan blocks.
        \item Suppose $A$ has a diagonal Jordan block $J$, that is, $d(J)=1$, with eigenvalue $\lambda$ with $\lambda^n=1$. We set $n$ minimal such that $\lambda^n = 1$ and distinguish two cases
        \begin{itemize}
                \item If for every $k \leq n-1$, $y_J \neq \lambda^k x_J$, then the set $\I = \bigcup_{k=0}^{n-1} \{z \mid z_J = \lambda^k x_J\}$ is a semilinear invariant for $\ell$.
                \item Otherwise, let $k \leq n-1$ be such that $y_J = \lambda^k x_J$. We claim that there exists an invariant for $\ell$ if and only if there exists an invariant for $\ell' = (A_{\compl J}^n, A_{\compl J}^k x_{\compl J}, y_{\compl J})$. Let $\I'$ be an invariant for $\ell'$. For $k' \in \{0, \dots, n-1\},$ we put
                \[
                \I_{k'} = \{z \mid z_J = \lambda^{k+k'} x_J \text{ and } z_{\compl J} \in A^{k'}_{\compl J} \I'\},
                \]
                $\I= \{x, Ax, \dots, A^{k-1} x\} \cup \bigcup_{k' = 0}^{n-1} \I_{k'}$, and prove that the semilinear set $\I$ is an invariant for $\ell$. It is clear that $x \in \I$. Moreover, $y$ does not belong to $\I$: indeed, $y \notin \{x, Ax, \dots, A^{k-1}x\}$ and $y \notin \bigcup_{k' = 1}^{n-1} \I_{k'}$ because $y_J = \lambda^k x_J \neq \lambda^{k+k'} x_J$ for any $k' \in \{1,\dots,n-1\}$ (we assume $x_J \neq 0$ thanks to a previous reduction), and $y \notin \I_0$ since $\I'$ is an invariant for $\ell'$. Finally, $\I$ is stable for $A$ since $A^k x \in \I_0$, $A \I_{k'} = \I_{k'+1}$ if $k < n-1$ and $A \I_{n-1} = A^n \I_0 \subseteq \I_0$ since $\lambda^n=1$ and $A_{\compl J}^n \I' \subseteq \I'$.

                Conversely, let $\I$ be an invariant for $\ell$. We let $\I'$ be the projection on $\compl J $ of $A^k \I \cap \{z \mid z_J = \lambda^k x_J\}$, and claim it is an invariant for $\ell'$. Indeed, quite clearly $A_{\compl J}^k x_{\compl J} \in \I'$ and $\I'$ is stable for $A_{\compl J}^n$. Now, if $y_{\compl J} \in \I'$ then it must be that $y \in \I$, a contradiction.
        \end{itemize}
        \item Let $J$ be a Jordan block of $A$ with eigenvalue $\lambda<1$ and such that $y_J=0$. If there are infinitely many integers $n$ such that $A_{\compl J}^n x_{\compl J} = y_{\compl J},$ then $y \in \closure{\{A^nx, n \in \N\}}$, so there exists no closed invariant for $\ell$. Otherwise, we let $n_0 \in \N$ be such that $y_{\compl J} \notin \{A_{\compl J}^n x_{\compl J}, n \geq n_0\}$, and claim that $\ell$ is equivalent to $\ell' = (A_{\compl J}, A_{\compl J}^{n_0} x_{\compl J}, y_{\compl J})$. Let $\I'$ be a semilinear invariant for $\ell'$. Then $\I = \{x,Ax,\dots, A^{n_0 -1} x\} \cup \{z \mid z_{\compl J} \in \I'\}$ is an invariant for $\ell$. Conversely, let $\I$ be an invariant for $\ell$. Let $\delta = \frac 1 2 d(y,\I) >0,$ where the distance is defined according to the infinity norm on $\C^d$. Using Lemma~\ref{lem:existssmallinv} from section~\ref{appendix:less_one}, we construct a semilinear $P \subseteq \C^J$ which is stable for $A_J,$ contains $(A^n x)_J$ for some $n$, and is included in $B(0,\delta)$. Let $\I'$ be the projection of $\{z \mid z \in \I \text{ and } z_J \in P\}$ on $\compl J$. Then $A_{\compl J}^n x_{\compl J} \in \I'$ and $\I'$ is stable for $A_{\compl J}$. Assume that $y_{\compl J} \in \I'$, that is, there exists $\tilde y \in P$ such that $y_1=(y_{\compl J},\tilde y) \in \I$. Then $d(y,\I) \leq d(y,y_1) = ||\tilde y|| \leq \delta/2$ which is a contradiction, so $y_{\compl J} \notin \I'$. Finally, $\{A_{\compl J}^{n_0} x_{\compl J}, A_{\compl J}^{n_0 +1} x_{\compl J}, \dots, A_{\compl J}^{n - 1} x_{\compl J}\} \cup \I'$ is an invariant for $\ell'$ which concludes the proof.
\end{itemize}

\section{Some eigenvalue has modulus greater than 1}\label{appendix:more_one}

We start with a simple lemma.

\begin{lemma}
\label{lem:sequence_convex_increasing}
Let $\lambda$ be a complex non-real number of modulus greater than $1$ and $x$ be a non-zero complex number.
Then the sequence of polyhedra in $\C$ $\left(\convex{\set{\lambda^i x \mid i \in \set{0,\ldots,n}}}\right)_{n \in \N}$ 
is strictly increasing and its limit is $\C$.
\end{lemma}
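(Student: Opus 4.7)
\medskip
\noindent\textbf{Proof plan.} Write $\lambda = r e^{i\theta}$ with $r > 1$ and $\theta \not\equiv 0 \pmod{\pi}$ (since $\lambda$ is non-real). Both parts of the lemma will follow from elementary geometric arguments about the spiral $\{\lambda^i x\}_{i \in \N}$.

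For the strict increase, I would observe that every point in $\convex{\{\lambda^i x : 0 \leq i \leq n\}}$ has modulus at most $\max_{0 \leq i \leq n}|\lambda^i x| = r^n |x|$, whereas $|\lambda^{n+1} x| = r^{n+1} |x| > r^n |x|$. Hence $\lambda^{n+1} x$ lies strictly outside $\convex{\{\lambda^i x : 0 \leq i \leq n\}}$, which forces the strict inclusion.

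For the limit, the plan is first to locate three iterates whose convex hull contains the origin in its interior, and then to argue that iterating further blows up this triangle to cover any prescribed point of $\C$. More precisely, the set of arguments $\{n\theta \bmod 2\pi : n \in \N\}$ is either dense in $[0, 2\pi)$ (if $\theta/\pi$ is irrational) or, if $\theta = 2\pi p/q$ in lowest terms, a set of $q$ equally spaced values; in the latter case the hypothesis $\theta \not\equiv 0 \pmod{\pi}$ forces $q \geq 3$. In either case I can select indices $n_1 < n_2 < n_3$ such that the directions $\arg(\lambda^{n_j} x)$ are spread around the circle in a way that places $0$ in the interior of $T := \convex{\{\lambda^{n_1} x, \lambda^{n_2} x, \lambda^{n_3} x\}}$. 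Fix $\varepsilon > 0$ with $B(0,\varepsilon) \subseteq T$.

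Now given any $z \in \C$, I multiply through by $\lambda^N$: the triangle $\lambda^N T = \convex{\{\lambda^{n_j + N} x : j = 1, 2, 3\}}$ still contains $0$ in its interior and in fact contains the ball $B(0, r^N \varepsilon)$. Choosing $N$ large enough so that $|z| < r^N \varepsilon$ gives $z \in \lambda^N T \subseteq \convex{\{\lambda^i x : 0 \leq i \leq n_3 + N\}}$, proving that the union of the polyhedra is all of $\C$.

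The main obstacle is the selection of the three well-spread iterates, and specifically ruling out the degenerate cases $q = 1$ and $q = 2$; these are exactly the ``$\lambda$ real'' cases excluded by hypothesis, so the argument goes through cleanly.
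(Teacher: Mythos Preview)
Your strict-increase argument is exactly the paper's. Your plan for the limit is sound in outline and close to the paper's, but one step fails as written: the claim that \emph{three} iterates can always be chosen with $0$ in the interior of their convex hull is false when $q=4$. Take $\lambda = r i$ (so $\theta = \pi/2$): the iterates lie on the four rays from the origin at angles $\phi,\ \phi+\pi/2,\ \phi+\pi,\ \phi+3\pi/2$, and any three of these four rays necessarily contain an antipodal pair. Hence for every choice of $n_1<n_2<n_3$ the origin sits on an edge of the triangle $T$ (the segment joining the two antipodal iterates), never in its interior, and no $\varepsilon>0$ with $B(0,\varepsilon)\subseteq T$ exists. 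The fix is immediate: drop the insistence on a triangle and take instead the convex hull of the first $n_0+1$ iterates for any $n_0$ with $n_0|\theta|>2\pi$; this polygon contains $0$ in its interior, and your scaling-by-$\lambda^N$ argument then goes through verbatim.

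For comparison, the paper does not scale a fixed polygon. It shows directly that $B(0,|\lambda|^n|x|)\subseteq \convex{\{\lambda^i x:0\le i\le n+n_0\}}$ by placing each target point $z$ in the triangle $\convex{\{0,\lambda^{n+p}x,\lambda^{n+p+1}x\}}$, where $p<n_0$ is chosen so that $\arg z$ lies between the arguments of the two consecutive iterates. Your scaling idea is arguably more transparent once a polygon containing $0$ in its interior is secured; both arguments are elementary.
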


\begin{proof}
To see that the sequence is strictly increasing, observe that for all $n$ in $\N$, we have
\[
\convex{\set{\lambda^i x \mid i \in \set{0,\ldots,n}}} \subseteq \closure{B}(0,|\lambda|^n \cdot |x|).
\]
It follows that $\lambda^{n+1} x$ is not in $\convex{\set{\lambda^i x \mid i \in \set{0,\ldots,n}}}$.

We now prove that its limit is $\C$.
We write $x = |x| e^{i \theta}$ and $\lambda = |\lambda| e^{i \alpha}$, with $\theta,\alpha$ in $[0,2\pi)$.
Since $\lambda$ is not a real number, $\alpha$ is not $0$.
Let $n_0$ in $\N$ such that $n_0 \alpha > 2 \pi$. Observe that $0$ is in $\convex{\set{\lambda^i x \mid i \in \set{0,\ldots,n_0}}}$.

We claim that for all $n \in \N$, we have 
\[
B(0,|\lambda|^n \cdot |x|) \subseteq \convex{\set{\lambda^i x \mid i \in \set{0,\ldots,n + n_0}}}.
\]
Let $z = |z| e^{i \beta}$ such that $|z| < |\lambda|^{n} \cdot |x|$.
Let $p$ in $\set{0,\ldots,n_0 - 1}$ such that $\beta$ is in $[\theta + (n+p) \alpha, \theta + (n+p+1) \alpha)$.
Then $z$ is in $\convex{\set{0, \lambda^{n+p} x, \lambda^{n+p+1} x}}$.

The claim follows, since the union of the balls $B(0,|\lambda|^n \cdot |x|)$ for $n\in\N$ is $\C$.
\end{proof}

\begin{theorem}
\label{thm:non_diag_more_one}
Let $\ell = (A,x,y)$ be a normalized Orbit instance in Jordan normal form.
Assume that $\ell$ is a non-reach instance.
If the matrix $A$ has an eigenvalue whose modulus is greater than $1$,
then there exists a semilinear invariant for $\ell$.
\end{theorem}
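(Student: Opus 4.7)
The plan is to exploit the divergent behaviour on the last coordinate of the Jordan block $J$ associated with the eigenvalue $\lambda$ of modulus $>1$, where $A$ acts simply as multiplication by $\lambda$. Set $p=(J,d(J))$ and $\pi=x_p$, which is nonzero by the normalisation hypothesis. Then $(A^n x)_p=\lambda^n\pi$, whose modulus tends to infinity. The overall strategy is to build a closed semilinear set $S\subseteq\C$ such that $\lambda S\subseteq S$, $\lambda^n\pi\in S$ for all sufficiently large $n$, and $y_p\notin S$. Given such $S$, I would define
\[
\I=\{x,Ax,\ldots,A^K x\}\cup\{z\in\C^d : z_p\in S\}
\]
for $K$ chosen so that $\lambda^n\pi\in S$ whenever $n>K$. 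Then $\I$ is closed and semilinear, contains $x$, is $A$-stable (the finite segment feeds into the $S$-part after step $K$, and the $S$-part is stable by construction), and excludes $y$ because $y\neq A^n x$ for any $n$ by non-reachability, and $y_p\notin S$.

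In the case $\lambda\notin\R$, I would apply Lemma~\ref{lem:sequence_convex_increasing} to the orbits of both $\pi$ and $\lambda\pi$: for $k$ large enough, the polytope $C_k:=\convex{\{\lambda^n\pi : 0\leq n\leq k\}}$ contains $y_p$ in its interior (using a small compactness argument: a finite set of points in a neighbourhood of $y_p$ is covered by some $C_m$, and then by convexity so is a ball around $y_p$), and $\pi\in\lambda C_k$. The latter yields $C_k\subseteq\lambda C_k$ by convexity, since trivially $\lambda^n\pi\in\lambda C_k$ for $1\leq n\leq k$. Taking $S=\C\setminus\interior{C_k}$ gives a closed semilinear set (a finite union of closed half-planes, one per facet of $C_k$). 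Stability follows from $\interior{C_k}\subseteq\interior{\lambda C_k}=\lambda\interior{C_k}$ (monotonicity of interior under inclusion), passed through complements. Finally, the bound $C_k\subseteq\closure B(0,|\lambda|^k|\pi|)$ from the proof of the lemma combined with $|\lambda^n\pi|=|\lambda|^n|\pi|>|\lambda|^k|\pi|$ for $n>k$ gives $\lambda^n\pi\notin C_k$, hence $\lambda^n\pi\in S$.

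In the case $\lambda\in\R$, the orbit lies on the real line $L=\R\pi\subseteq\C$. If $y_p\notin L$, I take $S=L$: this is a closed semilinear line, $\lambda$-invariant, and it already contains the full orbit, so no finite part is needed. If instead $y_p=t_y\pi\in L$ for some $t_y\in\R$, I pick any $M>|t_y|$ and set $S=\{t\pi : t\in\R,\ |t|\geq M\}$, a closed semilinear subset of $L$ consisting of two half-lines. Here $\lambda$-stability is immediate since $|\lambda t|=|\lambda|\cdot|t|\geq M$ when $|t|\geq M$, and $\lambda^n\pi\in S$ once $|\lambda|^n\geq M$, which gives a suitable $K$.

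The main obstacle is preserving closedness in the complex case: the naive complement $\C\setminus C_k$ of a closed polytope is open, not closed. This is resolved by passing to $\C\setminus\interior{C_k}$, which is closed and semilinear, while retaining both the stability property and the fact that $\lambda^{k+1}\pi$ lies in it. A secondary point to verify carefully is that $y_p$ can be placed in the \emph{interior} (not merely inside) of $C_k$ for $k$ large, which is why I prefer to invoke Lemma~\ref{lem:sequence_convex_increasing} together with the convex-covering argument sketched above rather than its bare statement.
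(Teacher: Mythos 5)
Your proposal is correct and follows essentially the same route as the paper: a finite initial segment of the orbit together with a $\lambda$-stable semilinear region on the last coordinate $p=(J,d(J))$ of the expanding Jordan block, where in the non-real case that region is the closed complement of the interior of the convex hull of finitely many orbit points, exactly as in the paper's argument via Lemma~\ref{lem:sequence_convex_increasing} (your derivation of stability from $C_k\subseteq\lambda C_k$ is the same reasoning as the paper's, phrased through interiors). The only divergence is the real-eigenvalue subcase, where you use the invariant line $\R x_p$ (or two half-lines on it) instead of the paper's threshold set $\set{z \mid |\Re{z_p}|+|\Im{z_p}|\geq 2|y_p|}$; your variant is equally valid and, incidentally, still works when $y_p=0$, a degenerate situation in which the paper's threshold becomes vacuous and would need a small adjustment.
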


On an intuitive level first: some coordinate of $(A^n x)_{n \in \N}$ diverges to infinity, 
so eventually gets larger in absolute value than the corresponding coordinate in $y$. 
This allows us to construct an invariant for $\ell$ by taking the first points and then all points having a large coordinate in the diverging dimension.  
For the invariant to be semilinear we consider the complement of the convex envelope of an initial segment of points.

\begin{proof}
Let $J$ be a Jordan block of $A$ with eigenvalue $\lambda$ of modulus $>1$. Since $\ell$ is non-trivial, we have $x_{J,d(J)} \neq 0$.
We distinguish two cases.
\begin{itemize}
	\item Suppose that $\lambda$ is a real number.
	
	For all $n \in \N$, we have $(A^n x)_{J,d(J)} = \lambda^n x_{J,d(J)}$, so it diverges to infinity in modulus.
	It follows that there exists $n_0$ in $\N$ such that $|(A^{n_0} x)_{J,d(J)}| > 2\sqrt{2} \cdot |y_{J,d(J)}|$.
	Let
\[
\I = \set{x, A x,\ldots, A^{n_0-1} x} \cup \set{z \in \C^d \mid |\Re{z_{J,d}}| + |\Im{z_{J,d}}| \geq 2|y_{J,d(J)}|}.
\]
	We argue that $\I$ is a semilinear invariant for $\ell$.
	The non-trivial point is that $\I$ is stable under $J$.
	First, $A^{n_0} x$ is in $\I$ because 
	\[|\Re{(A^{n_0} x)_{J,d(J)}}| + \Im{(A^{n_0} x)_{J,d(J)}} \ge \frac{1}{\sqrt{2}} \cdot |(A^{n_0} x)_{J,d(J)}| > 2 |y_{J,d(J)}|.\]
	Second, let $z \in \C^d$ such that $|\Re{z_{J,d(J)}}| + |\Im{z_{J,d(J)}}| \geq 2|y_d|$, we have that $(A z)_{J,d(J)} = \lambda z_{J,d(J)}$, so 
	$|\Re{(A z)_{J,d(J)}}| + |\Im{(A z)_{J,d(J)}}| = |\lambda| (|\Re{z_{J,d(J)}}| + |\Im{z_{J,d(J)}}|) > 2|y_{J,d(J)}|$,
	implying that $A z$ is in $\I$.
	Note that the previous equality holds because $\lambda$ is a real number.

	\item Suppose that $\lambda$ is not a real number.

	Consider the sequence $\left(\convex{\set{\lambda^i x_{J,d(J)} \mid i \in \set{1,\ldots,n}}}\right)_{n \in \N}$ of polyhedra in $\C$.
	Thanks to Lemma~\ref{lem:sequence_convex_increasing}, this sequence is strictly increasing and its limit is $\C$.
	Let $n_0$ in $\N$ such that $y_{J,d(J)}$ and $x_{J,d(J)}$ are both in the interior of $\convex{\set{\lambda^i x_{J,d(J)} \mid i \in \set{1,\ldots,n_0}}}$.
	Let us denote this convex set by $C$, and let
\[
\I = \set{x, A x,\ldots, A^{n_0} x} \cup \closure{\set{z \in \C^d \mid z_{J,d(J)} \notin C}}.
\]
	We argue that $\I$ is a semilinear invariant for $\ell$.
	The non-trivial point is that $\I$ is stable under $A$.

	We first need to prove that $A^{n_0 + 1} x$ is in $\I$.
	We have $(A^{n_0 + 1} x)_{J,d(J)} = \lambda^{n_0 + 1} x_{J,d(J)}$, which is not in $\closure{\convex{\set{\lambda^i x_{J,d(J)} \mid i \in \set{1,\ldots,n_0}}}}$, 
	because this sequence of polyhedra is strictly increasing.
	Thus $A^{n_0 + 1} x$ is in $\I$.

	Finally, let $z \in \C^d$ such that $z_{J,d(J)} \notin \closure C$, we show that $A z$ is in $\I$.
	We have $(A z)_{J,d(J)} = \lambda z_{J,d(J)}$.
	Assume towards contradiction that $(A z)_{J,d(J)}$ is in $C$, so $\lambda z_{J,d(J)}$ is a convex combination of 
	$\set{\lambda^i x_{J,d(J)} \mid i \in \set{1,\ldots,n_0}}$.
	This implies that $z_{J,d(J)}$ is a convex combination of $\set{\lambda^{i-1} x_{J,d(J)} \mid i \in \set{1,\ldots,n_0}}$.
	Since $x_{J,d(J)}$ is in $C$, this implies that $z_{J,d(J)}$ is in $C$, which is a contradiction.
	Thus $A z$ is in $\I$, and $\I$ is a semilinear invariant for $\ell$.
\end{itemize}
\end{proof}

\section{Some eigenvalue has modulus less than 1}\label{appendix:less_one}

We start with a simple lemma.

\begin{lemma}
\label{lem:sequence_convex_decreasing}
Let $\lambda$ be a complex non-real number of modulus less than $1$ and $x$ be a non-zero complex number.
Then the sequence $\left(\convex{\set{\lambda^i x \mid i \in \set{0,\ldots,n}}}\right)_{n \in \N}$ of polyhedra in $\C$ is ultimately constant,
and its limit contains an open neighbourhood of $0$.
\end{lemma}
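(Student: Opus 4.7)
The plan is to find a finite initial segment of the orbit whose convex hull already contains an open neighbourhood of $0$, and then use the contracting nature of multiplication by $\lambda$ to show that all later points fall into this neighbourhood, so the sequence of polyhedra stops growing.

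\smallskip
\textbf{Step 1.} First I would exhibit indices $k_1, k_2, k_3 \in \N$ such that the origin lies in the interior of the triangle $T := \convex{\set{\lambda^{k_1} x, \lambda^{k_2} x, \lambda^{k_3} x}}$. Writing $\lambda = r e^{i\alpha}$ with $r \in (0,1)$ and, by non-realness of $\lambda$, $\alpha \in (0, 2\pi) \setminus \set{\pi}$, the arguments $\arg(\lambda^k x) = \arg(x) + k\alpha$ form, modulo $2\pi$, either a cyclic subgroup of $\R/2\pi\Z$ of order $q \geq 3$ when $\alpha/(2\pi) \in \Q$, or a dense subset of $\R/2\pi\Z$ when $\alpha/(2\pi) \notin \Q$. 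In both cases one can pick three indices giving points with arguments falling in three small arcs centred roughly at $0$, $2\pi/3$, $4\pi/3$ (relative to $\arg(x)$); any such triangle strictly encloses $0$.

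\smallskip
\textbf{Step 2.} Let $\rho > 0$ be such that $B(0, \rho) \subseteq T$, which exists because $0$ is an interior point of the triangle $T$.

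\smallskip
\textbf{Step 3.} Since $|\lambda^n x| = r^n |x| \to 0$ as $n \to \infty$, there is some $N \geq \max(k_1,k_2,k_3)$ such that $|\lambda^n x| < \rho$ for all $n \geq N$. Then for every such $n$, $\lambda^n x \in B(0, \rho) \subseteq T \subseteq \convex{\set{\lambda^i x \mid 0 \leq i \leq N}}$, so the inclusion $\convex{\set{\lambda^i x \mid 0 \leq i \leq n}} \subseteq \convex{\set{\lambda^i x \mid 0 \leq i \leq N}}$ holds for every $n \geq N$, which combined with monotonicity of the sequence shows it is constant from rank $N$ onwards. This simultaneously yields that the sequence is ultimately constant and that its limit contains the open neighbourhood $B(0, \rho)$ of $0$.

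\smallskip
\textbf{Main obstacle.} The heart of the argument is Step 1. The non-realness hypothesis on $\lambda$ is essential here: if $\lambda$ were real then all points $\lambda^k x$ would lie on a single line through $0$, and no finite convex hull of them could contain an open disk. The case split on rationality of $\alpha/(2\pi)$ is only cosmetic, as both cases reduce to classical facts (finite cyclic subgroups of $\R/2\pi\Z$ of order $\geq 3$ on the one hand, density of irrational rotations on the other).
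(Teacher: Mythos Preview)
Your overall strategy matches the paper's: exhibit a finite prefix of the orbit whose convex hull already contains an open ball around $0$, then use $|\lambda|<1$ to trap all later points in that ball so the sequence stabilises. The paper executes your Step~1 differently and more uniformly: it simply takes the first $n_0+1$ points, where $n_0$ is chosen so that $n_0\alpha>2\pi$; the arguments $\theta,\theta+\alpha,\ldots,\theta+n_0\alpha$ then wrap once around the circle, and any $z$ with $|z|<|\lambda|^{n_0}|x|$ has its argument between some consecutive pair and hence lies in the triangle $\convex{\{0,\lambda^p x,\lambda^{p+1}x\}}$. This sidesteps your rational/irrational case split entirely.

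Your Step~1, as written, has a small but genuine gap. If $\alpha=\pi/2$ or $\alpha=3\pi/2$ (that is, $\lambda$ purely imaginary, which is allowed by the hypothesis), the arguments of the orbit points take exactly the four values $\theta,\theta+\pi/2,\theta+\pi,\theta+3\pi/2$. Any three of these four directions lie in a closed half-plane through the origin (two are necessarily antipodal and the third is orthogonal to them), so $0$ sits on the \emph{boundary} of every triangle you can form from three orbit points, never in its interior. Your prescription of choosing arguments near $0,2\pi/3,4\pi/3$ therefore cannot be met when $q=4$. The repair is immediate---use four points (one per direction) instead of three, or follow the paper and take $n_0+1$ consecutive points---but the three-point formulation fails as stated.
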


\begin{proof}
We write $x = |x| e^{i \theta}$ and $\lambda = |\lambda| e^{i \alpha}$, with $\theta,\alpha$ in $[0,2\pi)$.
Since $\lambda$ is not a real number, $\alpha$ is not $0$.
Let $n_0$ in $\N$ such that $n_0 \alpha > 2 \pi$. Observe that $0$ is in $\convex{\set{\lambda^i x \mid i \in \set{0,\ldots,n_0}}}$.

We claim that $B(0,|\lambda|^{n_0} \cdot |x|)$ is included in $\convex{\set{\lambda^i x \mid i \in \set{0,\ldots,n_0}}}$.
Let $z = |z| e^{i \beta}$ such that $|z| < |\lambda|^{n_0} \cdot |x|$.
Let $p$ in $\set{0,\ldots,n_0 - 1}$ such that $\beta$ is in $[\theta + p \alpha, \theta + (p+1) \alpha)$.
Then $z$ is in $\convex{\set{0, \lambda^p x, \lambda^{p+1} x}}$.

The claim follows, since for all $n > n_0$, we have that $\lambda^n x$ is in $B(0,|\lambda|^{n_0} \cdot |x|)$.
\end{proof}

The following Lemma is the cornerstone for this section.

\begin{lemma}\label{lem:existssmallinv}
Let $\varepsilon >0$ and $\lambda \in \C$ with $|\lambda|<1$. There exists a semilinear set $\I \subseteq B(0, \varepsilon) \subseteq \C^d$ which is stable for $\Jord{d}{\lambda}$, and contains $B(0,\varepsilon')$ for some $0<\varepsilon'<\varepsilon$.
\end{lemma}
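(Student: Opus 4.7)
The plan is to build $\I$ as a Cartesian product of scaled copies of a single convex polygon $P_0 \subseteq \C$, with the scaling factors decreasing fast enough along the coordinates to absorb the ``shift'' caused by the off-diagonal $1$'s of the Jordan block.

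First, I would construct a base polygon $P_0$. Pick $n \in \N$ large enough that $\cos(\pi/n) > |\lambda|$, and let $P_0 = \convex{\set{e^{2\pi i k/n} : k = 0, \dots, n-1}}$ be the regular $n$-gon inscribed in the unit circle. This is a convex semilinear set satisfying $B(0, \cos(\pi/n)) \subseteq P_0 \subseteq \set{z \in \C : |z| \leq 1}$. Since for every $w \in P_0$ we have $|\lambda w| \leq |\lambda|$, it follows that $\lambda P_0 \subseteq \set{|z| \leq |\lambda|} \subseteq B(0, \cos(\pi/n)) \subseteq \interior{P_0}$. Because $\lambda P_0$ is compact and contained in the open set $\interior{P_0}$, there exists $\eta > 0$ such that $\lambda P_0 + B(0, \eta) \subseteq P_0$. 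This slack $\eta$ is what will absorb the Jordan shift.

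Next, I would define the scaling. Set $\alpha = \min(\eta, 1)$ and $\mu_i = \varepsilon \alpha^{i-1}$ for $i = 1, \dots, d$, and put
\[
\I = \prod_{i=1}^{d} \mu_i P_0 \subseteq \C^d.
\]
This is semilinear (an intersection of finitely many half-spaces in each coordinate). Using that $P_0 \subseteq \set{|z|\leq 1}$ and $\mu_i \leq \mu_1 = \varepsilon$, one checks $\I \subseteq B(0, \varepsilon)$ in the max norm. Conversely, $\mu_i P_0 \supseteq B(0, \mu_i \cos(\pi/n))$, so $\I \supseteq B(0, \varepsilon')$ with $\varepsilon' = \mu_d \cos(\pi/n) > 0$, giving the two required inclusions.

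Finally, I would verify stability of $\I$ under $\Jord{d}{\lambda}$. Writing $z = (z_1, \dots, z_d)$, the $d$-th coordinate of $\Jord{d}{\lambda}z$ is $\lambda z_d \in \lambda \mu_d P_0 \subseteq \mu_d P_0$. For $i < d$, the $i$-th coordinate is $\lambda z_i + z_{i+1}$ with $z_i \in \mu_i P_0$ and $z_{i+1} \in \mu_{i+1} P_0$; after dividing by $\mu_i$ the required inclusion becomes $\lambda P_0 + (\mu_{i+1}/\mu_i) P_0 \subseteq P_0$. By the choice of $\alpha$, $\mu_{i+1}/\mu_i = \alpha \leq \eta$, so $(\mu_{i+1}/\mu_i) P_0 \subseteq B(0, \eta)$, and the inclusion follows from $\lambda P_0 + B(0, \eta) \subseteq P_0$. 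The only delicate point is precisely this interplay: the geometric decrease of the $\mu_i$'s trades the norm-slack of the strict containment $\lambda P_0 \subsetneq \interior{P_0}$ against the additive off-diagonal contribution of the Jordan block, which is the heart of the argument.
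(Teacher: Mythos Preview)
Your argument is correct and is in fact cleaner than the paper's. The paper splits into two cases: when $\lambda$ is real it uses the product $\I=\{z:\forall i,\ |\Re{z_i}|+|\Im{z_i}|\le\varepsilon(1-|\lambda|)^i\}$, which is exactly your product-of-scaled-polygons idea with the diamond (the square $n=4$) as base shape; this works only because real $\lambda$ maps the diamond into a scaled copy of itself. For non-real $\lambda$ the diamond is not invariant under rotation, so the paper abandons the product structure and instead performs an induction on $d$, building $\I$ as $\convex{C_0,\ldots,C_{j_0-1}}\times\I'$ where $\I'$ comes from the inductive hypothesis and the $C_j$ are iterated Minkowski sums mixing the orbit of a point with projections of $\I'$. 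Your insight that a regular $n$-gon with $\cos(\pi/n)>|\lambda|$ absorbs multiplication by $\lambda$ regardless of its argument lets you keep the simple product construction uniformly for all $\lambda$, avoiding both the case split and the induction. The paper's route has no evident advantage here; yours is shorter and more transparent.

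One cosmetic point: with $\mu_1=\varepsilon$ the vertices of $\mu_1 P_0$ lie on the sphere $|z|=\varepsilon$, so strictly $\I\subseteq\closure{B}(0,\varepsilon)$ rather than the open ball. (The paper's own base case for non-real $\lambda$ has the identical slip.) Replacing $\mu_i$ by $(\varepsilon/2)\alpha^{i-1}$, or just invoking the lemma with $\varepsilon/2$, fixes this instantly. Likewise, to ensure the polygon has algebraic coefficients in the sense of the paper, you should remark that $\eta$ can be chosen rational.
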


\begin{proof}
We let $J$ denote $\Jord d \lambda$, and prove the Lemma by induction on $d$. We first treat the case where $\lambda \in \R$. Let
\[
\I = \left\{z \in \C^d \mid \forall i, |\Re{z_i}| + |\Im{z_i}| \leq \varepsilon (1 - |\lambda|)^i \right\} \subseteq B(0, \varepsilon).
\]
Then $B(0, \varepsilon (1 - |\lambda|)^d /2) \subseteq \I$. We show that $J\I \subseteq \I$. Let $z \in \I$. Then $(Jz)_d = \lambda z_d$, so $|\Re{(Jz)_d}|+|\Im{(Jz)_d}| \leq |\lambda|(|\Re{z_d}| + |\Im{z_d}|) \leq \varepsilon (1 - |\lambda|)^d$. Now if $i<d$, $(Jz)_i = \lambda z_i + z_{i+1}$, so
\[
\begin{aligned}
|\Re{(Jz)_i}|+|\Im{(Jz)_i}| &= |\lambda \Re{z_i} + \Re{z_{i+1}}| + |\lambda \Im{z_i} + \Im{z_{i+1}}| \\
& \leq |\lambda|(|\Re{z_i}| + |\Im{z_i}|) + (|\Re{z_{i+1}}| + |\Im{z_{i+1}}|) \\
& \leq |\lambda| \varepsilon (1-|\lambda|)^i +  \varepsilon (1-|\lambda|)^{i+1} = \varepsilon (1-|\lambda|)^{i}.
\end{aligned}
\]
Hence $\I$ is stable for $J$, which concludes this first case. We now assume that $\lambda \notin \R$. We start with the base case $d=1$. Let $u \in \C$ of modulus $\varepsilon$, for instance $u = \varepsilon$. We let $\I =\convex{\set{\lambda^i u \mid i \in \set{0,\ldots,p}}}$. Since extremal points of $\I$ are of the form $\lambda^i u$, of modulus $|\lambda|^i \varepsilon < \varepsilon$, it holds that $\I \subseteq B(0,\varepsilon)$.

Let $d>1$, and assume the result known for smaller dimensions. We let $u$ be a complex number of modulus $\varepsilon /2$, for instance, $u=\varepsilon/2 \in \C$ . We let $\alpha= |\lambda|^p \varepsilon$ which is such that $\alpha < \varepsilon$ and $B(0, \alpha) \subseteq \convex{\set{\lambda^i u \mid i \in \set{0,\ldots,p}}}$ and put $\varepsilon' = \frac \alpha 2 (1 - |\lambda|)$. We let $\I'$ be a semilinear subset of $\C^{d-1}$ given by induction, stable for $\Jord{d-1}{\lambda}$, and such that
\[
B(0, \varepsilon'') \subseteq \I' \subseteq B(0, \varepsilon') \subseteq \C^{d-1},
\]
for some $\varepsilon''>0$. In particular, $0 \in \pi_1(\I')$.

We consider the sequence $(C_j)_{j \in \N}$ of semilinear subsets of $\C$ given by $C_0 = \{u\}$ and for all $j$,
\[
C_{j+1} = \{\lambda z + z', z \in C_j, z' \in \pi_1(\I')\}.
\]
Let us know prove two facts about the sequence $(C_j)_j$.
\begin{itemize}
\item For all $j$, and $z \in C_j$, $|z| \leq |\lambda|^j \frac \varepsilon 2 + \varepsilon' \sum_{i=0}^{j-1} |\lambda|^i$, which we prove by induction. This is clear for $j=0$, and if it holds for elements $z \in C_j$, an element $\lambda z + z' \in C_{j+1}$ with $z' \in \pi_1(\I')$ is such that
\[
|\lambda z + z'| \leq |\lambda|\left(|\lambda|^j \frac \varepsilon 2 + \varepsilon'\sum_{i=0}^{j-1} |\lambda|^i \right) + \varepsilon' \leq |\lambda|^{j+1} \frac \varepsilon 2 + \varepsilon' \sum_{i=0}^{j} |\lambda|^i.
\]
\item There exists $j_0$ such that $C_{j_0} \subseteq B(0, \alpha) \subseteq \convex{\{C_0, \dots C_{j_0-1}\}}$. Indeed, the sequence $|\lambda|^j \frac \varepsilon 2 + \varepsilon' \sum_{i=0}^{j-1} |\lambda|^i$ goes to $\frac{\varepsilon'}{1 - |\lambda|}$, so for large enough $j$, $C_j \subseteq B(0, \frac{2\varepsilon'}{1 - |\lambda|}) = B(0, \alpha)$. Now $0 \in \pi_1(\I')$, so by an easy induction, $\lambda^j u \in C_j$. Hence,
\[
C_{j_0} \subseteq B(0, \alpha) \subseteq \convex{\{\lambda^j u, j \in \N\}} \subseteq \convex{\{C_j, j \in \N\}} \subseteq \convex{\{C_0, \dots, C_{{j_0}-1}\}}.
\]
\end{itemize} 
We now let
\[
\I = \convex{\{C_0, \dots, C_{j_0-1}\}} \times \I',
\]

Then
\begin{itemize}
\item $\I \subseteq B(0, \frac \varepsilon 2 + \frac {\varepsilon'}{1 - |\lambda|}) \subseteq B(0, \frac \varepsilon 2 + \frac \alpha 2) \subseteq B(0, \varepsilon)$.
\item $B(0, \min(\alpha, \varepsilon'')) \subseteq \I$, and
\item $\I$ is stable for $J$, because $J(C_j \times \I') \subseteq C_{j+1} \times \I'$.
\end{itemize}
This concludes the induction, and the proof of the Lemma.
\end{proof}

We may now prove the following Theorem.

\begin{theorem}\label{thm:less_one}
Let $\ell = (x,A,y)$ be a normalized Orbit instance. Assume that $\ell$ is a non-reach instance. If the matrix $A$ has an eigenvalue whose modulus is smaller than 1, then there exists a semilinear invariant for $\ell$.
\end{theorem}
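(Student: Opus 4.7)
\medskip

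The plan is to exploit the contracting behaviour of $A$ on a Jordan block $J$ with eigenvalue $\lambda$ of modulus $<1$, while keeping all other coordinates free. Because the instance is normalized, we have $y_J \neq 0$, which is the key piece of data preventing $y$ from lying in the closure of the orbit on the $J$-coordinates. The separating invariant will have the shape
\[
\I \;=\; \set{x, Ax, \dots, A^{k-1} x} \;\cup\; \bigl(P \times \C^{\compl J}\bigr),
\]
where $P \subseteq \C^{J}$ is a small semilinear ``quasi-ball'' around $0$, stable under $A_J$, and $k$ is chosen large enough that the orbit on $J$ has entered $P$.

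First I would set $\varepsilon := \tfrac{1}{2}\|y_J\| > 0$ and invoke Lemma~\ref{lem:existssmallinv} applied to the Jordan block $\Jord{d(J)}{\lambda}$ with this $\varepsilon$. This yields a semilinear set $P \subseteq \C^{J}$ satisfying $B(0,\varepsilon') \subseteq P \subseteq B(0,\varepsilon)$ for some $\varepsilon'>0$, and such that $A_J P \subseteq P$. Next, I would observe that since $A_J = \lambda I + N$ with $N$ nilpotent and $|\lambda|<1$, the matrix iterates $A_J^n$ converge to $0$ (each entry of $A_J^n$ is of the form $\binom{n}{j}\lambda^{n-j}$, which tends to $0$). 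Hence $A_J^n x_J \to 0$, and one can pick $k \in \N$ with $\|A_J^k x_J\| < \varepsilon'$, i.e.\ $(A^k x)_J \in P$.

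With $\I$ defined as above, semilinearity is immediate: a finite set of algebraic points is semilinear, $P$ is semilinear by construction, and $\C^{\compl J}$ is trivially semilinear. The remaining verifications proceed as follows. Membership $x \in \I$ is clear. For $y \notin \I$: since $\ell$ is a non-reach instance, $y \neq A^j x$ for every $j$, so in particular $y \notin \set{x,Ax,\dots,A^{k-1}x}$; and since $\|y_J\| = 2\varepsilon > \varepsilon \geq \sup_{z\in P}\|z\|$, we have $y_J \notin P$, so $y \notin P \times \C^{\compl J}$. For stability under $A$: applying $A$ to $A^j x$ with $j < k-1$ keeps us in the initial segment; applying $A$ to $A^{k-1}x$ yields $A^k x$, whose $J$-component lies in $P$ by choice of $k$, so $A^k x \in P \times \C^{\compl J}$; and for $z \in P \times \C^{\compl J}$, $(Az)_J = A_J z_J \in P$ by stability of $P$ under $A_J$, so $Az \in P \times \C^{\compl J}$.

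There is no serious obstacle beyond invoking Lemma~\ref{lem:existssmallinv} and using the normalization hypothesis $y_J \neq 0$; the difficulty has been front-loaded into that lemma. The only point that deserves attention is that without the normalization ensuring $y_J \neq 0$, the argument breaks down (indeed $y$ could lie in $\closure{\set{A^n x : n \in \N}}$, in which case no \emph{closed} invariant could separate $x$ from $y$). The reduction carried out in Lemma~\ref{lem:reduc} is precisely what guarantees this hypothesis is available.
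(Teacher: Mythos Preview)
Your proof is correct and essentially identical to the paper's: both pick the contracting Jordan block $J$ (where normalization gives $y_J\neq 0$), set $\varepsilon=\tfrac12\|y_J\|$, invoke Lemma~\ref{lem:existssmallinv} to obtain a semilinear $P\subseteq B(0,\varepsilon)$ stable under $A_J$ and containing a small ball $B(0,\varepsilon')$, and then take the invariant to be a finite initial segment of the orbit together with $\{z:z_J\in P\}$. The verifications you spell out (non-reach gives $y\notin$ the initial segment, $\|y_J\|>\varepsilon$ gives $y_J\notin P$, and block-diagonality of $A$ gives $(Az)_J=A_Jz_J$) are exactly what the paper leaves implicit.
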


\begin{proof}
Let $J$ be a Jordan block of $A$ with eigenvalue $\lambda$ such that $|\lambda|<1.$ Since $\ell$ is normalized, $y_J \neq 0$. Let $\varepsilon = |y_J|/2$. Using Lemma~\ref{lem:existssmallinv}, we obtain $\varepsilon'>0$ and a semilinear set $\I \subseteq \C^{d(J)}$ such that $J\I \subseteq \I$ and $B(0, \varepsilon') \subseteq \I \subseteq B(0, \varepsilon)$. Now $(A^nx)_J \to 0$, so there exists $n_0$ such that $(A^{n_0}x)_J \in B(0, \varepsilon') \subseteq \I$. Hence, it is easy to see that
\[
\{x, Ax, \dots, A^{n_0-1}x\} \cup \{z \in \C^d \mid z_J \in \I\}
\]
is a semilinear invariant for $\ell$.

\end{proof}

\section{Some non-diagonalisable eigenvalue is a root of unity}\label{appendix:mod_one_rou_not_diag}

\begin{theorem}
Let $\ell = (A,x,y)$ be a normalized Orbit instance.
Assume that $\ell$ is a non-reach instance.
If $A$ contains a non-diagonal Jordan block $J$ whose eigenvalue is a root of unity,
then there exists a semilinear invariant for $\ell$.
\end{theorem}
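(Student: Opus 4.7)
The plan is to exploit the fact that on the penultimate coordinate of $J$ the orbit of $x$ lies asymptotically along $n$ affine half-lines whose parameters diverge, and to take as invariant a finite initial segment of the orbit together with a closed ``thickening'' of those half-lines into $\C^d$ obtained by imposing no constraint on the remaining coordinates.

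Let $\lambda$ denote the eigenvalue of $J$, fix $n \geq 1$ with $\lambda^n = 1$, and set $d := d(J) \geq 2$. I would single out the last two coordinates $p := (J, d-1)$ and $q := (J, d)$ of the block, on which the restriction of $A$ reads $(Az)_p = \lambda z_p + z_q$ and $(Az)_q = \lambda z_q$. A direct induction yields $(A^m x)_q = \lambda^m x_q$ and $(A^m x)_p = \lambda^m x_p + m \lambda^{m-1} x_q$. Since normalization gives $x_q \neq 0$ and $|\lambda| = 1$, the sequence $(A^m x)_p$ diverges in modulus; splitting $m$ modulo $n$, the subsequence $m \equiv k \pmod n$ is an arithmetic progression with step $n \lambda^{k-1} x_q$ on the affine line $L_k := \{\lambda^k x_p + t \lambda^{k-1} x_q : t \in \R\} \subseteq \C$.

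The next step is to fix a threshold $M \in \N$ and introduce, for each $k \in \{0, \dots, n-1\}$, the half-line $H_k := \{(\lambda^k x_p + t \lambda^{k-1} x_q,\ \lambda^k x_q) : t \in \R,\ t \geq M\} \subseteq \C^2$, together with $H := \bigcup_k H_k$. A straightforward computation shows that $A$ maps $H_k$ into $H_{k+1 \bmod n}$, each application shifting $t$ by $1$, so $H$ is closed under the $(p,q)$-action. The candidate invariant is then
\[
\I := \{A^m x : 0 \leq m < nM\} \;\cup\; \{z \in \C^d : (z_p, z_q) \in H\},
\]
which is closed and semilinear: each $H_k$ is defined in $\C^2$ by one linear equation (fixing $z_q = \lambda^k x_q$, and forcing the imaginary part of $(z_p - \lambda^k x_p)/(\lambda^{k-1} x_q)$ to vanish) and one linear inequality (its real part is $\geq M$), all with algebraic coefficients, and these lift to $\C^d$ by leaving the remaining coordinates unconstrained. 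Stability is immediate: the finite enumeration maps into itself except for $A^{nM} x$, whose $(p,q)$-projection $(x_p + nM \lambda^{n-1} x_q,\ x_q)$ lies in $H_0$.

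The main (mild) obstacle is to choose $M$ so that $y \notin \I$. The finite part causes no issue since $\ell$ is non-reach. For the half-line part, either $y_q \notin \{\lambda^k x_q : 0 \leq k < n\}$, or $y_q = \lambda^{k_0} x_q$ for some $k_0$ with $y_p \notin L_{k_0}$, in which case $(y_p, y_q) \notin H$ regardless of $M$; or else there is a unique real $t_0$ with $y_p = \lambda^{k_0} x_p + t_0 \lambda^{k_0-1} x_q$, and any integer $M > t_0$ suffices to push $(y_p, y_q)$ off $H_{k_0}$. The only subtle point is when $t_0 = n j_0 + k_0$ for some integer $j_0 \geq 0$: then the orbit point $A^{nj_0+k_0} x$ shares the $(p,q)$-projection of $y$, but non-reach forces $A^{nj_0+k_0} x \neq y$, and the very inequality $t_0 < M \leq nM$ automatically places $A^{nj_0+k_0} x$ inside the finite enumeration rather than the half-line part, so $y$ remains outside $\I$.
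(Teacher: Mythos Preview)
Your proof is correct and follows essentially the same strategy as the paper: both exploit the divergence of the penultimate coordinate of $J$, fix $z_{J,d(J)}$ to one of the $n$ values $\lambda^k x_{J,d(J)}$, and impose a threshold condition on $z_{J,d(J)-1}$, preceded by a finite initial segment of the orbit. The only difference is cosmetic: where you restrict $z_{J,d(J)-1}$ to a half-line (forcing it onto the affine line $L_k$ and then cutting by $t\geq M$), the paper uses the looser half-plane condition $\Re{\lambda\, z_{J,d(J)-1}\,\overline{z_{J,d(J)}}} \geq M$, which makes the stability check a one-line computation valid for all residues at once.
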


\begin{proof}
Let $J$ be a non-diagonal Jordan block of $A$ with eigenvalue $\lambda$ with $\lambda^m=1$. We shall use divergence on the coordinate $(J, d(J)-1)$ to construct an invariant. Recall that $x_{J,d(J)} \neq 0$. For any $n \in \N$, we have $(A^nx)_{J,d(J)-1} = \lambda^n x_{J,d(J)-1} + n \lambda^{n-1}x_{J, d(J)}$, and $(A^nx)_{J, d(J)}= \lambda^n x_{J,d(J)}$. Hence,
\[
\Re{\lambda(A^n x)_{J,d(J)-1} \overline{(A^nx)_{J, d(J)}}} = \Re{\lambda x_{J, d(J) -1} x_{J, d(J)}} + n |x_{J,d(J)}|^2,
\]
which goes to infinity when $n$ grows. Note that this condition is quadratic, but since $(A^nx)_{J, d(J)}$ takes only a finite number of values, we will be able to state it in a semilinear fashion. Let $n_0$ be such that $M = \Re{\lambda(A^{n_0} x)_{J,d(J)-1} \overline{(A^{n_0}x)_{J, d(J)}}} > \Re{\lambda y_{J,{d(J)-1}} \overline{y_{J,d(J)}}}$. Let
\[
\I = \{x, Ax, \dots, A^{n_0 - 1} x\} \cup \bigcup_{i=0}^{m-1} \I_i,
\]
where
\[
\I_i = \{z \in \C^d \mid z_{J,d(J)} = \lambda^i x_{J,d(J)} \text{ and } \Re{\lambda z_{J,d(J)-1} \overline{z_{J,d(J)}}} \geq M)\}.
\]
It is clear that $x \in \I$ and $y \notin \I$. Each $\I_i$ is semilinear because the second condition is actually semilinear assuming $z_{J,d(J)} = \lambda^i x_{J,d(J)}$. Now if $z \in \I_i$, we obtain that $(Az)_{J,d(J)} = \lambda z_{J,d(J)} = \lambda^{i+1} x_{J,d(J)}$, and
\[
\Re{\lambda (Az)_{J,d(J)-1} \overline{Az_{J,d(J)}}} = \Re{\lambda z_{J,d(J)-1} \overline{z_{J,d(J)}}} + |z_{J,d(J)}|^2 \geq M
\]
so $Az \in \I_{i+1}$ if $i<m$, and $Az \in \I_0$ (since $\lambda^m=1)$ if $i=m$. Hence $\I$ is stable for $A$.
\end{proof}

\section{All eigenvalues have modulus 1 and are not roots of unity}\label{appendix:mod_one_not_rou}

We finally deal with the most involved case, namely, when all eigenvalues have modulus 1 and none are roots of unity. In this setting, we will be able to describe the \emph{minimal semilinear inviariant} for $A$ and $x$, that is, a semilinear invariant which is contained in any semilinear invariant. We say that two eigenvalues are \emph{equivalent} if their quotient is a root of unity. Intuitively, the only nontrivial semilinear relations that invariants will be able to exploit are the ones that hold among equivalent blocks.

We now give a high-level overview for this section.
\begin{itemize}
\item We first assume that eigenvalues are pairwise non-equivalent. The aim is to show that in this setting, any semilinear invariant is trivial. This is the object of subsection~\ref{subsec:mod_one_not_rou_not_eq}
\begin{itemize}
\item It is first shown how to deal with the diagonal case. This makes a crucial use of the Skolem-Mahler-Lech Theorem.
\item We then extend to general (possibly non-diagonal) blocks by induction on the total dimension. This is the most technical part of the proof.
\end{itemize}
\item We then deal with equivalent eigenvalues in subsection \ref{subsec:mod_one_not_rou}: we first treat the case where all equivalent eigenvalues are equal, and then show how to easily reduce to this setting.
\end{itemize}

\subsection{All eigenvalues are non-equivalent}\label{subsec:mod_one_not_rou_not_eq}
\subsection*{The diagonal case}\label{subsec:diag}

We will make use of the following powerful theorem about linear recurrence sequences.
This result is due to Skolem~\cite{Sko34}, and more general versions were subsequently obtained by
Mahler~\cite{Mah35,Mah56} and Lech~\cite{Lec53}.

\begin{theorem}[Skolem, Mahler, Lech]
\label{thm:SML}
Let $(u_n)_{n \in \N}$ be a real non-degenerate linear recurrence sequence, that is, $u_n = \sum_{i=1}^d v_i \lambda_i^n$, for some $v \in \C^d \setminus \{0\}$, where for any $i \neq j$, $\frac{\lambda_i}{\lambda_j} \notin \U$.
Then $\set{n \in \N \mid u_n = 0}$ is finite.
\end{theorem}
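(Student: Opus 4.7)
The plan is to invoke the classical $p$-adic analytic argument of Skolem, Mahler, and Lech. Without loss of generality the $\lambda_i$ and $v_i$ may be taken to lie in a number field $K$; the general complex case reduces to this by a specialisation argument, and in the applications of this paper the matrix $A$ has algebraic entries, so such a $K$ can be read off from the Jordan decomposition.

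First, I would choose a rational prime $p$ together with an embedding of $K$ into a finite extension $K_p$ of $\Q_p$ such that every $\lambda_i$ is a $p$-adic unit. Such a $p$ exists because only finitely many primes appear in the numerators and denominators of the $\lambda_i$ and $v_i$. Using the finiteness of the residue field, I would then pick a positive integer $N$ such that every $\lambda_i^N$ lies so close to $1$ in $K_p$ that both the $p$-adic logarithm $\log(\lambda_i^N)$ and the exponential series $\exp(m \log(\lambda_i^N))$ converge, giving
\[
(\lambda_i^N)^m = \exp\bigl(m \log(\lambda_i^N)\bigr)
\]
as a $p$-adic analytic function of $m \in \Z_p$.

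Next, I would split $\N$ into arithmetic progressions modulo $N$: for each residue $r \in \{0,1,\ldots,N-1\}$, set
\[
f_r(m) = u_{Nm+r} = \sum_{i=1}^d (v_i \lambda_i^r)(\lambda_i^N)^m,
\]
which by the previous step extends to a convergent power series in $m \in \Z_p$. By Strassman's theorem, each $f_r$ is either identically zero or has only finitely many zeros in $\Z_p$; summing the contributions over $r$ then yields that $\{n : u_n = 0\}$ is finite.

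To rule out the identically-zero alternative, I would group the summands of $f_r$ according to the distinct values among $\lambda_1^N, \ldots, \lambda_d^N$. If $f_r \equiv 0$, then evaluating at $m = 0, 1, 2, \ldots$ and applying the Vandermonde determinant forces each group's contribution $\sum_{i : \lambda_i^N = \mu} v_i \lambda_i^r$ to vanish. The non-degeneracy hypothesis then forces every group to be a singleton, since $\lambda_i^N = \lambda_j^N$ with $i \neq j$ would give $(\lambda_i / \lambda_j)^N = 1$, contradicting $\lambda_i / \lambda_j \notin \U$. Hence $v_i \lambda_i^r = 0$ for every $i$, and since no $\lambda_i$ is zero, this forces $v = 0$, contradicting the hypothesis. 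The main obstacle is the $p$-adic setup itself (the choice of $p$ and $N$ ensuring convergence of the exponential and logarithm, and the careful invocation of Strassman's theorem), which is a standard but delicate piece of $p$-adic analysis.
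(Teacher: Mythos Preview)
The paper does not give a proof of this theorem at all: it is stated as a classical result with references to Skolem, Mahler, and Lech, and is then invoked as a black box in the proof of Lemma~\ref{lem:all_mod_1_diag}. Your sketch is the standard $p$-adic proof and is essentially correct; in particular, the use of Strassman's theorem on each progression $f_r$ and the Vandermonde argument to exclude $f_r \equiv 0$ via the non-degeneracy hypothesis are exactly the right ingredients. For the purposes of this paper only the algebraic case is needed (the matrices have algebraic entries), so your remark that a number field $K$ can be read off from the Jordan data suffices and you need not worry about the general complex specialisation step.
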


We write $A = \diag(\lambda_1,\ldots,\lambda_d)$ for 
\[
A =
\begin{bmatrix}
\lambda_1 \\ & \lambda _2 \\ & & \ddots \\ &&& \lambda_d
\end{bmatrix}.
\]

\begin{lemma}
\label{lem:all_mod_1_diag}
Let $\lambda_1,\ldots,\lambda_d \in S^1$ and $A = \diag(\lambda_1,\ldots,\lambda_d)$.
Assume that:
\begin{itemize}
	\item for all $i$, we have $\lambda_i \notin \U$, and
	\item for all $i,j$ such that $i \neq j$, we have $\frac{\lambda_i}{\lambda_j} \notin \U$.	
\end{itemize} 
Let $\I$ be a non-empty closed semilinear set stable under $A$, which moreover containes a point $x \in \I$ such that for all coordinate $i$, $x_i \neq 0$. Then $\I = \C^d$.
\end{lemma}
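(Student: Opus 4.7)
The plan is to combine the Skolem-Mahler-Lech theorem with the rigidity of closed semilinear sets. First I would show that $\I$ is automatically invariant under the closure $G := \overline{\{(\lambda_1^n, \dots, \lambda_d^n) : n \in \N\}} \subseteq (S^1)^d$, acting on $\C^d$ coordinatewise: for any $z \in \I$ the forward orbit $\{A^n z\}_{n \geq 0}$ lies in $\I$, so by closedness its closure $G \cdot z$ is in $\I$; since $G$ is a compact group containing the inverse of $(\lambda_1,\dots,\lambda_d)$, this gives $G \cdot \I = \I$.

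Next, write $\I = P_1 \cup \cdots \cup P_m$ as a finite union of closed polyhedra. By pigeonhole some $P_j$ contains $A^n x$ for $n$ in an infinite subset of $\N$, and for each defining half-space $\{L(z) \geq a\}$ of $P_j$ the sequence $u_n := L(A^n x) - a$ satisfies $u_n \geq 0$ along this set. Now $u_n$ is a real linear recurrence sequence whose characteristic roots lie in $\{\lambda_i, \overline{\lambda_i}\}_i \cup \{1\}$. After grouping exponentials whose bases have ratio in $\U$ and restricting to an arithmetic progression $n \equiv r \pmod N$ for an appropriate period $N$, the resulting subsequence becomes a non-degenerate linear recurrence in $m = (n-r)/N$, to which Skolem-Mahler-Lech applies: its zero set is finite, so globally $\{n : u_n = 0\}$ is a finite union of arithmetic progressions plus a finite set, and on each AP either $u_n \equiv 0$ or $u_n$ has constant sign for all large indices.

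Combining these constraints across all defining half-spaces, the proof reduces to the following geometric classification: any closed $G$-invariant semilinear subset of $\C^d$ containing a point with all coordinates non-zero must equal $\C^d$. The idea is that $G$-orbits lie in products of circles, so a closed $G$-invariant semilinear set fibers over a subset of $\Rp^d$ via $z \mapsto (|z_i|)_i$; for this fibered set to remain semilinear in $\C^d$, each ``radial'' factor must be either $\{0\}$ or $\Rp$, since any non-trivial interval of radii would produce circular or annular boundaries that cannot be piecewise linear. Hence $\I$ must be a finite union of coordinate flats $\{z : z_i = 0 \text{ for } i \in I\}$, and since $x$ has all coordinates non-zero, only the flat $\C^d$ contains $x$, giving $\I = \C^d$.

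The main obstacle lies in handling cases where $G$ is a proper subgroup of the full torus $(S^1)^d$, which can occur under the Lemma's hypotheses when, for instance, $\lambda_1 \lambda_2 \in \U$ even though $\lambda_1/\lambda_2 \notin \U$. In such situations the $G$-orbits no longer fiber cleanly over $\Rp^d$ and the classification becomes more subtle, requiring the full structural output of Skolem-Mahler-Lech (via arithmetic-progression decompositions of the relevant recurrence zero sets) to rule out nontrivial lower-dimensional $G$-invariant semilinear subsets and thus collapse the decomposition back onto coordinate flats.
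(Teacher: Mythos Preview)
Your proposal assembles the right ingredients---$G$-invariance via Kronecker density and an appeal to Skolem--Mahler--Lech---but there is a genuine gap at the decisive step. The ``geometric classification'' you sketch (a closed $G$-invariant semilinear set containing a full-support point must equal $\C^d$, because the moduli map $z \mapsto (|z_i|)_i$ would otherwise force circular or annular boundaries) is only an argument when $G$ is the full torus $(S^1)^d$. You correctly flag this in your final paragraph, but you do not resolve it: under the lemma's hypotheses one can perfectly well have $\lambda_1\lambda_2 \in \U$ while $\lambda_1/\lambda_2 \notin \U$, and then $G$ is a proper subtorus, orbits no longer fibre over $\Rp^d$, and the heuristic collapses. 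Saying that this case ``requires the full structural output of Skolem--Mahler--Lech via arithmetic-progression decompositions'' is a statement of hope, not a proof. As written, the proposal is a plan with an acknowledged hole rather than a completed argument.

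The paper takes a different route that bypasses this obstacle. Rather than classifying $G$-invariant semilinear sets, it runs a dimension descent through the boundary. One shows: (i) $\dim_\R \I$ is even, because a full-dimensional piece of $\I$ has its direction stabilised by some power $A^n$, and a product of irrational planar rotations stabilises only $\C$-linear subspaces; (ii) $\dim_\R \I > 2d-2$, since otherwise infinitely many orbit points would lie on a single complex affine hyperplane $\sum_i u_i z_i = a$, giving infinitely many zeros of the recurrence $\sum_i u_i x_i \lambda_i^n - a$, which is non-degenerate in the roots $\lambda_1,\dots,\lambda_d,1$ and hence contradicts Skolem--Mahler--Lech; (iii) $\partial \I$ is again closed, semilinear and $A$-stable (using $A^{-1}\in G$, essentially your first paragraph); and (iv) $\partial \I$, if non-empty, contains a full-support point (a path-connectedness argument in $\C^d$ minus the coordinate hyperplanes). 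If $\I \neq \C^d$ then $\partial \I$ satisfies the same hypotheses with strictly smaller dimension, contradicting (i)+(ii). The key technical difference from your approach is in step (ii): by phrasing the constraint as a \emph{complex} affine equation, the characteristic roots are exactly $\lambda_1,\dots,\lambda_d,1$, so the hypotheses $\lambda_i\notin\U$ and $\lambda_i/\lambda_j\notin\U$ give non-degeneracy directly---the conjugates $\bar\lambda_i$ never enter, and the problematic ratios $\lambda_i\lambda_j$ that derail your argument simply do not appear.
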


\begin{proof}
Let $\I$ be such a semilinear set, we show a few facts:
\begin{itemize}
	\item[$(i)$] $\I$ must have even dimension over $\R$,
	\item[$(ii)$] $\I$ must have dimension $>2d-2$ over $\R$ (hence, $\I$ has full dimension thanks to $(i)$),
	\item[$(iii)$] $\partial \I$ is stable under $A$,
	\item[$(iv)$] if it is non-empty (that is, if $\I \neq \C^d$), $\partial \I$ contains a point which is nonzero on each coordinate.
\end{itemize} 
This implies the desired result: if towards contradiction we would have that $\I \neq \C^d$, 
then $\partial \I$ would be a non-empty closed semilinear set stable under $A$ thanks to $(iii)$, it would contain a point which is nonzero on each coordinate thanks to $(iv)$, but yet it cannot have full dimension. We now prove the four claims.

\begin{itemize}
\item[$(i)$] 

Let $s = \dim_\R(\I)$. Then $\I$ is contained into the union of finitely many affine subspaces of dimension $s$, write
\[
\I \subseteq \bigcup_{i=1}^p F_i,
\]
where $F_i \subseteq \C^{d} \simeq \R^{2d}$ is a real affine space of dimension $s$, of direction $F_i - F_i = E_i \subseteq \R^{2d}$. We first show that for some $i$, $E_i$ must be stable for some power of $A$ (seen as a transformation of $\R^{2d}$), and then that this implies that $s$ is even.

Since $\dim_\R(\I) \geq s$, there must be $\tilde x \in \I$ and $\varepsilon >0$ such that
\[
B(\tilde x, \varepsilon) \cap \I = B(\tilde x, \varepsilon) \cap F_i
\]
for some $i$. Then for all $n$, $A^n (B(\tilde x, \varepsilon) \cap F_i )= B(A^n \tilde x, \varepsilon) \cap A^n F_i \subseteq \I$, and has dimension $s$ over $\R$, hence there exists $i_n$ such that $B(A^n \tilde x, \varepsilon) \cap \I = B(A^n \tilde x, \varepsilon) \cap F_{i_n}$. Now let $n_1 < n_2$ be such that $i_{n_1} = i_{n_2}=i,$ let $n = n_2 - n_1$ and let $x = A^{n_1} \tilde x$. We show that $E_i$ is stable under $A^n$.

Let $e \in E_i$, and let $\tilde e=\varepsilon \frac e {2||e||}$. Then $x + \tilde e \in B(x, \varepsilon) \cap F_i \subseteq \I$ so $A^n(x + \tilde e) = A^n x + A^n \tilde e \in B(A^n x, \varepsilon) \cap \I \subseteq F_i$ so $A^n \tilde e = A^n x + A^n \tilde e - A^n x \in F_i - F_i = E_i$ and since $E_i$ is $\R$-linear, $A^n e \in E_i$.

Now since the $\lambda_i$'s are not roots of $1$, $A^n$, when seen as a transformation of $R^{2d}$, is the product of $d$ diagonal irrationnal rotations of $\R^2$. Such a map only stabilizes linear spaces of even dimensions, hence $s = \dim_\R(E_i)$ is even.

\vskip1em
\item[$(ii)$] Assume for contradiction that $\dim_\R( \I) \leq 2d-2$. Let $x \in \I$ be a point with $x_i \neq 0$ for all $i\in \{1,\dots, d\}$. Now $\I \subseteq \bigcup_{i=1}^p F_i$, where the $F_i$'s are affine spaces of real dimension $2d-2$, that is, spaces of the form
\[
F_i = \{z \in \C^d \mid \sum_{i=1}^d u_i z_i = a\},
\]
for some nonzero $u \in \C^d$ and some $a \in \C$. Consider the orbit $\O = \{A^n x, n \in \N\}$ of $x$. There must be $i$ such that $F_i \cap \O$ is infinite, hence there are infinitely many $n \in \N$ such that
\[
\sum_{i=1}^d \lambda_i^n u_i x_i = a,
\]
which contradicts Theorem~\ref{thm:SML} applied to eigenvalues $\lambda_1,\dots, \lambda_d,1$, since $u \neq 0$ implies $(u_i x_i)_{i \in \{1,\dots, d\}} \neq 0$.

\vskip1em
\item{$(iii)$}
We argue that $\C^d \setminus \I$ is stable under $A$,
which together with the fact that $\I$ is stable under $A$ implies that $\partial \I$ is stable under $A$.
Equivalently we show that $\I$ is stable under $A^{-1}$: let $x$ in $\I$, we prove that $A^{-1} x$ is in $\I$.

Let
\[ 
L_A = \set{v \in \Z^d \mid \lambda_1^{v_1} \cdots \lambda_d^{v_d} = 1}
\] 
be the set of all multiplicative relations holding among $\lambda_1,\ldots,\lambda_d$.  
Notice that $L_A$ is an additive subgroup of $\Z^d$.
Consider the set of diagonal $d\times d$ matrices
\[ 
{T}_A = \set{ \diag(\mu_1,\ldots,\mu_d) \mid \mu \in S^d \mbox{ and } \forall v \in L_A \, ( \mu_1^{v_1} \cdots \mu_d^{v_d} = 1)} 
\] 
whose diagonal entries satisfy the multiplicative relations in $L_A$.
Using Kronecker's Theorem on inhomogeneous simultaneous Diophantine
approximation~\cite{C65}, it is shown in~\cite[Proposition 3.5]{OW14}
that $\set{ A^n : n \in \N}$ is a dense subset of ${T}_A$.
This immediately gives
\[
\overline{ \set{A^nx \mid n \in \N} } = \set{Mx \mid M \in {T}_A} \, .
\]
Since $x$ is in $\I$ and $\I$ is stable under $A$, we have that $\overline{ \set{A^nx \mid n \in \N} } \subseteq \overline{\I} = \I$.
Observe furthermore that $A^{-1} = \diag(\lambda_1^{-1},\ldots,\lambda_d^{-1})$ is in ${T}_A$,
so thanks to the previous equality $A^{-1} x$ is in $\I$.

\vskip1em
\item{$(iv)$} Let $\F = \bigcup_{i=1}^d \C^{i-1} \times \{0\} \times \C^{d-i}$ be the set of points with at least one zero coordinate. Assume for contradiction that $\partial \I \subseteq \F$. Let $x \in \I \setminus \F$ and $y \in \compl{\I} \setminus \F$, which is non-empty because $\compl{\I}$ is a nonempty open subset of $\C^d$ whereas $\F$ has empty interior. Now $\C^d \setminus \F$ is path connected, so there exists a path from $x$ to $y$ which avoids $\F \supseteq \partial \I$, a contradiction.

\end{itemize}
\end{proof}

Although $\forall i,x_i \neq 0 $ holds in a normalized instance, we shall need a slightly stronger result which is a consequence of the previous Lemma.

\begin{theorem}\label{thm:all_mod_1_diag}
Let $A = \diag(\lambda_1, \dots \lambda_d)$ with $\lambda_i \notin \U$ and for $i \neq j, \frac {\lambda_i}{\lambda_j} \notin \U$. Let $\I \subseteq \C^d$ be a closed semilinear set such that $A\I \subseteq \I$. Then $\I$ is a union of sets of the form
\[
\prod_{i=1}^d \varepsilon_i,
\]
where $\varepsilon_i \in \{\{0\}, \C\}$.
\end{theorem}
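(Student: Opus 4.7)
The plan is to reduce directly to Lemma~\ref{lem:all_mod_1_diag} by restricting $\I$ to each ``coordinate stratum''. For every subset $S \subseteq \{1,\ldots,d\}$, define the coordinate subspace
\[
V_S = \{z \in \C^d : z_i = 0 \text{ for all } i \notin S\},
\]
which is semilinear, closed, and $A$-stable since $A$ is diagonal (each $V_S$ is a sum of eigenlines). Note that each $V_S$ is exactly a product $\prod_{i=1}^d \varepsilon_i$ with $\varepsilon_i = \C$ for $i \in S$ and $\varepsilon_i = \{0\}$ otherwise; hence it suffices to prove that $\I$ is a union of such subspaces $V_S$.

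To this end, fix an arbitrary point $z \in \I$ and let $S = \{i : z_i \neq 0\}$ denote its support. Consider the restriction $\I_S := \I \cap V_S$, viewed naturally as a subset of $V_S \cong \C^{|S|}$. It is a closed semilinear set (as the intersection of two closed semilinear sets), and it is stable under $A|_{V_S} = \diag(\lambda_i)_{i \in S}$. Moreover, since the hypotheses $\lambda_i \notin \U$ and $\lambda_i/\lambda_j \notin \U$ for $i \neq j$ are monotone in the set of eigenvalues, the restricted matrix satisfies the assumptions of Lemma~\ref{lem:all_mod_1_diag}. By construction, $z \in \I_S$ has all its coordinates nonzero when viewed as a point of $\C^{|S|}$, so Lemma~\ref{lem:all_mod_1_diag} yields $\I_S = V_S$. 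In particular, $V_S \subseteq \I$.

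Since every $z \in \I$ is contained in some $V_{S(z)} \subseteq \I$, we conclude
\[
\I = \bigcup_{z \in \I} V_{S(z)},
\]
which is a (finite, since there are at most $2^d$ possible supports) union of subspaces of the required product form. The entire argument is a short bookkeeping reduction; the substantive content is already contained in Lemma~\ref{lem:all_mod_1_diag}, and there is no real obstacle here beyond checking that the diagonal restriction inherits all the needed structure.
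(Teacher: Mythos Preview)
Your argument is correct and is essentially identical to the paper's own proof: both fix a point $z\in\I$, intersect $\I$ with the coordinate subspace $V_S=\prod_i\varepsilon_i$ determined by the support $S$ of $z$, identify this with $\C^{|S|}$, and apply Lemma~\ref{lem:all_mod_1_diag} there to conclude $V_S\subseteq\I$. The paper states this in one line (``an easy application of the previous Lemma to the projection of $\I\cap\prod_i\varepsilon_i$''), while you spell out the verifications that closedness, semilinearity, and $A$-stability pass to the restriction; no substantive difference.
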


\begin{proof}
We show that for any $x \in \I$, $\I$ must contain $\prod_i \varepsilon_i$, with $\varepsilon_i = \begin{cases} \{0\} \text{ if } x_i=0 \\ \C \text{ otherwise} \end{cases}$, which implies the wanted result. This is an easy application of the previous Lemma to the projection of $\I \cap \prod_i \varepsilon_i$ on coordinates $\{i \mid x_i \neq 0\}$.
\end{proof}

\subsection*{General case}

We now work with a general (not necessarily diagonal) matrix $A$, whose eignvenvalues are not roots of unity and pairwise non equivalent. The following theorem is proved by induction on $d=\sum d_i$. We write $\J$ for the set of jordan blocks of $A$, and $s = |\J|$.

\begin{theorem}\label{thm:all_mod_1_not_equiv}
Semilinear invariants for $A$ are unions of sets of the form
\[
\prod_{J  \in \J} \C^{p_J} \times \{0\}^{d(J) - p_J},
\]
where for each $J$, $p_J$ is an integer in $\{0,\dots, d(J)\}$.
\end{theorem}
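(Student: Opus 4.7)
I proceed by induction on the total dimension $d = \sum_{J \in \J} d(J)$. The base case $d = |\J|$ corresponds to $A$ being diagonal and is exactly Theorem~\ref{thm:all_mod_1_diag}, as the tuples with $p_J \in \{0,1\}$ enumerate the $\varepsilon$-products appearing there. For the inductive step, let $\I$ be a closed semilinear invariant, and for each $x \in \I$ write $p_J(x) := \max\{j : x_{(J,j)} \neq 0\}$ (with $p_J(x) = 0$ when $x_J = 0$), and let $X_{p(x)} := \prod_J \C^{p_J(x)} \times \{0\}^{d(J) - p_J(x)}$. Since $X_{p(x)}$ is the smallest atomic set of the prescribed form containing $x$, it suffices to establish $X_{p(x)} \subseteq \I$ for every $x \in \I$; then $\I = \bigcup_{x \in \I} X_{p(x)}$ is a (necessarily finite) union of atomic sets.

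Fix $x \in \I$. If some block $J_0$ satisfies $p_{J_0}(x) < d(J_0)$, then $x$ lies in the $A$-stable subspace $V := \{z : z_{(J_0, d(J_0))} = 0\}$, because the last coordinate of a Jordan block remains zero under its action. The intersection $\I \cap V$ is a closed semilinear invariant for the restriction $A|_V$, whose Jordan structure coincides with that of $A$ except that $J_0$ is shortened by one dimension; in particular the list of eigenvalues (with block multiplicities) still satisfies the non-equivalence hypothesis, and the total dimension drops to $d - 1$. The induction hypothesis therefore yields $X_{p(x)} \subseteq \I \cap V \subseteq \I$.

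The remaining case is $p_J(x) = d(J)$ for every $J$, equivalently $x_{(J, d(J))} \neq 0$ for all $J$; then $X_{p(x)} = \C^d$ and we must show $\I = \C^d$. I mirror the four-step argument of Lemma~\ref{lem:all_mod_1_diag}: (i) $\dim_\R \I$ is even, by pigeonholing some iterate $A^n$ that preserves a real direction subspace and arguing even-dimensionality from the irrational rotational structure of $A$ in its real canonical form; (ii) $\dim_\R \I = 2d$, because otherwise infinitely many orbit points $A^n x$ would lie in a common real codim-$\geq 2$ affine subspace, yielding a nontrivial polynomial-exponential identity---each coordinate $(A^n x)_{(J,j)}$ is a polynomial in $n$ times $\lambda_J^n$---which is non-degenerate by the non-equivalence hypothesis, contradicting Theorem~\ref{thm:SML}; (iii) the boundary $\partial \I$ is itself a closed semilinear $A$-invariant; and (iv) if $\I \neq \C^d$, then $\partial \I$ still contains a point with $z_{(J, d(J))} \neq 0$ for every $J$, so (i)--(ii) applied to $\partial \I$ force it to be full-dimensional, absurd since it is the topological boundary of a set.

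The hardest step will be (iii). The diagonal proof derived $A$-stability of $\partial \I$ from $A^{-1}$-stability of $\I$, itself a consequence of the density of $\{A^n : n \in \N\}$ in the compact group $T_A$ of Kronecker-stable diagonal matrices, which contains $A^{-1}$. This closure is unavailable in the non-diagonal regime, where $\{A^n\}$ has polynomial growth in the non-leading coordinates of each block and escapes every compact subgroup. I plan to work around this by exploiting directly the polyhedral structure of $\I$: the unbounded coordinate growth of the orbit forces specific recession directions inside the constituent closed polyhedra of $\I$, and combining these with Kronecker-type density on the purely oscillatory leading coordinates of each block should produce sufficiently many boundary points with all last-block coordinates nonzero to run the contradiction.
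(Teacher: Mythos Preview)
Your inductive frame---reduce to showing that a closed semilinear $A$-invariant containing a point with all last-block coordinates nonzero must be all of $\C^d$---matches the paper's, and the dimension-dropping step through $V = \{z_{(J_0,d(J_0))}=0\}$ is fine. The genuine gap is exactly where you locate it: step~(iii). In the diagonal lemma, $\partial\I$ is $A$-stable because $\I$ is $A^{-1}$-stable, and that in turn relies on $A^{-1}\in\closure{\{A^n:n\in\N\}}$. With a nontrivial Jordan block this closure fails (the superdiagonal entries of $A^n$ have modulus growing like~$n$), and from $A\I\subseteq\I$ one only gets that $\closure{\compl\I}$ is $A^{-1}$-stable, not $A$-stable; hence $\partial\I=\I\cap\closure{\compl\I}$ carries no a~priori $A$-stability. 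The claim is true a~posteriori---once the theorem holds, every invariant is a coordinate subspace and coincides with its boundary---but that is circular. Your proposed workaround via recession directions and Kronecker density on the last coordinates is not an argument: recession cones of the constituent polyhedra inherit no $A$-invariance either, and it is unclear how they would produce an entire $A$-orbit inside $\partial\I$, which is what (ii) needs. (Steps~(i) and~(ii) also require care you have not supplied: for~(ii) you need polynomial-exponential Skolem--Mahler--Lech, and taking real parts introduces the conjugate roots $\overline{\lambda_J}$, so non-degeneracy additionally demands $\lambda_J\lambda_{J'}\notin\U$ for $J\neq J'$, which is not among the hypotheses.)

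The paper abandons the four-step template beyond the diagonal base case. It first proves an intermediate result (Lemma~\ref{lem:inter}): either $\I$ is already of the stated form, or $\I\supseteq\F:=\{z:z_{(J,d(J))}=0\text{ for all }J\}$. This is obtained by projecting $\I$ onto the last coordinate of each block---there the diagonal theorem applies---and, when that projection is all of $\C^s$, running a compactness argument on the last \emph{two} coordinates of each block, which is precisely where the uniform-section bound of Lemma~\ref{lem:section} is invoked. The contradiction in the main proof then comes not from $\partial\I$ but from analysing the trace of $\I$ on small balls centred along the $A$-orbit of a generic point of $\F$: near $A^kz$ this trace is the preimage of a cone $C_k\subseteq\C^s$ in the last coordinates; the $C_k$ take only finitely many values, each is full-dimensional, and none is all of $\C^s$ (here Lemma~\ref{lem:inter} is re-applied to the $A^{-1}$-invariant $\closure{\compl\I}$, after noting that $A^{-1}$ is conjugate to a matrix of the same Jordan shape). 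Density of $\{(\lambda_J^n)_J:n\in\N\}$ combined with a Lipschitz estimate on powers of $A$ then manufactures a point of $\I$ whose image under some $A^m$ escapes $\I$.
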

Recall that if $S \subseteq \{(J,i), J \in \J, i \leq d(J)\}$ is a subset of dimensions, $\pi_S : \C^d \to \C^S$ denotes the projection on the coordinates in $S$. We let $\last = \{(J,d(J)), J \in \J\}$ be the set of last coordinates of each block, and for each $J$, we let $P_J = \pi_{\{(J,d(J))\}}^{-1}(\{0\}) \subseteq \C^d$. Note that any set of the form $\prod_J \C^{p_J} \times \{0\}^{d(J) - p_J}$ which is not $\C^d$, is included in $\cup_J P_J$.

The case $d=1$ is proved in section~\ref{subsec:diag}.

We start the induction with an intermediate result.

\begin{lemma}\label{lem:inter}
A semilinear invariant for $A$ is either a union of sets of the form $\prod_{J  \in \J} \C^{p_J} \times \{0\}^{d(J) - p_J}$, or contains $\pi_{last}^{-1}(\{0\}) = \prod_J \C^{d(J)-1} \times \{0\} = \F.$
\end{lemma}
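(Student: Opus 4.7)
My plan is to prove the lemma's dichotomy by projecting $\I$ onto the last coordinate of each Jordan block, invoking Theorem~\ref{thm:all_mod_1_diag} on this projection, and then splitting the analysis into two cases. The easier case leverages the induction hypothesis of Theorem~\ref{thm:all_mod_1_not_equiv} on a strictly smaller dimension; the harder case uses polynomial divergence in the penultimate coordinates of non-diagonal blocks, combined with density of the rotations, to force $\I$ to contain $\F$.

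First, observe that $\I_L = \pi_{\last}(\I) \subseteq \C^{|\J|}$ is closed semilinear: a closed semilinear set is a finite union of closed polyhedra, and Fourier--Motzkin elimination preserves closedness under coordinate projection. Since $A$ acts diagonally on the last coordinates via $\mathrm{diag}(\lambda_J)_{J \in \J}$, and the $\lambda_J$ satisfy the hypotheses of Theorem~\ref{thm:all_mod_1_diag}, that theorem yields $\I_L = \bigcup_{S \in \mathcal{S}} C_S$ where $C_S = \prod_{J \in S}\C \times \prod_{J \notin S}\{0\}$ for some $\mathcal{S} \subseteq 2^{\J}$.

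Case~1: $\I_L \subseteq \{0\}$, so $\I \subseteq \F$. Then $\I$ is a closed semilinear invariant for $A|_{\F}$, which preserves the standing hypotheses (same eigenvalues, Jordan blocks of sizes $d(J)-1$) on strictly smaller total dimension $d-|\J|$. The induction hypothesis of Theorem~\ref{thm:all_mod_1_not_equiv} expresses $\I$ as a union of nested-zero products inside $\F$, which lift (with $p_J \leq d(J)-1$) to nested-zero products in $\C^d$, giving the first alternative of the lemma.

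Case~2: Some corner $C_S$ with $S \neq \emptyset$ lies in $\I_L$, so there exists $x \in \I$ with $x_{(J^*, d(J^*))} \neq 0$ for some $J^*$. I claim that $\F \subseteq \I$, directly establishing the second alternative. The fully diagonal instance (all $d(J) = 1$) is already handled by Theorem~\ref{thm:all_mod_1_diag} applied to $\I$ itself, so I may assume $d(J^*) \geq 2$. Then the coordinate $(A^n x)_{(J^*, d(J^*)-1)}$ grows linearly in $n$, while the last coordinate stays on a circle of radius $|x_{(J^*, d(J^*))}|$. By pigeonhole, infinitely many orbit points lie in a single closed polyhedron of the decomposition of $\I$, which therefore admits an unbounded recession direction along the $(J^*, d(J^*)-1)$ axis. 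The density of $\{\lambda_{J^*}^n\}$ in $S^1$ (since $\lambda_{J^*} \notin \U$), together with pigeonhole over the finite list of polyhedra and the closedness of each recession cone, ensures that every angular direction along this axis lies in some recession cone, producing full coordinate lines inside $\I$. Convexity of individual polyhedra, invariance under $A$, and topological closure then inflate these lines into affine planes, and inductively propagate across all blocks to cover $\F$. The main technical obstacle is precisely this last propagation step: making rigorous how polynomial divergence along one non-diagonal block, combined with Kronecker-type density and the interactions among several blocks of potentially different sizes, forces $\I$ to contain the whole of $\F$.
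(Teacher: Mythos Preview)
Your case split is wrong. In Case~2 you assume only that $\I_L$ contains some $C_S$ with $S\neq\emptyset$ and then claim $\F\subseteq\I$, but this fails already for two Jordan blocks $J_1,J_2$ of size~$2$ with $\I=\C^2\times\{0\}^2$ (i.e.\ $p_{J_1}=2$, $p_{J_2}=0$): this $\I$ is a closed semilinear invariant, $\I_L=\C\times\{0\}$ contains $C_{\{J_1\}}$, yet $(1,0,1,0)\in\F\setminus\I$. The lemma holds here because $\I$ \emph{is} of nested-zero form, not because $\F\subseteq\I$. The correct split, and the one the paper uses, is $\I_L=\C^s$ versus $\I_L\subsetneq\C^s$. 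In the latter case Theorem~\ref{thm:all_mod_1_diag} gives $\I_L\subseteq\bigcup_J P'_J$, so $\I=\bigcup_J\bigl(\I\cap\pi_{\last}^{-1}(P'_J)\bigr)$; each piece is invariant and lives in the hyperplane $\{z_{J,d(J)}=0\}$, so one projects out that \emph{single} coordinate and invokes the induction hypothesis in dimension $d-1$. Your Case~1 only covers the much smaller sub-case $\I_L\subseteq\{0\}$.

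Even after repairing the split so that Case~2 reads $\I_L=\C^s$, your recession-cone sketch is a different route from the paper and the ``propagation step'' you flag is essentially the entire content. The paper sidesteps it by using the induction hypothesis a \emph{second} time, now on $\pi_{\compl\last}(\I\cap\F)$ (invariant for the matrix with every block shortened by~$1$): if $\F\not\subseteq\I$ then some penultimate coordinate must vanish on all of $\I\cap\F$. The contradiction then comes not from recession directions but from a compactness argument on the projection to the last \emph{two} coordinates of each block: using the section-bound Lemma~\ref{lem:section} one builds a bounded sequence in $\I$ with last coordinates tending to~$0$; any limit has a zero penultimate coordinate in some block $J''_0$; then a carefully chosen iterate $A^{k(n)}$ pushes all penultimate coordinates into a fixed annulus, yielding a limit in $\I\cap\F$ with every penultimate coordinate nonzero. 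Your approach would need a replacement for this double use of induction and for the section-bound/compactness mechanism, and none is supplied.
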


\begin{proof}
Let $\I$ be a semilinear invariant for $A$. Consider $\I'=\pi_\last(\I) \subseteq \C^s$. If $S' \subseteq \J$, we let $\pi'_{S'} : \C^s \to \C^{S'}$ denote the projection on the coordinates of $\C^s$ corresponding to last coordinates of blocks from $S'$. Just like previously, let $P'_J = {\pi'}_{J}^{-1}(\{0\}) \subseteq \C^s$. Since $\I'$ is stable for $\diag(\lambda_1, \dots,\lambda_s)$, it must be that $\I'$ is either $\C^s$, or $\I' \subseteq \cup P'_J$, by Theorem~\ref{thm:all_mod_1_diag}. We reduce to the former case.

Indeed, if $\I' \subseteq \cup P'_J$, we let $\I_J = \pi_{\last}^{-1}(P'_J)$, so that $\I = \cup \I_J.$ Now, $\I_J \subseteq P_J$, so $\pi_{\compl{\{(J, d(J))\}}}(\I_J)$ is stable for the matrix $A'$ obtained from $A$ just by diminishing the dimension of block $J$ by 1. By induction, $\pi_{\compl{\{(J, d(J))\}}}(\I_J)$ is a union of sets of the form $\prod_{J' \in \J'} \C^{p_{J'}} \times \{0\}^{d(J') - p_{J'}}$ (where $\J'$ is the set of Jordan blocks of $A'$), so $\I_J$ has the wanted form, and so does $\I = \cup I_J$.

Hence we assume that $\I'=\C^s.$ We aim to show that $\I \supseteq \F$, or equivalentely, $\pi_{\compl{\last}}(\I \cap \F) = \C^{d-s}.$ By induction, since it is stable for the matrix obtained from $A$ by diminishing the dimension of each block by 1, we know that either $\pi_{\compl{\last}}(\I \cap \F) = \C^{d-s},$ or $\pi_{\compl{\last}}(\I \cap \F) \subseteq \cup_J \pi_{\{(J,d(J)-1)\}}^{-1}(\{0\}).$ We assume the latter towards contradiction. In plain English, any $z \in \I$ that is $0$ on the last coordinate of each block (that is, $z \in \I \cap \F$) must have one of its prior coordinates (that is, $(J, d(J)-1)$ for some $J$) which is zero. We will now project on only the last two coordinates of each block. Formally, we let 
\[
\lasttwo = \bigcup\limits_{\substack{J \in \J \\ d(J) \geq 2}} \{(J, d(J)-1), (J, d_J)\} \cup \bigcup\limits_{\substack{J \in \J \\ d(J) = 1}} \{(J, 1)\},
\]
and consider $\I''= \pi_{\lasttwo} (\I)$.
Then $\I''$ is stable for $A''$, the matrix obtained from $A$ by reducing the size of each block of size $\geq 3$ to 2. We let $\J''$ denote the set of Jordan blocks of $A''$. In particular, any $J'' \in \J''$ is such that $d(J'') \in \{1,2\}$.

For each $J'' \in \J''$, we let $(z_{J'',d(J'')}^{(n)})_{n \in \N}$ be a decreasing sequence of complex numbers that goes to $0$, and such that for any given $n$, the moduli of the $z_{J'',d(J'')}^{(n)}$ are all equal. Since $\I'=\C^s$, for all $n\in \N$ and each $J''$ such that $d(J'')=2$ there exists $z_{J'',1}^{(n)}$ such that $z^{(n)} \in \I''$. By Lemma~\ref{lem:section}, we may pick these values such that $z^{(n)}$ is bounded. Up to extracing a subsequence, we assume without loss of generality that $z^{(n)}$ converges, to, say, $z \in \I''$. Since $z_{J'',d(J'')} = 0$ for all $J'' \in \J''$, $z \in \pi_\lasttwo(\I \cap \F)$, so there must exist $J''_0$ with $d(J''_0)=2$ such that $z_{J''_0, 1} = 0$. We let $\lambda_0$ be the eigenvalue of block $J''_0$. We put $\delta = \min \left(1, \min_{\{J'' \mid d(J'')=2 \text{ and } z_{J'', 1} \neq 0\}}\{|z_{J'',1}|\}\right) >0$. We let $n$ be large enough so $||z^{(n)} - z|| \leq \delta/4 $.  Consider $({A''}^k z^{(n)})_{J''_0,1} = \lambda_0^n (z_{J''_0, 1}^{(n)} + k \lambda_0^{-1} z_{J''_0, 2}^{(n)})$. Let $k(n)=\left \lceil \frac \delta {2|z_{J''_0, 2}^{(n)}|} \right \rceil$. Note that $k(n)$ does not depend on the choice of $J''_0$ since the $z_{J'',d(J'')}^{(n)}$ all have the same moduli. Then
\[
\begin{aligned}
\delta /4=\delta /2 - \delta/4 \leq k(n) |z_{J''_0, 2}^{(n)}| - |z_{J''_0, 1}^{(n)}| \leq |({A''}^{k(n)} z^{(n)})_{J''_0,1}| \\ \leq |z_{J''_0, 1}^{(n)}| + k(n) |z_{J''_0, 2}^{(n)}| \leq \delta/4 + (\frac{\delta}{2|z_{J''_0, 2}^{(n)}|} + 1)|z_{J''_0,2}^{(n)}| \leq \delta.
\end{aligned}
\]
Likewise, we may bound away from zero (which is the reason motivating the choice of $\delta$), and also from above, the moduli of $({A''}^{k(n)} z^{(n)})_{J''_1, 1}$ when $J''_1$ is such that $z_{J''_1,1}\neq 0$. More precisely, 
\[
\begin{aligned}
\delta/4 = \delta - (\delta/4 + \delta/2) \leq |z_{J''_1,1}| - |z_{J''_1,1}^{(n)} - z_{J''_1,1} + k(n) \lambda^{-1} z_{J''_1,2}^{(n)}| \leq |({A''}^{k(n)}z)_{J''_1,1}| \\ \leq |z_{J''_1,1}^{(n)} - z_{J''_1,1}| + |z_{J''_1,1}| + k(n)|z_{J''_1,2}^{(n)}| \leq  \delta /4 + |z_{J''_1,1}| + \delta /2 + \delta /4 \leq |z_{J''_1,1}| + \delta.
\end{aligned}
\]
Now, $\I''$ being stable for $A''$, the sequence $({A''}^{k(n)}z^{n})_n$ has its elements in $\I''$, and ultimately lies in the compact
\[
K = \{u \mid \forall J'', u_{J'',d(J'')} \leq \delta/4 \text{ and } \forall J'' \text{ such that } d(J'')=2, \delta /4 \leq |u_{J'',1}| \leq |z_{J'',1}| + \delta \}.
\]

We may then extract a converging subsequence in $K$, with its limit in $\I''$ such that the last coordinate of each block is zero whereas the previous one is nonzero, a contradiction. This concludes the proof of the Lemma.
\end{proof}

With Lemma~\ref{lem:inter} in hands, we now move on to the proof of Theorem~\ref{thm:all_mod_1_not_equiv}.

\begin{proof}
Let us assume for contradiction that $\I$ is not in the form of the statement of the Theorem. In particular, $\I \neq \C^s$. By Lemma~\ref{lem:inter}, $\F \subseteq \I.$ The set $\I$ is a finite union of closed polyhedra, write $\I = \bigcup_{\P \in P} \P.$ Each polyhedron $\P$ is a finite intersection of closed half-spaces, write $\P = \bigcap_{\H \in H_\P} \H.$ Let $H = \bigcup_{\P \in P} \H_\P$.

We start with a restriction: we may restrict to the case where there is a polyhedron $\P_0 \in P$ such that $\P_0 \cap \F$ has dimension (over $\R$) $2(d-s)$, and $\P_0$ is not included in $\F$.

Let $P_{full}$ denote the set of polyhedra $\P$ in $P$ such that $\P \cap \F$ has dimension $2(d-s)$. Since $\F \subseteq \I,$ $P_{full} \neq \emptyset$. Assume that for each polyhedron $\P$ of $P_{full}$ we have $\P \subseteq \F$. Let
\[
\I' = \closure{\I \setminus \F} \subseteq \bigcup_{\P \notin P_{full}} \P.
\]
Since $\I' \cap \F$ has dimension at most $2(d-s)-1 < \dim_\R(\F)$, it may not be the case that $\F \subseteq \I'$. Now $\compl \F$ is stable for $A$, so so is $\I'$. Hence $\I'$ is, by Lemma~\ref{lem:inter} in the form of the Theorem. Finally, $I = \Q \cup \I''$ is in the wanted form. Hence, we now assume the existence of $\P_0 \in P_{full}$ which is not contained in $\F$.

Let $H_{general}$ be the family of half-spaces in $H$ which are not of the form $\pi_\last^{-1}(\H')$ where $\H'$ is a half-space of $\C^s$ with $0 \in \partial \H'$. Equivalentely, half-spaces in $H_{general}$ are those which do not contain $\F$ in their border. Now if $\H \in H_{general}$ then $\partial \H \cap \F$ has dimension $<2(d-s)$.

It follows that the countable union $\bigcup_{\H \in H_{general}} \bigcup_{k \in \N} A^{-k} \partial \H \cap \F$ has dimension $<2(d-s)$, so it may not cover $\P_0 \cap \F$. Let $z \in \P_0 \cap \F$ be out of this union.

Let $k$ in $\N$. We choose $\varepsilon_k>0$ such that for each $\H \in H$, the set $B(A^kz, \varepsilon_k) \cap \H$ is either empty, the whole ball $B(A^kz, \varepsilon_k)$, or a half-ball of the form $B(A^kz, \varepsilon_k) \cap \pi_\last^{-1}(\H')$, where $\H'$ is a half-space of $\C^s$ such that $0 \in \partial \H'.$ This is achieved by the following case disctinction:
\begin{itemize}
\item Either $A^kz$ is in $\interior{\H},$ then there exists $\varepsilon_k > 0$ such that $B(A^k z, \varepsilon_k) \cap \H = B(A^k z, \varepsilon_k).$
\item Or $A^kz$ is in $\partial \H$. Recall that by construction $A^kz$ is not in $\partial \H$ for $\H$ in $H_{general}$, so $\H \notin H_{general}$ and we are in the third case.
\item Or $A^kz$ is not in $\H$, in which case there exsits $\varepsilon_k>0$ such that $B(A^kz, \varepsilon_k) \cap \H = \varnothing.$ 
\end{itemize}
Without loss of generality, we pick $(\varepsilon_k)_k$ to be decreasing.

It follows that for a polyhedron $\P \in P$, its trace on $B(A^kz, \varepsilon_k)$ is either empty or of the form
\[
B(A^kz, \varepsilon_k) \cap \P = B(A^kz, \varepsilon_k) \cap \pi_{last}^{-1}\left(\bigcap_{\H \in H_\P^{k}} \H \right)
\]
where $H_\P^{k}$ is a finite set of closed half-spaces $\H$ of $\C^s$ such that $0 \in \partial \H$. For $\P \in P$, let $C_{\P,k} = \bigcap_{\H \in H_{\P}^k} \H$ and $C_k = \bigcup_{\P \in P} C_{\P,k}$. By construction, forall $k \in \N$,
\[
B(A^kz, \varepsilon_k) \cap \I = B(A^kz, \varepsilon_k) \cap \pi_{\last}^{-1}(C_k).
\]
We make three claims.
\begin{itemize}
\item $C_k$ has full dimension (over $\R$) $2s$. Indeed, since $z$ avoids $\bigcup_{\H \in H_{general}} \partial \H$, and $\dim_\R(\P_0 \cap Q)=2s$, $C_0$ has full dimension. Since $\diag\{\lambda_1,\dots,\lambda_s\} C_k \subseteq C_{k+1}$, the claim follows by a easy induction. 
\item $C_k$ is not all of $\C^s$. For this, let us consider $\closure{\compl \I},$ a closed semilinear set which is stable for $A^{-1}$. Under an appropriate diagonal change of basis, which, in particular, stabilizes any set of of form $\prod_{J \in \J} \C^{p_J} \times \{0\}^{d(J) - p_J}$, $A^{-1}$ rewrites as $\diag(\Jord{d(J)}{\lambda_J^{-1}}, J \in \J).$ Hence Lemma~\ref{lem:inter} applies to $\closure{\compl \I}$. Since $\I \neq \C^s$, $\compl \I$ is nonempty, and since it is an open set, it must be fully dimensional. Hence, either $\closure{\compl \I} \subseteq \F$, that is, each point of $Q$ (in particular, $A^kz$) has arbitrary close points that are not in $\I$, which implies the claim.
\item There are finitely many different sets $C_k$ for $k$ in $\N$. Indeed, $C_k$ is determined by finitely many queries, namely whether $A^k z$ is in $\interior{\H}, \partial \H$ or not in $\H$, for each $\H$ in $H$. Note that on the other hand, $\varepsilon_k$ does depend on $k$, and may take arbitrarily small values if $A^k z$ gets arbitrarily close to some $\H$ in $H$ when $k$ ranges in $\N$.
\end{itemize}
As previously stated, $\{(\lambda_1^k,\dots,\lambda_s^k), k \in \N\}$ is dense in $\{(\lambda_1^t,\dots,\lambda_s^t), t \in \R\} \subseteq T_A$. Hence, there exists an increasing sequence $\varphi : \N \to \N$ and $\varepsilon_k/2 \leq \mu_k \leq \varepsilon_k$ such that forall $k$, $(\lambda_1^{\varphi(k)},\dots,\lambda_s^{\varphi(k)}) = (\lambda_1^{\mu_k},\dots,\lambda_s^{\mu_k})$. Let $C$ be such that $C=C_{\varphi(k)}$ for infinitely many $k$.
Combining the diagonal case from section~\ref{subsec:diag}, the fact that $C$ has full dimension, and that $C$ is not $\C^s$, we know that $C$ cannot stabilize $\diag(\lambda_1,\dots,\lambda_s)$. In particular, there is $\tilde u\in C$ such that $\diag(\lambda_1,\dots,\lambda_s) \tilde u \notin C$. Let $t_0 = \inf\{t \leq 1, \diag(\lambda_1^t,\dots,\lambda_s^t)\tilde u \notin C\} \geq 0$, and $u = \diag(\lambda_1^{t_0},\dots,\lambda_s^{t_0})\tilde u \in C$, since $C$ is closed. Note that for any small enough $\varepsilon>0, \diag(\lambda_1^{\varepsilon},\dots,\lambda_s^{\varepsilon}) u \notin C$. We let $N \in \N$ be such that for $n \geq N$, $\varepsilon_n$ is small enough in this sense, and $N'$ be such that $\varphi(N')-\varphi(0) \geq N$. Recall that $C$ is defined using half-spaces which contain $0$ in their border, hence it is invariant under multiplication by positive reals (a cone). Hence we may assume that $||u|| \leq 2^{-\varphi(N')}\varepsilon_{\varphi(N')}$.

We may finaly give the last construction. Let $v \in B(A^{\varphi(0)}z, 2^{-\varphi(N')}\varepsilon_{\varphi(N')}) \cap \pi_{last}^{-1}(\{u\}) \subseteq B(A^{\varphi(0)}z, \varepsilon_{\varphi(0)}) \cap \pi_\last^{-1}(C) \subseteq \I$. We argue that $A^{\varphi(N') - \varphi(0)}v \in B(A^{\varphi(N')}z, \varepsilon_{\varphi(N')})$. Indeed, $A$ is $2$-lipschitzian, so $A^{\varphi(N') - \varphi(0)}$ is $2^{\varphi(N')}$-lipschitzian, so
\[
||A^{\varphi(N') - \varphi(0)}v - A^{\varphi(N')}z || \leq 2^{\varphi(N')} ||v-A^{\varphi(0)}z|| \leq \varepsilon_{\varphi(N')}.
\]
Hence, $A^{\varphi(N') - \varphi(0)}v \in B(A^{\varphi(N')}z, \varepsilon_{\varphi(N')}) \cap \I = B(A^{\varphi(N')}z, \varepsilon_{\varphi(N')}) \cap \pi_\last^{-1}(C)$, but $\pi_\last(A^{\varphi(N') - \varphi(0)}v) = \diag (\lambda_1^{\varphi(N')},\dots,\lambda_s^{\varphi(N')})u = \diag(\lambda_1^{\mu_{N'}},...,\lambda_s^{\mu_{N'}})$, and since $\mu_{N'} \leq \varepsilon_{N'} \leq \varepsilon_{N},$ we obtain that $\pi_\last(A^{\varphi(N') - \varphi(0)}v) \notin C,$ a contradiction.
\end{proof}

Note that Theorem~\ref{thm:all_mod_1_not_equiv} gives a minimal semilinear invariant for $A$ which contains a given $x \in \C^d$, namely,
\[
\I = \prod_{J \in \J} \C^{p_J} \times \{0\}^{d(J) - p_J},
\]
where $p_J = \max\{i \leq d(J) \mid x_{J,i} \neq 0\}$. In particular, if the instance is reduced, $x_{J,d(J)} \neq 0$ so the minimal seminlinear invariant is $\C^d$.

\subsection{Some eigenvalues may be equivalent}\label{subsec:mod_one_not_rou}

We will show that this case reduces to the previous one. We first deal with the case where all equivalent eigenvalues are in fact equal, which we then extend to the general case. Let us start with a Lemma.

\begin{lemma}\label{lem:reducdim}
Let
\[
A= \begin{bmatrix}
A'' \\
& \Jord{d_1}{\lambda} \\
& & \Jord{d_2}{\lambda}
\end{bmatrix},
\]
and $x = (x'', x_1, x_2) \in \C^s$ with $s=s'' + d_1 + d_2$, $s''$ being the dimension of $A''$ (hence $s$ is the dimension of $A$). The $i-th$ coordinate of $x$ in the $\Jord{d_1} \lambda$ block (resp. in the $\Jord{d_2} \lambda$ block) is denoted $x_{1,i}$ (resp. $x_{2,i}$). We assume that $x_{1, d_1} \neq 0$ and let 
\[
A' = \begin{bmatrix}
A'' \\
 & J_{d_1}(\lambda) \\
 & \frac {x_{2,d_2}}{x_{1,d_1}} E_{d_2-1,d_1} & J_{d_2 - 1}(\lambda)
\end{bmatrix},
\]
of size $s'=s-1,$ where $E_{d_2-1,d_1}$ denotes the matrix with $d_2-1$ rows and $d_1$ columns with a single 1 in the bottom right corner. We let $x' = (x'', x_1, x'_2)$ be given by $x'_2 = (x_{2,1}, x_{2,2} \dots, x_{2, d_2-1})$. We assume that $\I' \subseteq \C^{s'}$ is a minimal semilinear invariant for $A'$ and $x'$. Then
\[
\I = \{z \in \C^s \mid z' \in \I' \text{ and } x_{1,d_1} z_{2,d_2} = x_{2, d_2} z_{1,d_1}\} \subseteq \C^s
\]
is a minimal semilinear invariant for $x,A$.
\end{lemma}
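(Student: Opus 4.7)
The plan is to observe that the linear subspace
\[
V = \{z \in \C^s : x_{1,d_1} z_{2,d_2} = x_{2,d_2} z_{1,d_1}\}
\]
is $A$-invariant and contains $x$, and that once $V$ is parametrized by dropping the last coordinate of the second block, the action of $A$ on $V$ is precisely $A'$. The minimality of $\I$ will then transfer from that of $\I'$.

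The first step is to check that $V$ is $A$-stable: since $(Az)_{1,d_1} = \lambda z_{1,d_1}$ and $(Az)_{2,d_2} = \lambda z_{2,d_2}$, the defining linear relation of $V$ is simply scaled by $\lambda$, hence preserved, and $x \in V$ is immediate. Because $x_{1,d_1} \neq 0$, the projection $\pi : \C^s \to \C^{s'}$ forgetting coordinate $(2,d_2)$ restricts to a linear bijection $\pi|_V : V \to \C^{s'}$ with linear inverse $z' \mapsto (z'', z_1, z_{2,1},\ldots, z_{2,d_2-1}, (x_{2,d_2}/x_{1,d_1})\, z_{1,d_1})$. In particular $\pi|_V$ is a homeomorphism, preserving closedness and semilinearity in both directions. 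A direct computation then shows that $\pi \circ A|_V \circ \pi|_V^{-1} = A'$: every coordinate evolves as under the Jordan blocks of $A$, except that position $(2, d_2-1)$ picks up the extra summand $z_{2,d_2} = (x_{2,d_2}/x_{1,d_1})\, z_{1,d_1}$, which is exactly what the off-block coupling $(x_{2,d_2}/x_{1,d_1})\, E_{d_2-1,d_1}$ in $A'$ encodes. One also checks $\pi(x) = x'$.

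With this conjugation in hand, $\I = \pi^{-1}(\I') \cap V$ is immediately a closed, semilinear, $A$-stable set containing $x$. For minimality, I would take any closed semilinear invariant $\J$ for $(A, x)$ and consider $\J \cap V$: it is closed, semilinear, $A$-stable, and contains $x$, so $\pi(\J \cap V)$ is a closed semilinear $A'$-stable set in $\C^{s'}$ containing $x'$. Minimality of $\I'$ then gives $\pi(\J \cap V) \supseteq \I'$, hence $\J \supseteq \J \cap V \supseteq \pi^{-1}(\I') \cap V = \I$. The argument is essentially linear algebra bookkeeping; the one step that needs genuine attention is the conjugation identity $\pi \circ A|_V \circ \pi|_V^{-1} = A'$, which is precisely what justifies the exotic-looking off-block coupling appearing in the definition of $A'$.
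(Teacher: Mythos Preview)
Your argument is correct and follows essentially the same approach as the paper: both verify that the hyperplane $V$ is $A$-stable, that the action of $A$ on $V$ is conjugate via the projection $\pi$ to $A'$ (the paper does this by an explicit coordinate computation of $(Az)'$, you phrase it as the conjugation identity $\pi\circ A|_V\circ\pi|_V^{-1}=A'$), and then derive minimality by intersecting an arbitrary invariant with $V$ and projecting. The only cosmetic difference is that you package the bookkeeping more abstractly as a linear bijection $\pi|_V$, whereas the paper writes out the coordinates directly.
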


Before going on to the proof, let us remark that if $d_2=1$, $E_{d_2-1, d_1}$ and $J_{d_2 - 1}(\lambda)$ are both empty matrices (and the Lemma also holds).

\begin{proof}
Clearly $x \in \I$. Let us first check that $\I$ is invariant for $A$. Let $z \in \I$. Then

\[
\begin{aligned}
(Az)' &= \left(A'' z'', J_{d_1}(\lambda) z_1, \left(J_{d_2}(\lambda)z \right)' \right) \\
&= \left(A'' z'', J_{d_1}(\lambda) z_1, \left(J_{d_2}(\lambda)z \right)_1, \dots,\left(J_{d_2}(\lambda)z \right)_{d_2 - 1}  \right) \\
&= \left(A'' z'', J_{d_1}(\lambda) z_1, \left(J_{d_2}(\lambda)z \right)_1, \dots,\left(J_{d_2}(\lambda)z \right)_{d_2 - 2}, \lambda z_{2, d_2 -1} + z_{2, d_2} \right) \\
&= \left(A'' z'', J_{d_1}(\lambda) z_1, \left(J_{d_2}(\lambda)z \right)_1, \dots,\left(J_{d_2}(\lambda)z \right)_{d_2 - 2}, \lambda z_{2, d_2 -1} + \frac{x_{2, d_2}}{x_{1, d_1}}z_{1, d_1} \right) \\
&= A'z' \in \I',
\end{aligned}
\]
and $x_{1, d_1} (Az)_{2, d_2} = x_{1, d_1} \lambda z_{2, d_2} = x_{2, d_2} \lambda z_{1, d_1} = x_{2, d_2} (Az)_{1, d_1}$. Hence $\I$ is invariant for $A$.

We now show minimality. Let $\P$ be a semilinear invariant containing $x$. Consider $ \P_0= \P \cap \{z \mid x_{1, d_1} z_{2, d_2} = x_{2, d_2} z_{1, d_1}\}$, and $\P_0' \subseteq \C^{s'}$ be its projection on all but the last coordinate. We show that $\P_0'$ is invariant for $A'$. Let $z' \in \P_0'$, and let $z$ be $z'$ extended with $z_{2, d_2} = \frac{x_{2, d_2}}{x_{1, d_1}} z_{1, d_1}$. Then $z \in \P_0$, so $Az \in P_0$. Now, $A'z' = (Az)'$ (the proof for this is similar as that of $\I$'s stability), and so $A'z' \in P_0'$. Hence, $\I' \subseteq P_0'$ by minimality of $\I'$, and so $\I \subseteq \P_0 \subseteq P$. 
\end{proof}

Through repeated applications of Lemma~\ref{lem:reducdim} which reduce the dimension and Lemma~\ref{lem:reduc} which renormalize the reduced instance, we obtain the following Theorem.

\begin{theorem}\label{thm:equalornonequiv}
Let $A$ have its eigenvalues $\lambda_1,\dots, \lambda_s \notin \U$ such that either $\lambda_i = \lambda_j$ or $\lambda_i / \lambda_j \notin \U$, and $x \in \C^s$ which is nonzero on the last coordinate of each block. Then there exists a minimal semilinear invariant for $A,x$ which can be constructed in polynomial time and has polynomial size.
\end{theorem}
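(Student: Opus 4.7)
The plan is to proceed by induction on the total number of blocks sharing eigenvalues, using Lemma~\ref{lem:reducdim} to progressively merge equivalent blocks until we reach an instance where all eigenvalues are pairwise non-equivalent, at which point Theorem~\ref{thm:all_mod_1_not_equiv} yields the minimal invariant explicitly.

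More precisely, first I would group the Jordan blocks of $A$ by their eigenvalue. Since by hypothesis any two equivalent eigenvalues are actually equal, each equivalence class for the relation ``$\lambda_i/\lambda_j \in \U$'' consists of a set of blocks having the exact same eigenvalue. For each such group containing at least two blocks, I would pick two blocks $\Jord{d_1}\lambda$ and $\Jord{d_2}\lambda$; since $x$ is nonzero on the last coordinate of every block, the hypothesis $x_{1,d_1}\neq 0$ of Lemma~\ref{lem:reducdim} is satisfied and we obtain a reduced pair $(A',x')$ of dimension $s-1$ such that any minimal semilinear invariant $\I'$ for $(A',x')$ lifts to the minimal semilinear invariant
\[
\I = \{z\in\C^s : z'\in \I' \text{ and } x_{1,d_1}z_{2,d_2}=x_{2,d_2}z_{1,d_1}\}
\]
for $(A,x)$.

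The reduced pair $(A',x')$ is not, however, immediately in a form amenable to Theorem~\ref{thm:all_mod_1_not_equiv}: the block containing the perturbation $\tfrac{x_{2,d_2}}{x_{1,d_1}}E_{d_2-1,d_1}$ may no longer be in Jordan normal form, and $x'$ might have zeros in critical positions. I would therefore interleave each application of Lemma~\ref{lem:reducdim} with a change of basis (Lemma~\ref{lem:basis}) to put $A'$ back into Jordan normal form, together with the normalisation procedure of Lemma~\ref{lem:reduc} which handles the degenerate cases (vanishing last coordinates and so on) by producing either an explicit semilinear invariant or a strictly smaller Orbit instance. Each such step strictly decreases either the dimension or the number of blocks sharing a common eigenvalue, so after at most $s$ iterations I arrive at an instance whose eigenvalues are pairwise non-equivalent; Theorem~\ref{thm:all_mod_1_not_equiv} then provides the minimal invariant as a product of $\{0\}$'s and $\C$'s, from which I unwind the chain of transformations to recover the minimal invariant for the original $(A,x)$.

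The main obstacle is the bookkeeping required to verify that the reduction really does preserve minimality at each step and that the sequence of perturbations introduced by Lemma~\ref{lem:reducdim} does not break the hypotheses needed for the next application. In particular, one must check that after putting $A'$ back in Jordan normal form, the new ``last coordinate of each block'' condition still holds (or, if not, apply Lemma~\ref{lem:reduc} first), and that the linear constraint $x_{1,d_1}z_{2,d_2}=x_{2,d_2}z_{1,d_1}$ added at each step is preserved under the subsequent change of basis (it is, being a linear relation between two coordinates having identical Jordan dynamics). For the complexity claim, each reduction step involves a Jordan normal form computation on a matrix of dimension at most $s$, which is polynomial by~\cite{Cai00,CLZ00}; the final invariant is described by at most $s$ linear equalities (one per merge) combined with the coordinate-wise product from Theorem~\ref{thm:all_mod_1_not_equiv}, hence has polynomial size, and the whole construction runs in polynomial time when the dimension $d$ is fixed.
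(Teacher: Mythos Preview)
Your proposal is correct and follows essentially the same approach as the paper's proof: induct on the dimension, apply Lemma~\ref{lem:reducdim} whenever two blocks share the same eigenvalue, interleave with the normalisation of Lemma~\ref{lem:reduc} (which subsumes the change of basis via Lemma~\ref{lem:basis}), and invoke Theorem~\ref{thm:all_mod_1_not_equiv} in the base case where all eigenvalues are distinct (hence pairwise non-equivalent), giving $\C^s$ as the minimal invariant. Your added discussion of the bookkeeping and complexity is more detailed than the paper's own terse proof, but the strategy is the same.
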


\begin{proof}
If forall $i \neq j$, $\lambda_i \neq \lambda_j$, then they are pairwise non-equivalent, so by Theorem~\ref{thm:all_mod_1_not_equiv}, $\C^s$ is the minimal invariant for $A$ and $x$. Otherwise, we use Lemma~\ref{lem:reducdim} on blocks with equal eigenvalues to reduce the dimension, and Lemma~\ref{lem:reduc} to normalize the instance (ensuring that $A$ is in Jordan normal form and $x$ is nonzero on the last coordinate of each bloc), and easily conclude by induction.
\end{proof}

We may now finally extend to the general case.

\begin{theorem}
Let $A$ have only eigenvalues of modulus 1 which are not roots of unity, and $x \in \C^s$ with nonzero last coordinates. Then there is an explicit minimal semilinear invariant $\I$ for $A$ and $x$. In particular, there is a semilinear invariant (namely, $\I$) for $\ell=(A,x,y)$ if and only if $y \notin \I$, which may be decided algorithmically.
\end{theorem}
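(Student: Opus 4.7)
The plan is to reduce to Theorem~\ref{thm:equalornonequiv} by passing to a power $A^n$ of $A$ that renders all pairs of equivalent eigenvalues equal, and then lifting the resulting minimal invariant back to $A$ via a union of iterates. Specifically, let $n$ be the least common multiple of the (finitely many) orders of the roots of unity $\lambda_i / \lambda_j$ as $\lambda_i, \lambda_j$ range over equivalent eigenvalues of $A$. Then for any two eigenvalues $\lambda_i^n, \lambda_j^n$ of $A^n$ we have either $\lambda_i^n = \lambda_j^n$ or $\lambda_i^n / \lambda_j^n \notin \U$; each $\lambda_i^n$ is still of modulus $1$ and not a root of unity. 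Since every $\lambda_i$ is nonzero, each Jordan block $\Jord{d}{\lambda_i}$ raised to the $n$th power is conjugate to $\Jord{d}{\lambda_i^n}$ via an upper-triangular change of basis whose last diagonal entry is nonzero. Applying Lemma~\ref{lem:basis} with this change of basis produces an instance $(A', x')$ in Jordan normal form to which Theorem~\ref{thm:equalornonequiv} applies, with $x'$ still having nonzero last coordinate on each block.

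Let $\I_0 \subseteq \C^s$ denote the minimal semilinear invariant for $(A^n, x)$ obtained from Theorem~\ref{thm:equalornonequiv} (pulled back through the change of basis). I claim that the minimal semilinear invariant for $(A, x)$ is
\[
\I \;=\; \bigcup_{k=0}^{n-1} A^k \I_0.
\]
This is semilinear as a finite union of linear images of a semilinear set (note that $A$ is invertible since all its eigenvalues are nonzero, so each $A^k$ is an $\R$-linear bijection preserving semilinearity). Clearly $x \in \I_0 \subseteq \I$, and $A \I = \bigcup_{k=1}^{n-1} A^k \I_0 \cup A^n \I_0 \subseteq \I$ by using $A^n \I_0 \subseteq \I_0$, so $\I$ is an invariant for $(A, x)$.

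For minimality, let $\P$ be any semilinear invariant for $(A, x)$. Since $A$-invariance implies $A^n$-invariance, $\P$ is in particular a semilinear invariant for $(A^n, x)$, whence $\I_0 \subseteq \P$ by minimality of $\I_0$. Iterating $A\P \subseteq \P$ then gives $A^k \I_0 \subseteq A^k \P \subseteq \P$ for every $k$, and hence $\I \subseteq \P$. The final decidability statement follows immediately: there exists a semilinear invariant for $\ell = (A,x,y)$ if and only if $y \notin \I$, which is decidable since $\I$ is a semilinear set whose defining algebraic data is effectively computable from $A$, $x$, and $n$.

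The main obstacle is conceptual rather than computational: one must verify that passing to $A^n$ preserves the normalisation hypotheses (Jordan form with nonzero last coordinates), which hinges on the elementary fact that $\Jord{d}{\lambda}^n$ is conjugate to $\Jord{d}{\lambda^n}$ via an upper-triangular matrix whose last diagonal entry is nonzero, so that the conjugation does not kill the last coordinate of $x$ in any block. Once this normalisation is in place, the iterate union $\bigcup_{k=0}^{n-1} A^k \I_0$ bridges the gap between $A^n$-invariance (provided by Theorem~\ref{thm:equalornonequiv}) and $A$-invariance, and both invariance and minimality fall out from the routine manipulations above.
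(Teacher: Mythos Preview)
Your proof is correct and follows essentially the same route as the paper: choose $n$ so that equivalent eigenvalues of $A$ become equal in $A^n$, apply Theorem~\ref{thm:equalornonequiv} to obtain a minimal invariant $\I_0$ for $(A^n,x)$, and then set $\I=\bigcup_{k=0}^{n-1}A^k\I_0$, checking invariance and minimality exactly as you do. The paper is in fact terser than you are---it simply asserts ``let $N$ be such that $A^N$ is just like in the statement of Theorem~\ref{thm:equalornonequiv}'' without justifying the normalisation---so your added paragraph on why $\Jord{d}{\lambda}^n$ is conjugate to $\Jord{d}{\lambda^n}$ by an upper-triangular matrix with nonzero last diagonal entry (hence preserving the nonzero-last-coordinate condition) is a welcome clarification rather than a deviation.
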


\begin{proof}
Let $N \in \N$ be such that $A^N$ is just like in the statement of Theorem~\ref{thm:equalornonequiv}. Let $\I_0$ be the minimal semilinear invariant for $A^N$ and $x$. Let
\[
\I = \bigcup_{i=0}^{N-1} A^i \I_0.
\]
Clearly, $\I$ is semilinear, contains $x$, and invariant for $A$. Let $\P$ be a semilinear invariant for $A$ which contains $x$. Then it is also invariant for $A^N$, so $\I_0 \subseteq \P$. It follows that $A \I_0, A^2 \I_0, \dots \subseteq \P$, which concludes our proof. 
\end{proof}

\end{document}